\newtheorem{lemma}{Lemma}
\newtheorem{theorem}{Theorem}
\def\beann{\begin{eqnarray*}}
\def\eeann{\end{eqnarray*}}
\newcommand{\set}[1]{\left\{ #1 \right\}}
\newcommand{\sset}[2]{\left\{ #1 : #2 \right\}}
\newcommand{\opt}{opt}
\newenvironment{myalgorithm}[2][\columnwidth]
{ \begin{center}
   \begin{minipage}{#1}
     \begin{algorithm}[H]
     \caption{#2}
     \algsetup{linenosize=\small, linenodelimiter=.}
       \begin{algorithmic}[1]}
{      \end{algorithmic}
     \end{algorithm}
   \end{minipage}\vspace{1em}
 \end{center}}
\renewcommand{\vec}[1]{\boldsymbol{\mathbf{#1}}}
\begin{document}

\title{A Primal-Dual Approximation Algorithm for Min-Sum Single-Machine
Scheduling Problems\thanks{
A preliminary version of this article appeared in the Proceedings of APPROX-RANDOM 2011.
Research supported partially by NSF grants CCF-0832782, CCF-1017688, CCF-1526067, and CCF-
1522054; NSERC grant PGS-358528; FONDECYT grant No. 11140579, and Nucleo Milenio Informaci\'on y Coordinaci\'on en Redes ICM/FIC RC130003.}}

\author{Maurice Cheung\thanks{School of Operations Research \& Information
Engineering Cornell University, Ithaca NY 14853, USA.}
\and Juli\'{a}n Mestre\thanks{School of Information Technologies, The
University of Sydney, NSW, Australia.}
\and David B. Shmoys\footnotemark[2]
\and Jos\'{e} Verschae\thanks{Facultad de Matem\'aticas \& Escuela de Ingenier\'ia, Pontificia Universidad Cat\'olica de Chile, Santiago, Chile.}
}

\date{}

\maketitle

\begin{abstract}
	We consider the following single-machine scheduling problem, which is often
	denoted $1||\sum f_{j}$: we are given $n$ jobs to be scheduled on a single
	machine, where each job $j$ has an integral processing time $p_j$, and there
	is a nondecreasing, nonnegative cost function $f_j(C_{j})$ that specifies the
	cost of finishing $j$ at time $C_{j}$; the objective is to minimize
	$\sum_{j=1}^n f_j(C_j)$. Bansal \& Pruhs recently gave the first constant
	approximation algorithm with a performance guarantee of 16. We improve on this result by giving a primal-dual pseudo-polynomial-time algorithm based on the recently introduced knapsack-cover inequalities. The algorithm finds a
	schedule of cost at most four times the constructed dual solution. Although we show that this bound is tight for our algorithm, we leave open the question of whether the integrality gap of the LP is less than 4. Finally, we show how the technique can be adapted to yield, for any $\epsilon >0$, a $(4+\epsilon )$-approximation algorithm for this problem. 

\end{abstract}

\section{Introduction}

We consider the following general scheduling problem: we are
given a set $\mathcal{J}$ of $n$ jobs to schedule on a single machine,
where each job $j\in \mathcal{J}$ has a positive integral processing time $p_j$, and there is
a nonnegative integer-valued cost function $f_j(C_{j})$ that specifies the cost of finishing $j$ at time $C_{j}$.
The only restriction on the cost function $f_j(C_{j})$ is that it is a nondecreasing function of $C_{j}$; the objective is to minimize $\sum_{j\in\mathcal{J}} f_j(C_j)$. This problem is denoted as $1||\sum f_{j}$ in the
notation of scheduling problems formulated by Graham, Lawler, Lenstra, \& Rinnooy Kan \cite{GrahamLLR79}.

In a recent paper, Bansal \& Pruhs \cite{BansalP10} gave the first constant approximation algorithm for this problem;
more precisely, they presented a 16-approximation algorithm, that is, a polynomial-time algorithm
guaranteed to be within a factor of 16 of the optimum. We improve on this result: we give a primal-dual pseudo-polynomial-time algorithm that finds a solution directly to the scheduling problem of cost at most four times the optimal cost,
and then show how this can be extended to yield, for any $\epsilon >0$,  a
$(4+\epsilon )$-approximation algorithm for this problem. This problem is strongly $NP$-hard, simply by
considering the case of the weighted total tardiness, where $f_{j} (C_{j})= w_{j} \max_{j\in\mathcal{J}}\{0,C_{j}-d_{j}\}$
and $d_{j}$ is a specified due date of job $j$, $j\in\mathcal{J}$. However, no hardness results are known other than
this, and so it is still conceivable that there exists a polynomial approximation scheme for
this problem (though by the classic result of Garey \& Johnson \cite{GareyJ79a}, no fully polynomial approximation
scheme exists unless P=NP). No polynomial approximation scheme is known even for the special case of weighted total tardiness.

\paragraph{Our Techniques}
Our results are based on the linear programming relaxation of a time-indexed integer programming formulation
in which the 0-1 decision variables $x_{jt}$
indicate whether a given job $j\in\mathcal{J}$, completes at time $t\in \mathcal{T} = \{1,\ldots,T\}$, where $T=\sum_{j\in\mathcal{J}} p_{j}$;
note that since the cost functions are nondecreasing with time, we can assume, without loss of generality, that the machine is active only throughout the interval $[0,T]$, without any idle periods.
With these time-indexed variables, it is trivial to ensure that each job is
scheduled; the only difficulty is to ensure that the machine is not required to process more than one job at a time.
To do this, we observe that, for each time $t\in\mathcal{T}$, the jobs completing at time $t$ or later have total processing time at least $T-t+1$ (by the assumption that the processing times $p_j$ are positive integers); for conciseness,
we denote this demand $D(t) =T-t+1$.
This gives the following integer program:
\begin{align}
 \text{minimize}\ \  &\sum_{j\in \mathcal{J}} \sum_{t\in \mathcal{T}} f_j(t)x_{jt} \tag{IP}\label{IP}\\
\text{subject to}\ \  & \sum_{j\in \mathcal{J}} \sum_{s\in \mathcal{T}: s \ge t}  p_jx_{js} \ge D(t), & \text{for each}\  t \in \mathcal{T}; \label{eq:demand}\\
 & \sum_{t\in \mathcal{T}} x_{jt} = 1, & \text{for each}\ j\in \mathcal{J}; \label{eq:assign}\\
& x_{jt} \in \{0,1\}, & \text{for each}\ j\in \mathcal{J},\ t\in \mathcal{T}. \notag
\end{align}

We first argue that this is a valid formulation of the problem. Clearly, each feasible schedule corresponds to a feasible
solution to (IP) of equal objective function value. Conversely,
consider any feasible solution, and for each job $j\in \mathcal{J}$, assign it the due date $d_{j} =t$ corresponding to $x_{jt}=1$. If we schedule the jobs in Earliest Due Date (EDD) order,
then we claim that each job $j\in \mathcal{J}$, completes by its due date $d_{j}$. If we consider the constraint
(\ref{eq:demand}) in (IP)
corresponding to $t=d_{j}+1$, then since each job is assigned once, we know that
$\sum_{j\in \mathcal{J}} \sum_{t=1}^{d_{j}} p_{j} x_{jt} \leq d_{j};$
in words, the jobs with due date at most $d_{j}$ have total processing time at most $d_{j}$.
Since each job completes by its due date, and the cost functions $f_{j}(\cdot)$ are nondecreasing, we
have a schedule of cost no more than that of the original feasible solution to (IP).

The formulation (IP) has an unbounded integrality gap:
the ratio of the optimal value of (IP) to the optimal value of its linear programming relaxation can be
arbitrarily large. We strengthen this formulation by introducing a class of
valid inequalities called {\it knapsack-cover inequalities}. To understand the starting point for our work,
consider the special case of this scheduling problem in which all $n$ jobs have a common due date $D$,
and for each job $j\in \mathcal{J}$, the cost function is 0 if the job completes by time $D$, and is $w_{j}$, otherwise.
In this case, we select a set of jobs of total size at most $D$, so as to minimize the total weight of the complementary
set (of late jobs). This is equivalent to the minimum-cost (covering) knapsack problem, in which we wish to select a
subset of items of total size at least a given threshold, of minimum total cost. Carr, Fleischer, Leung, and Phillips
\cite{CarrFLP00} introduced knapsack-cover inequalities for this problem (as a variant of flow-cover inequalities introduced by Padberg, Van Roy, and Wolsey \cite{PadbergVW85}) and gave an LP-rounding 2-approximation algorithm based on this formulation.
Additionally, they showed that the LP relaxation with knapsack-cover inequalities has an integrality gap of at least $2-\frac{2}{n}$. 

The idea behind the knapsack-cover inequalities is quite simple. Fix a subset of jobs $A \subseteq \mathcal{J}$
that contribute towards satisfying the demand $D(t)$ for time $t$ or later;
then there is a {\it residual demand} from the remaining jobs of $D(t,A):= \max \{D(t) - \sum_{j \in A} p_j, 0\}$.
Thus, each job $j\in \mathcal{J}$ can make an effective contribution to this residual demand of
$p_j(t,A):= \min \{p_j, D(t,A) \}$; that is, given the inclusion of the set $A$, the effective contribution of job $j$ towards
satisfying the residual demand can be at most the residual demand itself. Thus, we have the constraint:
$$\sum_{j \notin A} \sum_{s = t}^T  p_j(t,A)x_{js} \ge D(t,A) \mbox{ for
each  }t \in \mathcal{T}, \mbox{  and each }A\subseteq \mathcal{J}.$$
The dual LP is quite natural: there are dual variables $y(t,A)$, and a constraint that
indicates, for each job $j$ and each time $s\in \mathcal{T}$, that $f_{j} (s)$ is at least a weighted sum of
$y(t,A)$ values, and the objective is to maximize  $\sum_{t,A} D(t,A)y(t,A)$.

Our primal-dual algorithm has two phases: a growing phase and a pruning phase. Throughout the algorithm, we maintain a set of jobs $A_{t}$ for each time $t \in \mathcal{T}$. In each iteration of the growing phase, we choose one dual variable to increase, corresponding to the demand $D(t,A_{t})$ that is largest, and increase that dual variable as much as possible. This causes a dual constraint corresponding to some job $j$ to become tight for some time $t'$, and so we set $x_{jt'} =1$ and add $j$ to each set $A_{s}$ with $s \leq t'$. Note that this may result in jobs being assigned to complete at multiple times $t$; then in the pruning phase we do a ``reverse delete'' that both ensures that each job is uniquely assigned, and also that the solution is minimal, in the sense that each job passes the test that if it were deleted, then some demand constraint (\ref{eq:demand}) in (IP) would be violated. This will be crucial to show that the algorithm is a 4-approximation algorithm. Furthermore, we show that our analysis is tight by giving an instance for which the algorithm constructs primal and dual solutions whose objective values differ by a factor 4. It will be straightforward to show that the algorithm runs in time polynomial in $n$ and $T$, which is a pseudo-polynomial bound.

To convert this algorithm into a polynomial-time algorithm, we adopt an interval-indexed formulation, where we bound the change of cost of any job to be within a factor of $(1+\epsilon)$ within any interval. This is sufficient to ensure a (weakly) polynomial number of intervals, while degrading the performance guarantee by a factor of $(1+\epsilon)$, and this yields the desired result.

It is well known that primal-dual algorithms have an equivalent local-ratio counterpart~\cite{Bar-YehudaR05}. For completeness, we also give the local-ratio version of our algorithm and its analysis. One advantage of the local ratio approach is that it naturally suggests a simple generalization of the algorithm to the case where jobs have release dates yielding a $4\kappa$-approximation algorithm, where $\kappa$ is the number of distinct release dates.

\paragraph{Previous Results}

The scheduling problem $1||\sum f_j$ is closely related to the \emph{unsplittable flow problem} (UFP) on a path. An instance of this problem consists of a path $P$, a demand $d_e$ for each edge $e$, and a set of tasks. Each task $j$ is determined by a cost $c_j$, a subpath $P_j$ of $P$, and a covering capacity $p_j$. The objective is to find a subset $T$ of the tasks that has minimum cost and covers the demand of each edge $e$, i.e., $\sum_{j\in T: e\in P_j} p_j \ge d_e$. The relation of this problem to $1||\sum f_j$ is twofold. On the one hand UFP on a path can be seen as a special case of $1||\sum f_{j}$~\cite{BansalV2013}. On the other hand, Bansal \& Pruhs~\cite{BansalP10} show that any instance of $1||\sum f_{j}$ can be reduced to an instances of UFP on a path while increasing the optimal cost by a factor of 4. Bar-Noy et al.\@~\cite{Bar-NoyBFNS01} study UFP on a path and give a 4-approximation algorithm based on a local ratio technique. In turn, this yields a 16-approximation with the techniques of Bansal \& Pruhs~\cite{BansalP10}. Very recently, and subsequent to the dissemination of earlier versions of our work, H\"ohn et al.\@~\cite{Hohn14} further exploited this connection. They give a quasi-PTAS for UFP on a path, which they use to construct a quasipolynomial $(e+\epsilon)$-approximation for $1||\sum f_{j}$ by extending the ideas of Bansal \& Pruhs~\cite{BansalP10}.

The local ratio algorithm by  Bar-Noy et al.\@~\cite{Bar-NoyBFNS01}, when interpreted as a primal-dual algorithm~\cite{Bar-YehudaR05}, uses an LP relaxation that includes knapsack-cover inequalities. Thus, the $4$-approximation algorithm of this paper can be considered a generalization of the algorithm by Bar-Noy et al.\@~\cite{Bar-NoyBFNS01}. The primal-dual technique was independently considered by Carnes and Shmoys~\cite{CarnesS07} for the minimum knapsack-cover problem. Knapsack-cover inequalities have subsequently been used to derive approximation algorithms in a variety of other settings, including the work of Bansal \& Pruhs \cite{BansalP10}
for $1|ptmn, r_{j}|\sum f_{j}$, Bansal, Buchbinder, \&
Naor~\cite{BansalBN07,BansalBN08},
Gupta, Krishnaswamy, Kumar, \& Segev~\cite{GuptaKKS09},
Bansal, Gupta, \& Krishnaswamy~\cite{BansalGK10}, and Pritchard~\cite{Pritchard09}.

An interesting special case of $1||\sum f_j$ considers objective functions of the form $f_j = w_j f$ for some given non-decreasing function $f$ and job-dependent weights $w_j>0$. It can be easily shown that this problem is equivalent to minimize $\sum w_j C_j$ on a machine that changes its speed over time. For this setting, Epstein et\,al.~\cite{EpsteinLMMMSS12} derive a $4$-approximation algorithm that yields a sequence independent of the speed of the machine (or independent of $f$, respectively). This bound is best possible for an unknown speed function. If randomization is allowed they improve the algorithm to an $e$-approximation. Moreover, Megow and Verschae~\cite{Megow13} give a PTAS for the full information setting, which is best possible since even this special case is strongly NP-hard~\cite{Hohn12}. 

A natural extension of $1||\sum f_j$ considers scheduling on a varying speed machine to minimize $\sum f_j(C_j)$, yielding a seemingly more general problem. However, this problem can be modeled~\cite{Hohn12,Megow13,EpsteinLMMMSS12} as an instance of $1||\sum f_j$ by considering cost functions $\tilde{f}_j = f_j \circ g$ for a well chosen function $g$ that depends on the speed function of the machine.

\paragraph{Organization of the paper} Section~\ref{sec:pseudopoly} contains our main results, including the pseudopolynomial $4-$approx{\-}imation algorithm and the proof that its analysis is tight. Section~\ref{sec:poly} shows the techniques to turn this algorithm to a polynomial $(4+\epsilon)$-approximation. The local ratio interpretation is given in Section~\ref{sec:local-ratio}, and the case with release dates is analyzed in Section~\ref{sec:releasedates}.

\section{ A pseudo-polynomial algorithm for $1||\sum f_j$ }
\label{sec:pseudopoly}

We give a primal-dual algorithm that runs in pseudo-polynomial time that has a performance guarantee of 4. The algorithm is based on the following LP relaxation:

\begin{align}
\text{min}\ \  &\sum_{j \in \mathcal{J}} \sum_{t \in \mathcal{T}} f_j(t)x_{jt} \tag{P}\label{P}\\
\text{s.t.}\ \  & \sum_{j \notin A} \sum_{s \in \mathcal{T} : s \geq t}  p_j(t,A)x_{js} \ge D(t,A), & &\text{for each}\  t \in \mathcal{T},\  A\subseteq \mathcal{J};\label{eq:kc}\\
& x_{jt} \ge 0, &  & \text{for each}\ j\in \mathcal{J},\ t\in \mathcal{T}. \notag
\end{align}
Notice that the assignment constraints  (\ref{eq:assign}) are not included in (P).  In fact, the following lemma
shows that they are redundant, given the knapsack-cover inequalities. This leaves a much more tractable
formulation on which to base the design of our primal-dual algorithm. 

\begin{lemma} Let $x$ be a feasible solution to the linear programming relaxation (P).
Then there is a feasible solution $\bar x$ of no greater cost that also satisfies the assignment constraints
(\ref{eq:assign}).
\label{lem:no-assign}
\end{lemma}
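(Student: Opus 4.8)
The plan is to start from an arbitrary feasible solution $x$ to (P) and show how to modify it, without increasing cost, so that the equality constraints $\sum_{t} x_{jt} = 1$ hold for every job. First I would observe that, by taking $A = \mathcal{J}\setminus\{j\}$ in the knapsack-cover inequality (\ref{eq:kc}) for a suitable time $t$, we can extract a lower bound on the total mass $\sum_{s} x_{js}$ assigned to job $j$. Concretely, consider the smallest time $t$ for which the demand $D(t)$ exceeds $\sum_{k\neq j} p_k$, i.e. the time at which job $j$ becomes indispensable to covering $D(t)$; then $D(t, \mathcal{J}\setminus\{j\}) > 0$ and $p_j(t,\mathcal{J}\setminus\{j\})$ equals that residual demand, so inequality (\ref{eq:kc}) forces $\sum_{s\geq t} x_{js} \geq 1$, hence $\sum_{t\in\mathcal{T}} x_{jt}\geq 1$. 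Thus every job already has total mass at least one; the only issue is that some jobs may have mass strictly greater than one.

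Next I would reduce the excess mass. Since the cost functions $f_j$ are nondecreasing in $t$, the cheapest unit of mass for job $j$ sits at the smallest completion times. So I would define $\bar x_{j\cdot}$ by keeping, for each job $j$, exactly one unit of its mass concentrated at the earliest times used by $x_{j\cdot}$ — formally, scan $t = 1,2,\ldots,T$ and retain $x_{jt}$ (or a fraction of it) until the retained total first reaches $1$, discarding everything at later times. This operation only removes mass at times that are no earlier than the mass that is kept, so $\sum_t f_j(t)\bar x_{jt} \leq \sum_t f_j(t) x_{jt}$, giving cost no greater than that of $x$; and by construction $\sum_t \bar x_{jt} = 1$ for every $j$, so (\ref{eq:assign}) holds.

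The one thing that needs care — and I expect this to be the main obstacle — is verifying that $\bar x$ is still feasible for (P), i.e. that deleting the excess tail mass does not violate any knapsack-cover inequality (\ref{eq:kc}). The key point is a monotonicity/truncation argument: in constraint (\ref{eq:kc}) for a given $t$ and $A$, each term $p_j(t,A) x_{js}$ is capped by the residual demand $D(t,A)$, and the jobs not in $A$ collectively need only supply $D(t,A)$ units of effective coverage. Since $\bar x$ retains, for each job, a full unit of mass placed at the \emph{earliest} available times, the effective contribution $\sum_{s\geq t} p_j(t,A)\bar x_{js}$ is at least $\min\{p_j(t,A), \; p_j(t,A)\cdot(\text{mass of }\bar x_j \text{ at times} \geq t)\}$; and because the total mass kept is exactly one and is front-loaded, one shows that if $x$ placed enough mass at times $\geq t$ to help cover $D(t,A)$, then so does $\bar x$ — intuitively, moving mass to earlier times can only hurt the "$\geq t$" sum, but the truncation never moves mass earlier, it only deletes mass, and the deleted mass was precisely the part in excess of a single unit, which (after capping by $p_j(t,A)\leq D(t,A)$) was not needed. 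I would make this precise by arguing term-by-term that the capped contribution of each job under $\bar x$ is at least its capped contribution under $x$ restricted to a single unit of mass, and that a single unit already suffices because a genuine $0/1$ schedule assigning each job once is feasible for (\ref{eq:kc}) (exactly the validity argument given for (IP) in the text). Assembling these three steps — mass at least one, front-load and truncate to mass exactly one, feasibility preserved — completes the proof.
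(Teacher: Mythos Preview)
Your first step is fine: taking $A=\mathcal{J}\setminus\{j\}$ (in fact simply at $t=1$, where $D(1,\mathcal{J}\setminus\{j\})=p_j$) forces $\sum_s x_{js}\ge 1$, exactly as in the paper.

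The truncation step, however, goes in the wrong direction and this is a genuine gap. The constraints \eqref{eq:kc} sum only over times $s\ge t$, so deleting mass at \emph{later} times can destroy feasibility. A concrete counterexample: three jobs with $p_1=p_2=p_3=1$, $T=3$, and $x_{1,1}=x_{1,3}=x_{2,2}=x_{3,1}=1$ (all other entries~$0$). One checks this satisfies every inequality \eqref{eq:kc}. Your ``keep the earliest unit'' rule sets $\bar x_{1,1}=1$, $\bar x_{1,3}=0$, leaving no mass at time $3$; the constraint for $t=3$, $A=\emptyset$ (which requires $\sum_j\sum_{s\ge 3}p_j\bar x_{js}\ge 1$) is then violated. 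Your sketched feasibility argument (``the deleted excess was not needed'') does not survive this example.

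The paper does the opposite: it keeps the \emph{latest} unit of mass for each overweight job, deleting the earlier excess. For any constraint at time $t$ that is potentially affected, job $j$ still contributes a full unit at times $\ge t$, and one compares against the constraint for $A\cup\{j\}$ (using $p_j(t,A)+D(t,A\cup\{j\})\ge D(t,A)$) to recover feasibility. The cost bound is then immediate from $\bar x\le x$ componentwise and $f_j\ge 0$; your more elaborate ``cheapest mass is earliest'' reasoning is unnecessary and is precisely what led you to truncate on the wrong side.
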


\begin{proof}
 First, by considering the constraint (\ref{eq:kc}) with the set $A= \mathcal{J}-\{k\}$ and $t=1$, it is easy to show that for any feasible solution $x$ of (P), we must have
$\sum_{s \in \mathcal{T}} x_{ks} \geq 1$ for each job $k$.

We next show that each job is assigned at most once. We may assume without loss of generality
that $x$ is a feasible solution for (P) in which $\sum_{j \in \mathcal{J}} \sum_{s \in \mathcal{T}} x_{js}$ is minimum.
Suppose, for a contradiction, that $\sum_{s\in \mathcal{T}} x_{js} > 1$ for some job $j$, and let $t$ be the
largest time index where the partial sum  $\sum_{s \in \mathcal{T}: s \geq t} x_{js} \ge 1$.
Consider the truncated solution $\bar x$ where
\[\bar x_{ks} = \left\{
\begin{array}{ll}
0, & \mbox{if } k =j \mbox{ and } s < t \\
1- \sum_{s=t+1}^{T} x_{js}, &  \mbox{if } k=j \mbox{ and } s=t \\
x_{ks}, & \mbox{ otherwise}
\end{array}
\right. \]
Let us check that the modified solution $\bar x$ is feasible for (P). Fix $s\in \mathcal{T}$ and $A\subseteq \mathcal{J}$. If $s>t$ or $A \ni j$, then clearly $\bar x$ satisfies the corresponding inequality \eqref{eq:kc} for $s,A$ since $x$ does. Consider $s\le t$ and $A\not \ni j$, so that $\sum_{r \in \mathcal{T} : r \geq s} \bar{x}_{j,r}=1$ and $p_k(s,A)=p_k(s,A\setminus\{j\})$ for each $k\in \mathcal{J}$. Then,
\begin{align*}
\sum_{k \notin A} \sum_{r \in \mathcal{T} : r \geq s}  p_k(s,A)\bar{x}_{kr} & = p_j(s,A\setminus\{j\})\hspace{-.3cm}\sum_{r \in \mathcal{T} : r \geq s}\bar{x}_{k,j}+ \sum_{k \notin A\setminus\{j\}} \sum_{r \in \mathcal{T} : r \geq s}  p_k(s,A\setminus\{j\})\bar{x}_{kr} \\
& \ge p_j(s,A\setminus\{j\})+  D(s,A\setminus\{j\}) \ge D(s,A),
\end{align*}
where the first inequality follows since $x$ is feasible for (P). Thus $\bar{x}$ satisfies \eqref{eq:kc}. This gives the desired contradiction because $\sum_{j \in \mathcal{J}} \sum_{s \in \mathcal{T}} \bar x_{js} < \sum_{j \in \mathcal{J}} \sum_{s\in \mathcal{T}} x_{js}$. Finally, since $\bar x \le x$ component-wise and the objective $f_j(t)$ is nonnegative, it follows that $\bar x$ is a solution of no greater cost than $x$.
\end{proof}

Taking the dual of (P)  gives:
\begin{align}
\text{max}\ & \sum_{t \in \mathcal{T}} \sum_{A\subseteq \mathcal{J}} D(t,A)y(t,A) \tag{D} \label{D} \\
\text{s.t.}\  & \sum_{t \in \mathcal{T} : t \leq s} \sum_{A: j \notin A}  p_j(t,A)y(t,A) \le f_j(s); & &
\text{for each}\  j \in \mathcal{J},\ s \in \mathcal{T}; \label{eq:dc}\\
&y(t,A) \ge 0 & & \text{for each}\  t \in \mathcal{T},\  A\subseteq \mathcal{J}.  \notag
  \end{align}
\noindent
We now give the primal-dual algorithm for the scheduling problem $1||\sum f_j$.
The algorithm consists of two phases: a growing phase and a pruning phase.

The growing phase constructs a feasible solution $x$ to (P) over a series of iterations.  For each $t \in \mathcal{T}$, we let $A_t$ denote the set of jobs that are set to finish at time $t$ or later by the algorithm, and thus contribute towards satisfying the demand $D(t)$. In each iteration, we set a variable $x_{jt}$ to 1 and add $j$ to $A_s$ for all $s \le t$.  We continue until all demands $D(t)$ are satisfied.  Specifically, in the $k^{th}$ iteration, the algorithm select
$t^k:= {\text{argmax}}_t D(t,A_t)$, which is the time index that has the largest residual demand with respect to the current partial solution.  If there are ties, we choose the \emph{largest} such time index to be $t^k$ (this is not essential to the correctness of the algorithm -- only for consistency and efficiency).
If $D(t^k, A_{t^k})=0 $, then we must have
$\sum_{j \in A_t} p_j \ge D(t)$ for each $t \in \mathcal{T}$; all demands have been satisfied and the growing phase terminates. Otherwise, we increase the dual variable $y(t^k,A_{t^k})$ until some dual constraint \eqref{eq:dc} with right-hand side
$f_j(t)$ becomes tight.   We set $x_{jt}=1$ and add $j$ to $A_{s}$ for \emph{all} $s \le t$ (if $j$ is not yet in $A_s$).  If multiple constraints become tight at the same time, we pick the one with the \emph{largest} time index (and if there are still ties, just pick one of these jobs arbitrarily).  However, at the end of the growing phase, we might have jobs with multiple variables set to 1, thus we proceed to the pruning phase.

The pruning phase is a ``reverse delete'' procedure that checks each variable $x_{jt}$ that is set to 1, in decreasing
order of the iteration $k$ in which that variable was set in the growing phase. We attempt to set $x_{jt}$ back to 0  and
correspondingly delete jobs from $A_t$, provided this does not violate the feasibility of the solution. Specifically, for each variable $x_{jt}=1$, if $j$ is also in $A_{t+1}$ then we set $x_{jt} =0$. It is safe to do so, since in this case, there must exist $t' >t$ where $x_{jt'} =1$, and as we argued in Lemma \ref{lem:no-assign}, it is redundant to have $x_{jt}$ also set to 1. Otherwise, if $j \notin A_{t+1}$, we check if $\sum_{j' \in A_{s} \setminus \{j\}} p_{j'} \ge D(s)$ for each time index $s$ where $j$ has been added to $A_s$ in the same iteration of the growing phase. In other words, we check the inequality for each $s\in\{s_0,\ldots,t\}$, where $s_0<t$ is the largest time index with $x_{js_0}=1$ (and $s_0=0$ if there is no such value). If all the inequalities are fulfilled, then $j$ is not needed to satisfy the demand at time $s$.  Hence, we remove $j$ from all such $A_s$ and set $x_{jt} =0$.  We will show that at the end of the pruning phase, each job $j$ has exactly one $x_{jt}$ set to 1. Hence, we set this time $t$ as the \emph{due date} of job $j$.

Finally, the algorithm outputs a schedule by sequencing the jobs in Earliest Due Date (EDD) order. We give pseudo-code for this in the figure Algorithm 1.

\begin{figure}[t]
\begin{myalgorithm}[12cm]{\sc{primal-dual}$(f,p)$}
\STATE \COMMENT{Initialization}
\STATE $x,y,k \leftarrow 0$
\STATE $A_{t} = \emptyset , $ for all $t\in \mathcal{T}$
\STATE $t^0:= {\text{argmax}}_t D(t,A_t)$ 
\STATE \COMMENT{Growing phase}
\WHILE{$D(t^k,A_{t^k}) >0$}
\STATE \label{st:dualIncrease} Increase $y(t^k,A_{t^k})$ until a dual constraint \eqref{eq:dc} with right hand side $f_j(t)$ becomes tight \COMMENT{break ties by choosing the largest $t$}
\STATE $x_{jt} \leftarrow 1$
\STATE $A_s \leftarrow A_s \cup \{j\}$ for each $s \le t$
\STATE $ k \leftarrow k+1$
\STATE $t^k:= {\text{argmax}}_t D(t,A_t)$ \COMMENT{break ties by choosing the largest $t$}
\ENDWHILE
\STATE \COMMENT{Pruning phase}
\STATE Consider $\{ (j,t): x_{jt}=1 \}$ in reverse order in which they are set to 1
\IF{$ j \in A_{t+1}$}
\STATE $x_{jt} \leftarrow 0$
\ELSIF {$\sum_{j' \in A_{s} \setminus \{j\}} p_{j'} \ge D(s)$ for all $s \le t$ where $j$ is added to $A_s$ in the same iteration of growing phase}
\STATE $x_{j,t} \leftarrow 0$
\STATE $A_s \leftarrow A_s \setminus \{j\}$\ for all such $s$
\ENDIF

\STATE \COMMENT{Output schedule}
\FOR{$j \leftarrow 1, \ldots, n$}
\STATE Set due date $d_{j}$ of job $j$ to time $t$ if $x_{jt}= 1$
\ENDFOR
\STATE Schedule jobs using EDD rule
\end{myalgorithm}
\end{figure}

\subsection{Analysis}

Throughout the algorithm's execution, we maintain both a solution $x$ along with the sets $A_{t}$, for each $t \in \mathcal{T}$. An
easy inductive argument shows that the following invariant is maintained.

\begin{lemma} 
	\label{lem:unique_duedate}
Throughout the algorithm, $j \in A_s$ if and only if there exists $t \ge s$ such that $x_{jt}=1$.
\end{lemma}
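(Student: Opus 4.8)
The plan is to prove Lemma~\ref{lem:unique_duedate} by induction on the iterations of the algorithm, tracking how the variables $x_{jt}$ and the sets $A_s$ are modified in both phases. The invariant to maintain is: at every point in the execution, $j \in A_s$ if and only if there is some $t \ge s$ with $x_{jt} = 1$. Initially $x \equiv 0$ and all $A_s = \emptyset$, so the equivalence holds vacuously (both sides false for every $j$ and $s$).

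For the inductive step in the \emph{growing phase}, suppose the invariant holds at the start of an iteration in which the constraint for job $j$ at time $t$ becomes tight; the algorithm then sets $x_{jt} \leftarrow 1$ and adds $j$ to $A_s$ for every $s \le t$. I would argue that after this update the equivalence still holds for the pair $(j', s)$ for all $j', s$: for $j' \ne j$ nothing changed, so the claim follows from the induction hypothesis; for $j' = j$, after the update $j \in A_s$ exactly when $s \le t$ (if $j$ was not already in $A_s$) together with the times $s$ where it was already in $A_s$ — and by the induction hypothesis those earlier memberships of $j$ in $A_s$ correspond to some $t' \ge s$ with $x_{jt'} = 1$, which is still $1$. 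In the other direction, having some $t'' \ge s$ with $x_{j t''} = 1$ now includes the new $t'' = t$, which forces $s \le t$ and hence $j$ was just added to $A_s$. So both directions are preserved. The key observation making this clean is that the growing phase only ever \emph{adds} to the $A_s$ sets and only ever sets variables \emph{to} $1$, and it does so in exactly the pattern the invariant describes (adding $j$ to $A_s$ for all $s \le t$ precisely when $x_{jt}$ is set).

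For the \emph{pruning phase}, the update is the reverse: when we reset some $x_{jt} \leftarrow 0$, we either (first branch) do so because $j \in A_{t+1}$, in which case we do \emph{not} remove $j$ from any $A_s$; or (second branch) we remove $j$ from exactly those $A_s$ with $s_0 < s \le t$, where $s_0$ is the largest time with $x_{j s_0} = 1$ still standing (or $0$). I would check the invariant is preserved in each branch. In the first branch, $j \in A_{t+1}$ means by the induction hypothesis there is $t' \ge t+1 > t$ with $x_{jt'} = 1$, and that variable is untouched (it was set later in the growing phase, so by the reverse-delete order it was already processed, but crucially if it had been deleted $j$ would not still be in $A_{t+1}$); hence for every $s$ with $j \in A_s$ there is still a witnessing $t' \ge s$, and conversely resetting $x_{jt}$ to $0$ does not destroy any needed witness since for $s \le t$ the membership $j \in A_s$ is still witnessed by this $t' > t \ge s$. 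In the second branch, after removing $j$ from $A_s$ for $s_0 < s \le t$ and setting $x_{jt} = 0$, the remaining memberships of $j$ are exactly the $A_s$ with $s \le s_0$, and these are witnessed by $x_{j s_0} = 1$; conversely the only variable of $j$ still possibly equal to $1$ at times $> s_0$ was $x_{jt}$, which we just zeroed, and by the choice of $s_0$ as the \emph{largest} surviving time with $x_{j s_0}=1$ there is no $x_{jt'}=1$ with $t' > t$ either, so no membership is left unwitnessed.

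The main obstacle I anticipate is bookkeeping the reverse-delete order carefully enough to justify that in the first pruning branch the later variable $x_{jt'} = 1$ witnessing $j \in A_{t+1}$ has not itself been set to $0$ earlier in the pruning pass — this needs the fact that variables are processed in decreasing order of the growing-phase iteration in which they were set, so any $x_{jt'}$ with $t' > t$ was set in an earlier iteration than $x_{jt}$... actually it was set \emph{later} in the growing phase (larger time index added later is not automatic, so one must instead appeal directly to the fact that $j \in A_{t+1}$ is being checked at the current moment, i.e.\ the set $A_{t+1}$ already reflects all prunings done so far, so the witness is whatever currently keeps $j$ in $A_{t+1}$). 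Handling this correctly — phrasing the argument in terms of the \emph{current} state of the $A_s$ sets rather than trying to name a specific witnessing variable — is the delicate point; once that is set up, the rest is a routine case check, and the whole lemma follows by induction over all updates in both phases.
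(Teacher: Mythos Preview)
Your proposal is correct and follows essentially the same approach as the paper: both argue by induction over the algorithm's updates, checking that each modification in the growing and pruning phases preserves the invariant. The one point the paper makes explicit that you leave implicit is a monotonicity property (successive $x_{jt}$'s for a fixed $j$ are set at increasing $t$ during the growing phase), which underlies your description of the second pruning branch; note also that in that branch the fact that no $x_{jt'}=1$ survives with $t'>t$ follows from $j\notin A_{t+1}$ together with the induction hypothesis, not from the definition of~$s_0$ as you wrote.
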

\begin{proof}
	This lemma is proved by considering each step of the algorithm. Clearly, it
	is true initially.

	In the growing phase of the algorithm, we add $j$ to $A_s$ if and only if we
	have set some $x_{jt}$ with $t \ge s$ to 1 in the same iteration; hence the
	result holds through the end of the growing phase. Moreover, there is the
	following monotonicity property:   Since $j$ is added to $A_s$ for all $s\le
	t$ when $x_{jt}$ is set to 1, if there is another $x_{jt'}$ set to 1 in a
	later iteration $k$, we must have $t' \ge t$. Otherwise, if $t^k\le t'< t$, when increasing $y(t^k,A_{t^k})$ in Step~\ref{st:dualIncrease} job $j$ would belong to $A_t\subseteq A_{t^k}$ and the dual constraint could never become tight. Hence, in the pruning phase, we
	consider the variables $x_{jt}$ for a particular job $j$ in decreasing order
	of $t$.

	Next we show that the result holds throughout the pruning phase.  One
	direction is easy, since as long as there is some $t\ge s$ with $x_{jt}$
	equals 1, $j$ would remain in $A_s$.  Next, we prove the converse by using
	backward induction on $s$; we show that if for all $t \ge s$, $x_{jt}=0$,
	then $j \notin A_s$. Since the result holds at the end of the growing phase,
	we only have to argue about the changes made in the pruning phase.  For the
	base case, if $x_{jT}$ is set to 0 during the pruning phase, by construction
	of the algorithm, we also remove $j$ from $A_T$;  hence the result holds. Now
	for the inductive case. In a particular iteration of the pruning phase,
	suppose $x_{jt'}$ is the only variable corresponding to job $j$ with time
	index $t'$ at least $s$ that is set to 1, but it is now being changed to 0.
	We need to show $j$ is removed from $A_s$. First notice by the monotonicity
	property above, $j$ must be added to $A_s$ in the same iteration as when
	$x_{jt'}$ is set to 1 in the growing phase.  By the assumption that $x_{jt'}$
	is the only variable with time index as least $s$ that is set to 1 at this
	point,  $j \notin A_{t'+1}$ by induction hypothesis.  Hence we are in the
	\emph{else-if} case in the pruning phase of the algorithm.  But by
	construction of the algorithm, we remove $j$ from all $A_t$ for all $t \le
	t'$ that are added in the same iteration of the growing phase, which include
	$s$.  Hence the inductive case holds, and the result follows.
\end{proof}

Note that this lemma also implies that the sets $A_t$ are nested; i.e., for any two time indices $s < t$, it follows that
$A_s \supseteq A_t$.
Using the above lemma, we will show that the algorithm produces a feasible solution to (P) and (D).

\begin{lemma}
	\label{lem:feasible}
	The algorithm produces a feasible solution $x$ to (P) that is integral and
	satisfies the assignment constraints (\ref{eq:assign}), as well as a feasible
	solution $y$ to (D).
\end{lemma}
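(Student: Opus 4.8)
The plan is to verify the four parts of the statement separately. Integrality of $x$ is immediate: the algorithm only ever sets a variable $x_{jt}$ to $0$ or to $1$. For everything else I would lean on Lemma~\ref{lem:unique_duedate} together with the monotonicity property proved inside it --- for a fixed job the variables $x_{j\cdot}$ are set to $1$ in increasing order of time index, so the pruning phase examines them in decreasing order of time index, and at every moment $A_s=\{\,j:\exists\, t\ge s,\ x_{jt}=1\,\}$.

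\emph{Feasibility of $y$ for \textup{(D)}.} The algorithm only ever increases dual variables, and in Step~\ref{st:dualIncrease} it increases $y(t^k,A_{t^k})$ by exactly the amount that first makes some dual inequality~\eqref{eq:dc} tight, so no dual inequality is ever violated. (This amount is well defined and the loop makes progress: since $D(t^k,A_{t^k})>0$ there is a job $j\notin A_{t^k}$ with $p_j(t^k,A_{t^k})>0$, and the inequality~\eqref{eq:dc} for $j$ at time $t^k$ has a positive coefficient on $y(t^k,A_{t^k})$.) The only thing left to check is that an inequality that has become tight is not pushed past its right-hand side later. When $x_{jt}$ is set to $1$, job $j$ enters every $A_s$ with $s\le t$; a subsequent increase of $y(t^{k'},A_{t^{k'}})$ contributes to $j$'s inequality at a time $s$ only if $t^{k'}\le s$ and $j\notin A_{t^{k'}}$, and by Lemma~\ref{lem:unique_duedate} the latter forces $t^{k'}>t$. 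Thus later increases touch $j$'s inequalities only at times $s>t$, so the one that became tight at time $t$ is frozen thereafter, and $y$ stays feasible for~(D).

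\emph{Feasibility of $x$ for \textup{(P)} and the assignment constraints.} The growing phase exits only when $\max_t D(t,A_t)=0$, i.e.\@ when property $(\star)$ holds, namely $\sum_{j\in A_s}p_j\ge D(s)$ for every $s\in\mathcal T$. I would first argue that the pruning phase preserves $(\star)$ (and the identity $A_s=\{j:\exists t\ge s,\ x_{jt}=1\}$): the \emph{if}-branch changes no $A_s$, while the \emph{else-if}-branch removes $j$ from a set $A_s$ only after verifying $\sum_{j'\in A_s\setminus\{j\}}p_{j'}\ge D(s)$ for exactly those indices, which are precisely the ones to which $j$ was added in the same growing-phase iteration as $x_{jt}$. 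Assuming $(\star)$, fix $t\in\mathcal T$ and $A\subseteq\mathcal J$; then $\sum_{j\in A_t\setminus A}p_j\ge\sum_{j\in A_t}p_j-\sum_{j\in A}p_j\ge D(t)-\sum_{j\in A}p_j$, hence $\sum_{j\in A_t\setminus A}p_j\ge D(t,A)$ since $D(t,A)=\max\{D(t)-\sum_{j\in A}p_j,0\}$ and $p_j\ge0$. This yields the key inequality $\sum_{j\in A_t\setminus A}\min\{p_j,D(t,A)\}\ge D(t,A)$ --- either one term already reaches $D(t,A)$, or every term equals $p_j$ and the sum is unchanged. Since each $j\in A_t$ has some $x_{jt'}=1$ with $t'\ge t$ by Lemma~\ref{lem:unique_duedate}, the left-hand side of~\eqref{eq:kc} for $(t,A)$ is at least $\sum_{j\in A_t\setminus A}p_j(t,A)\ge D(t,A)$; so $x$ is feasible for~(P). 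Taking $A=\mathcal J\setminus\{k\}$ and $t=1$ (as in the opening of the proof of Lemma~\ref{lem:no-assign}) then gives $\sum_t x_{kt}\ge1$ for every job $k$. Finally, no job is assigned twice: if $x_{jt_1}=x_{jt_2}=1$ with $t_1<t_2$ both survived, then $x_{jt_1}$ is processed after $x_{jt_2}$, at which moment $x_{jt_2}=1$ still, so $j\in A_{t_1+1}$ and the \emph{if}-branch would have reset $x_{jt_1}$ to $0$ --- a contradiction. With integrality this gives $\sum_t x_{jt}=1$ for every job, i.e.\@~\eqref{eq:assign}.

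I expect the pruning-phase bookkeeping to be the main obstacle: keeping straight which sets $A_s$ contain a given job at the instant its variable is examined, so that both $(\star)$ and the identity $A_s=\{j:\exists t\ge s,\ x_{jt}=1\}$ genuinely survive each step. As a useful sanity check, the \emph{else-if}-branch can never strip a job of its last surviving variable: that would require $s_0=0$, so the checked range is $\{1,\dots,t\}\ni1$, and there $\sum_{j'\in A_1\setminus\{j\}}p_{j'}\le T-p_j<T=D(1)$, so the check at $s=1$ fails and the branch is not taken. Everything else follows routinely from Lemma~\ref{lem:unique_duedate} and its monotonicity corollary.
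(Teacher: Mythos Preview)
Your proof is correct and follows essentially the same approach as the paper: both rely on Lemma~\ref{lem:unique_duedate}, the invariant $(\star)$ that $\sum_{j\in A_s}p_j\ge D(s)$ survives the pruning phase, and the observation that the \emph{if}-branch together with monotonicity forces uniqueness of the surviving $x_{jt}$. The only noteworthy difference is that you verify the knapsack-cover inequalities~\eqref{eq:kc} directly from $(\star)$, whereas the paper first argues feasibility for (IP) and then appeals to the fact that~\eqref{eq:kc} are valid inequalities for (IP); your route is a shade more self-contained, and your treatment of dual feasibility is more explicit than the paper's one-line ``straightforward.''
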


\begin{proof}
	First note that, by construction, the solution $x$ is integral. The algorithm
	starts with the all-zero solution to both (P) and (D), which is feasible for
	(D) but infeasible for (P). Showing that dual feasibility is maintained
	throughout the algorithm is straightforward. Next we show that at
	termination, the algorithm obtains a feasible solution for (P).

	At the end of the growing phase, all residual demands $D(t,A_t)$ are zero,
	and hence, $\sum_{j \in A_t} p_j \ge D(t)$ for each $t \in \mathcal{T}$.  By
	construction of the pruning phase, the same still holds when the algorithm
	terminates.

	Next, we argue that for each job $j$ there is exactly one $t$ with $x_{jt}
	=1$ when the algorithm terminates.  Notice that $D(1)$ (the demand at time 1)
	is $ T$, which is also the sum of processing time of all jobs; hence $A_1$
	must include every job to satisfy $D(1)$. By Lemma~\ref{lem:unique_duedate}, this implies that each job
	has at least some time $t$ for which $x_{jt} =1$ when the growing phase
	terminates. On the other hand, from the pruning step (in particular, the
	first \emph{if} statement in the pseudocode), each job $j$ has $x_{jt}$ set
	to 1 for at most one time $t$.  However, since no job can be deleted from
	$A_1$, by Lemma~\ref{lem:unique_duedate}, we see that, for each job $j$, there is still at least one
	$x_{jt}$ set to 1 at the end of the pruning phase.  Combining the two, we see
	that each job $j$ has one value $t$ for which $x_{jt}=1$.

	By invoking Lemma~\ref{lem:unique_duedate} for the final solution $x$, we have that $\sum_{s = t}^T
	\sum_{j \in \mathcal{J}}  p_jx_{js} \ge D(t)$. Furthermore, $x$ also satisfies the
	constraint $\sum_{t \in \mathcal{T}} x_{jt} = 1$, as argued above.  Hence, $x$ is
	feasible for (IP), which implies the feasibility for (P).
\end{proof}

Since all cost functions $f_j$ are nondecreasing, it is easy to show that
given a feasible integral solution $x$ to (P) that satisfies the assignment
constraints (\ref{eq:assign}), the following schedule costs no more than the
objective value for $x$: set the due date $d_j =t$ for job $j$, where $t$ is
the unique time such that $x_{jt} =1$, and sequence in EDD order.

\begin{lemma}
	\label{eq:edd}
  Given a feasible integral solution to (P) that satisfies the assignment
  constraint (\ref{eq:assign}), the EDD schedule is a feasible schedule with
  cost no more than the value of the given primal solution.
\end{lemma}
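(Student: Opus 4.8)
The plan is to reduce this to the statement, already established earlier in the excerpt, that (IP) is a valid formulation: I will show that in the EDD schedule every job $j$ completes no later than its due date $d_j$ (the unique time with $x_{jd_j}=1$). Given that, monotonicity of $f_j$ yields $f_j(C_j)\le f_j(d_j)$, and summing over $j$ gives $\sum_j f_j(C_j)\le\sum_j f_j(d_j)=\sum_{j\in\mathcal J}\sum_{t\in\mathcal T} f_j(t)x_{jt}$, which is the value of the primal solution; feasibility of the schedule itself is immediate, since it is just a permutation of the jobs processed back-to-back (no idle time is needed because the $f_j$ are nondecreasing).

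First I would pass from the knapsack-cover inequality \eqref{eq:kc} back to the plain demand inequality \eqref{eq:demand}. Specializing \eqref{eq:kc} to $A=\emptyset$ and using $p_{j'}(t,\emptyset)=\min\{p_{j'},D(t)\}\le p_{j'}$, one gets $\sum_{j'\in\mathcal J}\sum_{s\ge t}p_{j'}x_{j's}\ge\sum_{j'\in\mathcal J}\sum_{s\ge t}p_{j'}(t,\emptyset)x_{j's}\ge D(t)$ for every $t\in\mathcal T$; so the given integral solution, being feasible for (P), also satisfies \eqref{eq:demand}.

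Next, fix a job $j$. If $d_j=T$ then $C_j\le T=d_j$ trivially, since the processing times of all jobs sum to $T$. Otherwise I would invoke \eqref{eq:demand} at time $t=d_j+1\in\mathcal T$: since $D(d_j+1)=T-d_j$ and, by the assignment constraint \eqref{eq:assign}, $\sum_{j'\in\mathcal J}\sum_{s\ge d_j+1}p_{j'}x_{j's}=\sum_{j':d_{j'}\ge d_j+1}p_{j'}$, we obtain $\sum_{j':d_{j'}>d_j}p_{j'}\ge T-d_j$, equivalently $\sum_{j':d_{j'}\le d_j}p_{j'}\le d_j$. In the EDD order, all jobs scheduled before $j$, together with $j$ itself, have due date at most $d_j$, so $C_j\le\sum_{j':d_{j'}\le d_j}p_{j'}\le d_j$, as required.

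I do not expect a genuine obstacle here; this is essentially the earlier validity argument for (IP), transported to (P) via the $A=\emptyset$ specialization. The only points needing a little care are the boundary case $d_j=T$ (handled separately above), the tie-breaking in EDD among jobs of equal due date (harmless, since any ordering of them still leaves $C_j$ bounded by $\sum_{j':d_{j'}\le d_j}p_{j'}$), and the remark that the schedule may be taken with no idle time.
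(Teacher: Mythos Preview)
Your proposal is correct and follows essentially the same route as the paper: specialize the knapsack-cover inequality \eqref{eq:kc} to $A=\emptyset$ to recover the plain demand constraint \eqref{eq:demand}, and then rerun the EDD validity argument from the introduction. The paper phrases the key step as the uniform bound $\sum_{j'}\sum_{s\le t-1}p_{j'}x_{j's}\le t-1$ for all $t$, whereas you instantiate it at $t=d_j+1$ for each fixed $j$; you are also a bit more explicit about the inequality $p_{j'}(t,\emptyset)\le p_{j'}$ and about tie-breaking, but these are presentational differences only.
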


\begin{proof}
	Since each job $j \in \mathcal{J}$ has exactly one $x_{jt}$ set to 1, it follows
	that $\sum_{j \in \mathcal{J}} \sum_{s \in \mathcal{T}}  p_j x_{js} = T$.  Now, taking $A =
	\emptyset $ from constraints \eqref{eq:kc}, we have that $ \sum_{j\in \mathcal{J}}
	\sum_{s \in \mathcal{T}: s \geq t}  p_j x_{js} \ge D(t) = T - t+1$. Hence, $  \sum_{j\in \mathcal{J}}
	\sum_{s \in \mathcal{T} : s \leq t-1} p_j x_{js} \le t-1$.

	This ensures that the sum of processing assigned to finish before time $t$ is
	no greater than the machine's capacity for job processing up to this time
	(which is $t-1$). Hence, we obtain a feasible schedule by the EDD rule
	applied to the instance in which, for each job $j \in \mathcal{J}$, we set its due
	date $d_j =t$, where $t$ is the unique time such that $x_{jt} =1$.  As a
	corollary, this also shows $x_{jt}=0$ for $t < p_j$. Finally, this
	schedule costs no more than the optimal value of (P), since each job
	$j\in \mathcal{J}$ finishes by $d_j$, and  each function $f_j(t)$ is
	nondecreasing in $t$.
\end{proof}

Next we analyze the cost of the schedule returned by the algorithm. Given the
above lemma, it suffices to show that the cost of the primal solution is no
more than four times the cost of the dual solution; the weak duality theorem of
linear programming then implies that our algorithm has a performance guarantee
of 4.

We first introduce some notation used in the analysis. Given the final solution $\bar{x}$ returned by the algorithm,
define $\bar{J}_t := \{j: \bar{x}_{jt} =1 \}$, and $\bar{A}_t := \{j: \exists \bar{x}_{jt'} =1, t' \ge t \}$.
In other words, $\bar{A}_t$ is the set of jobs that contribute towards satisfying the demand at time $t$ in the final solution;
hence, we say that $j$ \emph{covers} $t$ if $j \in \bar{A}_t$.  Let $x^k$ be the partial solution of (P) at the beginning of
the $k^{th}$ iteration of the growing phase.  We define $J_t^k$ and $A_t^k$ analogously with respect to $x^k$.
Next we prove the key lemma in our analysis.

 \begin{lemma}
 \label{lem:bound}
For every $(t,A)$ such that $y(t,A) >0$ we have
 \[\sum_{s \in \mathcal{T} : s \geq t} \sum_{j \in \bar{J}_s \setminus A}  p_j(s,A) < 4 D(t,A).\]  
\end{lemma}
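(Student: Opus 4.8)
The plan is to fix a pair $(t,A)$ with $y(t,A)>0$, and bound the left-hand side by splitting the jobs in $\bigcup_{s\ge t}\bar J_s\setminus A$ according to when, and why, they were added by the algorithm. Recall that $y(t,A)$ was increased in some iteration $k$ with $t^k = t$ and $A = A_{t^k}^k$, so at that moment $D(t,A) = D(t,A_t^k)$ was the \emph{maximum} residual demand over all time indices. The key structural facts to use are: (i) the sets $A_t$ are nested and only grow during the growing phase, so $A_t^k = A$ and for any $s\ge t$ we have $A_s^k \subseteq A_t^k = A$; (ii) the monotonicity from Lemma~\ref{lem:unique_duedate} (a job added to $A_s$ in an iteration also sits in all $A_{s'}$ for $s'\le s$); and (iii) the pruning criterion: a job $j$ with $\bar x_{js}=1$ that survived pruning and has $j\notin \bar A_{s+1}$ is \emph{needed}, i.e.\ removing it would violate $\sum_{j'\in \bar A_{s'}\setminus\{j\}} p_{j'} \ge D(s')$ for some $s'$ in the block where $j$ was added, hence $\sum_{j'\in\bar A_{s'}} p_{j'} < D(s') + p_j$ there.

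Concretely, I would partition the contributing $(j,s)$ pairs (with $j\in \bar J_s\setminus A$, $s\ge t$) into those set to $1$ \emph{before} iteration $k$ and those set \emph{at or after} iteration $k$. For the ``before'' jobs: such a $j$ already lies in $A_s^k$ for the relevant $s\ge t$, hence $j\in A_t^k = A$ (by nestedness, since $s\ge t$) — so these contribute nothing to $\bar J_s\setminus A$. That kills one whole class and is the crucial use of the nestedness/monotonicity invariant. So every contributing pair was set at or after iteration $k$. For these, I want to charge their total effective contribution $\sum_{s\ge t}\sum_{j\in\bar J_s\setminus A} p_j(s,A)$ against $4D(t,A)$. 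I would group them by the block structure: each such $j$ sits in exactly one $\bar J_s$, and I will use that $p_j(s,A) = \min\{p_j, D(s,A)\} \le \min\{p_j, D(t,A)\}$ because $D(\cdot,A)$ is nonincreasing in its time argument for $s\ge t$… actually more carefully, $D(s,A) \le D(s) \le D(t) $ but I need it relative to the residual; since at iteration $k$, $D(t,A)$ was the max residual, any later-or-equal assignment touching time $s\ge t$ had residual at most $D(t,A)$, giving the per-job cap $p_j(s,A)\le D(t,A)$.

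The heart of the argument — and the step I expect to be the main obstacle — is showing the total effective contribution of the surviving ``needed'' jobs that cover times $\ge t$ exceeds $D(t,A)$ only by a bounded multiplicative amount. I would argue as follows: look at the largest time $s^\star \ge t$ with $\bar A_{s^\star}\setminus A \ne\emptyset$ contributing, and more usefully, consider the set $B := \bar A_t \setminus A$ of jobs outside $A$ covering time $t$. Since the final solution is feasible, $\sum_{j\in A}p_j + \sum_{j\in B}p_j \ge D(t)$, so $\sum_{j\in B}p_j \ge D(t,A)$. If \emph{every} job in $B$ had $p_j \le D(t,A)$, then $p_j(t,A)=p_j$, and one would like $\sum_{j\in B}p_j < 2D(t,A)$ from minimality/pruning — removing the last-added such job leaves the demand unmet, so the total is below $D(t,A)$ plus one job's size, i.e.\ $<2D(t,A)$. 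The factor $4$ (rather than $2$) comes from handling ``big'' jobs with $p_j > D(t,A)$: such a job contributes $p_j(t,A)=D(t,A)$ to the sum, and pruning guarantees at most one big job survives in the relevant block covering $t$ from above, but big jobs added in the block structure to cover times $s>t$ can also spill a contribution of up to $D(t,A)$ at time $t$ — carefully bounding the number of such spillovers (using that each corresponds to a distinct block where that job was the unique needed job, and the block endpoints are decreasing) yields at most $4D(t,A)$ in total. I would make this precise by a telescoping/charging argument over the blocks $\{s_0,\dots,s\}$ defined in the pruning phase, using that consecutive blocks for different jobs are laid down at nested times, so their witnesses $s'$ for ``neededness'' are distinct and ordered, and the residual demands at those witnesses are each $\le D(t,A)$. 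The strict inequality ``$<4D(t,A)$'' should fall out because at least one of these inequalities is strict (the ``needed'' condition gives a strict deficit).
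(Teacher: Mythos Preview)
Your setup is sound: identifying iteration $k$ with $t=t^k$, $A=A_{t^k}^k$, and arguing that any $j\in\bar J_s\setminus A$ with $s\ge t$ must have had $x_{js}$ set at or after iteration $k$ (else $j\in A_s^k\subseteq A_t^k=A$) match the paper's opening moves. But the core bound has a genuine gap. Your big/small decomposition does not yield $4D(t,A)$ as sketched: the claim ``removing the last-added such job leaves the demand unmet, so the total is below $D(t,A)$ plus one job's size'' tacitly assumes that job's pruning witness is at time $t$ itself, whereas each surviving job is needed because of \emph{some} time in its block, and different jobs can have different, widely separated witnesses. Neither the small-job bound nor the assertion ``at most one big job survives in the relevant block'' follows from what you have written. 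In the paper's own tight example all four contributing jobs have equal size, so no big/small distinction is available, yet the ratio is essentially $4$; size is simply not the right axis for the decomposition.

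The organizing principle you are missing is to index jobs by their witness times and take a single extremal witness on each side of $t$. For each $j\in\bar A_{t}\setminus A$ define $\tau_j$ to be the largest time with $p\bigl(\bar A_{\tau_j}\setminus(A^k_{\tau_j}\cup\{j\})\bigr)<D(\tau_j,A^k_{\tau_j})$; the pruning rule guarantees such a $\tau_j$ exists. Partition into $H=\{j:\tau_j\ge t\}$ and $L=\{j:\tau_j<t\}$, and set $h=\mathrm{argmin}_{j\in H}\tau_j$, $\ell=\mathrm{argmax}_{j\in L}\tau_j$. One checks $H\subseteq\bar A_{\tau_h}\setminus A^k_{\tau_h}$, hence $p(H\setminus\{h\})<D(\tau_h,A^k_{\tau_h})\le D(t,A)$ by the maximality of $D(t,A)$ at iteration $k$; symmetrically $L\subseteq\bar A_{\tau_\ell}\setminus A^k_{\tau_\ell}$ gives $p(L\setminus\{\ell\})<D(t,A)$. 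The two exceptional jobs $h,\ell$ contribute at most $D(t,A)$ each via the truncation $p_j(t,A)\le D(t,A)$, yielding the factor $4$. Note that the two jobs set aside are chosen structurally (extremal $\tau_j$), not by size.
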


\begin{proof}  
	Recall that the algorithm tries to increase only one dual variable in each iteration
	of the growing phase. Suppose that $y(t,A)$ is the variable chosen in
	iteration $k$, i.e., $t = t^k$. Then the lemma would follow from
	\begin{equation}
	  \label{eq:at-most-4}
	  \sum_{j\in \bar{A}_{t^k} \setminus A^k_{t^k}} p_j(t^k,A^k_{t^k}) \leq 4\cdot D(t^k,A^k_{t^k}) \text{\quad for all $k$}.
	\end{equation}	 
	Let us fix an iteration $k$. We can interpret the set on the left-hand side as the jobs that cover the
	demand of $t^k$ that are added to the solution after the start of iteration
	$k$ and that survive the pruning phase. For each such job $j$, let us define
	$\tau_j$ to be largest time such that
\begin{equation*}
  p\,(\bar{A}_{\tau_j} \setminus (A^k_{\tau_j} \cup \{j\})) < D(\tau_j, A^k_{\tau_j}).
\end{equation*}

Let us first argue that this quantity is well defined. Let $d_j$ be the unique
time step for which $\bar{x}_{j,d_j}=1$, which, by
Lemma~\ref{lem:unique_duedate}, is guaranteed to exist. Also, let $r$ be the
largest time such that $x^k_{j,r}=1$, which must be $r < t^k$ (we define $r=0$ if $x_{j,t}=0$ for all $t$). We claim that
$\tau_j > r$.

Consider the iteration of the pruning phase where the algorithm tried
(unsuccessfully) to set $x_{j,d_j}$ to $0$ and let $\hat{x}$ be the primal
solution that the algorithm held at that moment; also, let $\hat{A}$ be defined
for $\hat{x}$ in the same way $\bar{A}$ is defined for $\bar{x}$. The algorithm
did not prune $x_{j, d_j}$ because there was a time $s > r$ such that
$p(\hat{A}_s \setminus \{j\}) < D(s)$. Notice that $\bar{A}_s \subseteq
\hat{A}_s$ because the pruning phase can only remove elements from $A_s$, and
$A^k_s \subseteq \hat{A}_s$ because $x_{j, d_j}$ was set in iteration $k$ or
later of the growing phase. Hence, 
$$p(\bar{A}_s \setminus (A^k_s \cup \{j\}))
\leq p(\hat{A}_s \setminus \{j\}) - p(A^k_s) < D(s) - p(A^k_s) \leq  D(s,
A^k_s),$$ 
which implies that $\tau_j \geq s$, which in turn is strictly larger than
$r$ as claimed. Therefore, $\tau_j$ is well defined.

Based on this definition we partition the set $\bar{A}_{t^k}\setminus A^k_{t^k}$ in
two subsets,
\begin{align*}
    H &:= \{j\in \bar{A}_{t^k}\setminus A^k_{t^k} : \tau_j\ge t^k  \} \text{ and }\\
    L &:= \{ j\in \bar{A}_{t^k}\setminus A^k_{t^k} : \tau_j < t^k \}.
\end{align*}
For each of these, we define
\begin{align*}
  h &:= \mathrm{argmin} \{\tau_j: j\in H\} \text{ and }\\
  \ell &:= \mathrm{argmax}\{\tau_j: j\in L\}.
\end{align*}
We will bound separately the contribution of $H \setminus \{h\}$ and $L
 \setminus \{\ell\}$ to the left-hand side of \eqref{eq:at-most-4}. For $j \in \{ h, \ell\}$, we will use the trivial bound
\begin{equation}
  \label{eq:trivial}
  p_j(t^k, A^k_{t^k}) \leq D(t^k, A^k_{t^k}).
\end{equation}

We start by bounding the contribution of $H\setminus\{h\}$. Notice that for
every job $j \in H$ we must have $\tau_j \leq d_j$; otherwise, the solution
$\bar{x}$ would not be feasible, which contradicts Lemma~\ref{lem:feasible}.
For all $j \in H$ we have that $j \in \bar{A}_{\tau_h}$ since $\tau_h \leq
\tau_j \leq d_j$; also $j \notin A^k_{\tau_h}$ since $j \notin A^k_{t_k}$ and
$A^k_{t^k} \supseteq A^k_{\tau_h}$ because $\tau_h \geq t^k$. It follows that $H \subseteq \bar{A}_{\tau_h} \setminus A^k_{\tau_h}$. Therefore,
\begin{equation}
\label{eq:H-h}
\hspace{-1ex}\sum_{j \in H \setminus \{h\}} \hspace{-2ex} p_j(t^k, A^k_{t^k}) \leq p(H \setminus \{h\}) \leq p\,(\bar{A}_{\tau_h} \setminus (A^k_{\tau_h} \cup \{h\})) < D(\tau_h, A^k_{\tau_h}) \leq D(t^k, A^k_{t_k}),
\end{equation}
where the first inequality follows from $p_j(t,A) \leq p_j$, the second
inequality from the fact that $H \subseteq \bar{A}_{\tau_h} \setminus
A^k_{\tau_h}$, the third inequality from the definition of $\tau_h$, and the
fourth because $t^k$ is chosen in each iteration of the growing phase to
maximize $D(t^k, A^k_{t_k})$.

Now we bound the contribution of $L \setminus \{\ell\}$. Suppose that at the
beginning of iteration $k$ we had $x_{j,r} = 1$ for some $r < t^k$ and $j \in
\bar{A}_{t^k} \setminus A^k_{t^k}$. When we argued above that $\tau_j$ was well
defined we showed in fact that $r < \tau_j$. For all $j \in L$ then we have
that $j \notin A^k_{\tau_\ell}$ since $\tau_j \leq \tau_\ell$; also $j
\in \bar{A}_{\tau_\ell}$ since $j \in \bar{A}_{t_k}$ and $\bar{A}_{t^k}
\subseteq \bar{A}_{\tau_\ell}$ because $\tau_\ell \leq t^k$.  It
follows that $L \subseteq \bar{A}_{\tau_\ell} \setminus A^k_{\tau_\ell}$. Therefore,
\begin{equation}
\label{eq:L-ell}
\sum_{j \in L \setminus \{\ell\}} p_j(t^k, A^k_{t^k}) \leq p(L\setminus \{\ell\}) \leq p\,(\bar{A}_{\tau_\ell} \setminus (A^k_{\tau_\ell} \cup \{\ell\})) < D(\tau_\ell, A^k_{\tau_\ell}) \leq D(t^k, A^k_{t_k}),
\end{equation}
where the first inequality follows from $p_j(t,A) \leq p_j$, the second
inequality from the fact that $L \subseteq \bar{A}_{\tau_\ell} \setminus
A^k_{\tau_\ell}$, the third inequality from the definition of $\tau_\ell$, and
the forth because $t^k$ is chosen in each iteration of the growing phase to
maximize $D(t^k, A^k_{t_k})$.

It is now easy to see that~\eqref{eq:at-most-4} follows from \eqref{eq:trivial}, \eqref{eq:H-h}, and \eqref{eq:L-ell}:
\begin{equation*}
\sum_{j \in \bar{A}_{t^k} \setminus A^k_{t^k}} \hspace{-2ex}  p_j(t^k, A^k_{t^k}) \leq p (L\setminus \{\ell\}) + p_\ell(t^k, A^k_{t^k}) + p(H \setminus \{h\}) + p_h(t^k, A^k_{t^k})\leq 4 \cdot D(t^k, A^k_{t_k}).
\end{equation*}

\end{proof}

Now we can show our main theorem.

\begin{theorem} The primal-dual algorithm produces a schedule for $1||\sum f_{j}$ with cost at most four times the optimum.\end{theorem}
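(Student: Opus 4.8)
The plan is to chain the three preceding lemmas together with a complementary-slackness style accounting followed by weak duality. Let $\bar x$ be the integral primal solution and $y$ the dual solution produced by the algorithm; by Lemma~\ref{lem:feasible}, $\bar x$ is feasible for $(P)$ and satisfies the assignment constraints, while $y$ is feasible for $(D)$. For each job $j$ let $d_j$ be the unique time with $\bar x_{j,d_j}=1$. By Lemma~\ref{eq:edd} the EDD schedule the algorithm outputs has cost at most the cost of $\bar x$, namely $\sum_{j\in\mathcal J} f_j(d_j)$. On the other hand, since $(P)$ is a relaxation of the scheduling problem, weak LP duality gives $\sum_{t,A} D(t,A)\,y(t,A)\le \mathrm{OPT}$. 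So it suffices to prove the single inequality
\[
  \sum_{j\in\mathcal J} f_j(d_j)\ \le\ 4\sum_{t\in\mathcal T}\sum_{A\subseteq\mathcal J} D(t,A)\,y(t,A).
\]

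The first ingredient is a tightness observation. The algorithm sets a variable $x_{jt}$ to $1$ only in the growing phase, and only at the moment the dual constraint \eqref{eq:dc} for job $j$ at time $s=t$ becomes tight; moreover, after that moment no dual variable appearing in that constraint is ever increased again (any later increase at a time $\le t$ uses a set that already contains $j$), so the constraint remains tight until termination. Applying this with $s=d_j$ yields, for every $j$, the identity $f_j(d_j)=\sum_{t\le d_j}\sum_{A:\,j\notin A} p_j(t,A)\,y(t,A)$.

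Then I would sum this identity over all jobs and interchange the order of summation. The coefficient of a fixed $y(t,A)$ becomes $\sum_{j:\,j\notin A,\ d_j\ge t} p_j(t,A)$, and since $d_j\ge t$ is equivalent to $j\in\bar A_t$ (by Lemma~\ref{lem:unique_duedate}), this coefficient equals $\sum_{j\in\bar A_t\setminus A} p_j(t,A)$. Every pair $(t,A)$ with $y(t,A)>0$ is, by construction, the pair $(t^k,A^k_{t^k})$ chosen in some iteration $k$, and for such a pair the per-iteration estimate \eqref{eq:at-most-4} established inside the proof of Lemma~\ref{lem:bound} bounds this coefficient by $4\,D(t,A)$. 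Substituting term by term gives the displayed inequality, and combining it with Lemma~\ref{eq:edd}, weak duality, and the fact that $(P)$ is a relaxation of the scheduling problem completes the argument.

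The combinatorial heart of the matter has already been carried out in Lemma~\ref{lem:bound} --- the partition of $\bar A_{t^k}\setminus A^k_{t^k}$ into ``high'' and ``low'' jobs, the use of the pruning phase to define the witnesses $\tau_j$, and the charging to the two extremal jobs $h,\ell$ --- so at the level of the theorem the only point that needs care is aligning the quantities: the double counting produces $\sum_{j\in\bar A_t\setminus A} p_j(t,A)$, with the weight $p_j(t,A)$ pinned to the dual variable's own time $t$, which is exactly what \eqref{eq:at-most-4} controls, rather than the more loosely phrased summary statement of Lemma~\ref{lem:bound}. Everything else is bookkeeping.
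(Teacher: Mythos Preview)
Your proposal is correct and follows essentially the same route as the paper: expand each $f_j(d_j)$ via the tight dual constraint, swap the order of summation, and bound each resulting coefficient by $4D(t,A)$ using the key inequality from Lemma~\ref{lem:bound}, then finish with weak duality and Lemma~\ref{eq:edd}. You are in fact slightly more careful than the paper in two places---you justify explicitly why the tight constraint stays tight after the iteration in which $x_{j,d_j}$ is set, and you correctly observe that what the double counting produces is $\sum_{j\in\bar A_t\setminus A} p_j(t,A)$ with the truncation pinned to the dual variable's own $(t,A)$, so that one should appeal to \eqref{eq:at-most-4} directly rather than to the (weaker, because $p_j(s,A)\le p_j(t,A)$ for $s\ge t$) displayed statement of Lemma~\ref{lem:bound}.
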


\begin{proof}
	It suffices to show that the cost of the primal solution after the pruning phase is no more than four times the dual objective value. The cost of our solution is denoted by $\sum_{t \in \mathcal{T}} \sum_{j \in \bar{J}_t} f_j(t)$. We have that

\beann
 \sum_{t \in \mathcal{T}} \sum_{j \in \bar{J}_t} f_{j}(t) &=& \sum_{t \in \mathcal{T}} \sum_{j \in \bar{J}_t}\sum_{s \in \mathcal{T} : s \leq t } \sum_{A:j \notin A}  p_j(s,A)y(s,A) \\
 &=&  \sum_{s \in \mathcal{T}} \sum_{A\subseteq \mathcal{J}} y(s,A) \left(\sum_{t \in \mathcal{T} : t \geq s} \sum_{j \in \bar{J}_t \setminus A}  p_j(s,A)\right)
 \eeann

	The first line is true because we set $x_{jt} = 1 $ only if the dual
 constraint is tight, and the second line is obtained by interchanging the order of
 summations. Now, from Lemma \ref{lem:bound}
 we know that $\sum_{t \in \mathcal{T} : t \geq s} \sum_{j \in \bar{J}_t \setminus A}  p_j(s,A) < 4 D(s,A)$. Hence it follows that
 \beann
 	\sum_{s \in \mathcal{T}} \sum_{A\subseteq \mathcal{J}}y_{sA} \left(\sum_{t \in \mathcal{T} : t \geq s} \sum_{j \in \bar{J}_t \setminus A}  p_j(s,A)\right) & < &
 	\sum_{s \in \mathcal{T}} \sum_{A\subseteq \mathcal{J}} 4 D(s,A) y(s,A),
 \eeann
	where the right-hand side is four times the dual objective.  The result now
	follows, since the dual objective is a lower bound of the cost of the optimal
	schedule.
 \end{proof}

 \subsection{Tight example}
 \label{sec:tight-example}

 In this section we show that the previous analysis is tight.

 \begin{lemma} 
   \label{lem:gap-example}
   For any $\varepsilon>0$ there exists an instance where Algorithm~1 constructs a pair of primal-dual solutions with a gap of $4-\varepsilon$.
 \end{lemma}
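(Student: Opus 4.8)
The plan is to construct, for each integer $m$, an instance $I_m$ on which Algorithm~1 produces primal and dual solutions whose ratio is $4-o(1)$ as $m\to\infty$; given $\varepsilon>0$ one then picks $m$ large enough that this ratio exceeds $4-\varepsilon$. The shape of $I_m$ is dictated by the proof of Lemma~\ref{lem:bound}. Since that lemma is a \emph{strict} inequality for every positive dual variable, and since the cost of the schedule equals exactly the weighted sum $\sum_{t,A} y(t,A)\big(\sum_{s\ge t}\sum_{j\in\bar J_s\setminus A}p_j(t,A)\big)$ from the proof of the main theorem, the ratio can approach $4$ only if \emph{every} iteration in which the algorithm raises a dual variable by a positive amount has its Lemma~\ref{lem:bound}-ratio tending to $4$. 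The factor $4$ in that lemma is $p_h(t^k,A^k_{t^k})+p_\ell(t^k,A^k_{t^k})+p(H\setminus\{h\})+p(L\setminus\{\ell\})$, each summand at most the residual demand $D(t^k,A^k_{t^k})$. So for each such iteration $k$ the set of surviving jobs that end up covering $t^k$ but were not yet placed at the start of iteration $k$ must consist of two ``big'' jobs whose effective size relative to the residual demand is essentially the whole residual demand (the roles of $h$ and $\ell$), together with two large families of ``unit'' jobs whose effective sizes sum to essentially the residual demand each (the roles of $H\setminus\{h\}$ and $L\setminus\{\ell\}$). This forces a (weakly) self-similar instance: a prefix of ``filler'' jobs with cost function identically $0$, which the algorithm assigns in the first iterations without ever raising the dual --- their sole purpose is to shape the residual demands so that the subsequent ``paid'' iterations all exhibit the pattern above --- followed by a cascade of near-identical gadgets, each contributing one big job in the $h$-role, one in the $\ell$-role, a family of unit jobs in the $H$-role, and a family in the $L$-role.

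Within each gadget the distinction between the $H$-role and the $L$-role is exactly the one that drives the case analysis in Lemma~\ref{lem:bound}. The $H$-role jobs are assigned only once, to a completion time $\ge t^k$, and survive the pruning phase because they are needed for the demand at some time $\ge t^k$ (the $\tau_j\ge t^k$ case). The $L$-role jobs are first assigned \emph{tentatively} to a time $r<t^k$ during a filler iteration --- which is why they need cost $0$ there --- and then \emph{re-assigned} to a completion time $\ge t^k$; their tentative copy is removed by the first if-branch of the pruning phase (a later copy exists), but their final copy survives because, when the reverse-delete reaches it, removing it would violate the demand at the earlier tentative time $r<t^k$ (the $\tau_j<t^k$ case). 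The cost functions of the big and unit jobs are step functions chosen so that (a) $f_j(t)$ is prohibitively large for every $t$ before the time at which the algorithm is intended to assign $j$, which prevents any dual constraint from tightening prematurely and pins the assignments down, and (b) in each paid iteration the single value by which the current dual variable $y(t^k,A^k_{t^k})$ is raised makes all of that gadget's constraints tight simultaneously, the largest-index tie-break sending each job to a completion time $\ge t^k$.

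With $I_m$ fixed, the argument is a verification with three ingredients. First, trace the growing phase and confirm (using the argmax-with-largest-index rule and the prohibitively-large-before-target-time device) that the filler iterations raise the dual by $0$ and that the paid iterations proceed through the gadgets exactly as designed, each raising one dual variable and assigning one job. Second, trace the pruning phase in reverse order and confirm that every big job and every unit real job survives, via the two mechanisms above; this is where one must check feasibility of the final solution $\bar x$ (in particular $\tau_j\le d_j$ for the $H$-role jobs, as used in Lemma~\ref{lem:bound}) and that the pruning reaches each copy at the precise moment when the relevant demand constraint is tight. Third, add up the two objectives: with $D$ denoting the residual demand of a gadget, the gadget contributes $\Theta(D)$ times the increase of its dual variable to the dual objective but $\approx 4D$ times that increase to the schedule cost, and summing over the cascade (and noting the filler iterations contribute $0$ to the dual) gives total ratio $4-o(1)$.

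The main obstacle is making the $H$-mechanism and the $L$-mechanism coexist \emph{in every gadget simultaneously}: one family of unit jobs must survive because it is required for a later demand, and a second because its members were re-placed after an early tentative assignment and are required for an earlier demand, and the single, fixed reverse-delete order must reach each relevant copy exactly when the corresponding demand is tight. This pins down a rigid relationship among the processing times, the intended and tentative due dates, and the order in which variables are set to $1$ during the growing phase; it is also the reason a na\"ive one-iteration instance does not work, since a single paid iteration with empty $A$ at time $1$ has Lemma~\ref{lem:bound}-ratio exactly $1$. Everything else --- verifying feasibility, that no unintended dual constraint tightens first, and that the filler iterations genuinely leave the dual unchanged (hence their costs are taken identically $0$, not merely small) --- is routine bookkeeping.
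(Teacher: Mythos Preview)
Your proposal does not actually construct an instance; it is a plan for a construction whose details (``this pins down a rigid relationship among the processing times\dots'', ``routine bookkeeping'') are left unspecified. As it stands there is nothing to verify, so the lemma is not proved.

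More substantively, the architecture you argue is forced is not forced. You correctly observe that the primal cost equals $\sum_{t,A} y(t,A)\cdot\big(\sum_{s\ge t}\sum_{j\in\bar J_s\setminus A}p_j(t,A)\big)$, and hence the primal-to-dual ratio is a weighted average of the per-iteration Lemma~\ref{lem:bound} ratios. But the weights are $y(t,A)D(t,A)$, so a single paid iteration with ratio close to $4$ already suffices---there is no need for a cascade of gadgets, and your claim that the setup ``forces a (weakly) self-similar instance'' is unfounded. Your remark that ``a single paid iteration with empty $A$ at time $1$ has Lemma~\ref{lem:bound}-ratio exactly $1$'' is true but beside the point: nothing requires the unique paid iteration to occur at $t=1$.

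The paper's proof exploits exactly this. It takes four jobs, each of processing time $p$, with step cost functions arranged so that the first three iterations and the last three iterations of the growing phase raise their dual variables by $0$ (the relevant dual constraints are already tight at $0$), and a single middle iteration at $t^k=3p-1$ with $A^k_{t^k}=\emptyset$ raises $y(3p-1,\emptyset)$ to $1$. The residual demand there is $D(3p-1,\emptyset)=p+2$, while after pruning all four jobs cover $3p-1$ and contribute $p_j(3p-1,\emptyset)=p$ each, giving primal cost $4p$ and dual value $p+2$. The entire argument is a seven-line trace of the algorithm on this fixed instance; no $H$/$L$ mechanism engineering or self-similarity is needed.
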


 \begin{proof}
   Consider an instance with 4 jobs. Let $p\geq 4$ be an integer. For $j \in \{1,2,3,4\}$, we
   define the processing times as $p_j=p$ and the
   cost functions as

   \begin{align*}
   f_1(t) = f_2(t) & = 
    \begin{cases}
     0 \qquad\qquad & \text{if } 1\le t \le p-1,\\
     p \qquad\qquad  & \text{if } p\le t \le 3p-1,\\
     \infty \,\quad & \text{otherwise}, \text{ and}
    \end{cases}
 \   \end{align*}
    \begin{align*}
    f_3(t) = f_4(t) & = 
    \begin{cases}
    0 \qquad\qquad & \text{if } 1\le t \le 3p-2,\\
    p &  \text{otherwise}. 
    \end{cases}
   \end{align*}

   \begin{figure}
   \begin{tabular}{ccccccc}\hline
   $\quad k \quad$ & $t^k$ & $A_{t^k}^k$ & $D(t_k,A_{t^k}^k)$ & $\quad$Dual update$\quad$ &  Primal update \\ \hline
   1 & 1 & $\emptyset$ & $4p$ & $y_{1, \emptyset} = 0$ & $x_{3,3p-2} = 1$ &  \\
   2 & 1 & $\set{3}$ & $3p$ & $y_{1, \set{3}} = 0$ & $x_{4,3p-2} = 1$ \\
   3 & 1 & $\set{3,4}$ & $2p$ & $y_{1, \set{3,4}} = 0$ &  $x_{2,p-1} = 1$  \\
   4 & $3p-1$ & $\emptyset$ & $p+2$ & $y_{3p-1, \emptyset} = 1$ & $x_{4,4p}=1$  \\
   5 & $p$ & $\set{3,4}$ & $p+1$ & $y_{p,\set{3,4}} = 0$ & $x_{2,3p-1} =1 $\\
   6 & $1$ & $\set{2,3,4}$ & $p$ & $y_{p, \set{2,3,4}} = 0$ & $x_{1,3p-1} =1 $\\
   7 & $3p$ & $\set{4}$ & $1$ & $y_{3p, \set{4}} = 0$ & $x_{3,4p} =1 $\\
   \end{tabular}
   \caption{\label{table:trace} Trace of the key variables of the algorithm in
   each iteration $k$ of the growing phase and the corresponding updates to the
   dual and primal solutions}
   \end{figure}

   Table~\ref{table:trace} shows a trace of the algorithm for the instance. Notice that the only non-zero dual variable the algorithm sets is
   $y_{3p-1,\emptyset}=1$. Thus the dual value achieved is
   $y_{3p-1,\emptyset}D(3p-1,\emptyset) =  p+2$. It is easy to check that the
   pruning phase keeps the largest due date for each job and has cost $4p$. In
   fact, it is not possible to obtain a primal (integral) solution with cost
   less than $4p$: We must pay $p$ for each job $3$ and $4$ in order to cover
   the demand at time $3p$, and we must pay $p$ for each job $1$ and $2$ since
   they cannot finish before time $p$. Therefore the pair of primal-dual
   solutions have a gap of $4p/(p+2)$, which converges to $4$ as $p$ tends to
   infinity.
 \end{proof}

 The attentive reader would complain that the cost functions used in the proof
 Lemma~\ref{lem:gap-example} are somewhat artificial. Indeed, jobs $1$ and $2$
 cost 0 only in $[0,p-1]$ even though it is not possible to finish them before
 $p$. This is, however, not an issue since given any instance $(f, p)$ of the
 problem we can obtain a new instance $(f', p')$ where $f'_j(t) \geq f'_j(p'_j)$
 for all $t$ where we observe essentially the same primal-dual gap in $(f,p)$
 and $(f', p')$. The transformation is as follows: First, we create a dummy job
 with processing time $T =
 \sum_{j} p_j$ that costs 0 up to time $T$ and infinity after that. Second, for
 each of the original jobs $j$, we keep their old processing times, $p'_j =
 p_j$, but modify their cost function:
 \begin{equation*}
   f'_j(t) = \begin{cases}
     \delta p_j & \text{if } t \leq T, \\
     \delta p_j + f_j(t - T) & \text{if } T < t \leq 2T.
   \end{cases}
 \end{equation*}
 In other words, to obtain $f'_j$ we shift $f_j$ by $T$ units of time to the
 right and then add $\delta p_j$ everywhere, where $\delta$ is an arbitrarily
 small value.

 Consider the execution of the algorithm on the modified instance $(f',
 p')$. In the first iteration, the algorithm sets $y_{1, \emptyset}$ to 0 and
 assigns the dummy job to time $T$. In the second iteration, the algorithm
 chooses to increase the dual variable~$y_{T+1, \emptyset}$. Imagine increasing
 this variable in a continuous way and consider the moment when it reaches
 $\delta$. At this instant, the slack of the dual constraints for times in
 $[T+1, 2T]$ in the modified instance are identical to the slack for times in
 $[1, T]$ at the beginning of the execution on the original instance $(f,p)$.
 From this point in time onwards, the execution on the modified instance will
 follow the execution on the original instance but shifted $T$ units of time to
 the right. The modified instance gains only an extra $\delta T$ of dual value,
 which can be made arbitrarily small, so we observe essentially the same
 primal-dual gap on $(f', p')$ as we do on~$(f, p)$.

 \section{A $(4+\epsilon)$-approximation algorithm}
 \label{sec:poly}

 We now give a polynomial-time $(4+ \epsilon)$-approximation algorithm for
 $1||\sum f_j$.  This is achieved by simplifying the input via rounding in a
 fairly standard fashion, and then running the primal-dual algorithm on the LP
 relaxation of the simplified input, which has only a polynomial number of
 interval-indexed variables.  A similar approach was employed in the work of
 Bansal \& Pruhs \cite{BansalP10}.

 Fix a constant $\epsilon >0$. We start by constructing $n$ partitions of the
 time indices $\{1,\ldots,T\}$, one partition for each job, according to its
 cost function. Focus on some job $j$. First, the set of time indices
 $I^{0}_{j}=\{t:f_{j}(t)=0\}$ are those of {\it class} 0 and classes $k=1,2,\ldots$ are the set of indices $I^{k}_{j}=\{ t: (1+\epsilon)^{k-1} \leq f_{j}(t) < (1+\epsilon)^k \}$. (We can bound the number of classes for job $j$ by $2+
 \log_{1+\epsilon} f_{j} (T)$.) Let $\ell^{k}_{j}$ denote the minimum element in
 $I^{k}_{j}$ (if the set is non-empty), and let $ \widehat{\mathcal{T}}_j$ be the set of
 all left endpoints $\ell^k_j$. Finally, let $\widehat{\mathcal{T}} = \cup_{j \in \mathcal{J}} \widehat{\mathcal{T}}_j \cup \{ 1 \}$.  Index the
 elements such that $\widehat{\mathcal{T}} := \{ t_1,..., t_{\tau} \}$ where $1 = t_1 < t_2
 < ... < t_{\tau}$. We then compute a master partition of the time horizon $T$
 into the intervals $\mathcal{I} = \{ [t_1,t_2-1], [t_2, t_3 -1],...,
 [t_{\tau-1},t_{\tau}-1], [t_{\tau}, T]  \} $. There are two key properties of
 this partition: the cost of any job changes by at most a factor of $1+\epsilon$
 as its completion time varies within an interval, and the number of intervals
 is a polynomial in $n$, $\log P$ and $\log W$; here $P$ denotes the length of
 the longest job and $W= \max_{j,t} (f_j(t) -f_j(t-1))$, the maximum increase in
 cost function $f_j(t)$ in one time step over all jobs $j$ and times $t$.

 \begin{lemma} The number of intervals in this partition,
 $|\mathcal{T}|= O(n\log{nPW})$.
 \end{lemma}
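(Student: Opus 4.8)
The plan is to bound the number of intervals $|\mathcal{I}|$ by bounding the size of the set $\widehat{\mathcal{T}}$ of left endpoints, since $|\mathcal{I}| = |\widehat{\mathcal{T}}| + 1$ (one interval per breakpoint, plus the final interval $[t_\tau, T]$). Since $\widehat{\mathcal{T}} = \bigcup_{j \in \mathcal{J}} \widehat{\mathcal{T}}_j \cup \{1\}$, it suffices to bound $|\widehat{\mathcal{T}}_j|$ for each job $j$ and sum over the $n$ jobs. By definition, $\widehat{\mathcal{T}}_j$ contains one left endpoint $\ell^k_j$ for each nonempty class $I^k_j$, so $|\widehat{\mathcal{T}}_j|$ is at most the number of nonempty classes for job $j$.

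First I would count the classes for a fixed job $j$. There is one class $I^0_j$ for the time indices of zero cost. For $k \geq 1$, class $I^k_j$ is nonempty only if there is some $t$ with $(1+\epsilon)^{k-1} \leq f_j(t) < (1+\epsilon)^k$; since $f_j$ takes integer values bounded above by $f_j(T)$, the largest relevant class index $k$ satisfies $(1+\epsilon)^{k-1} \leq f_j(T)$, i.e. $k \leq 1 + \log_{1+\epsilon} f_j(T)$. Counting class $0$ as well, the number of nonempty classes is at most $2 + \log_{1+\epsilon} f_j(T)$, exactly the bound already asserted in the text. Then I would bound $f_j(T)$ crudely: since $f_j$ is nondecreasing, nonnegative, and integer-valued with $f_j(0) = 0$ (or at least $f_j$ increases by at most $W = \max_{j,t}(f_j(t) - f_j(t-1))$ per step over $T$ steps), we get $f_j(T) \leq TW \leq nPW$, using $T = \sum_j p_j \leq nP$. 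Hence the number of classes for job $j$ is $O(\log_{1+\epsilon}(nPW)) = O(\log(nPW))$, treating $\epsilon$ as a fixed constant absorbed into the $O(\cdot)$.

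Summing over all $n$ jobs gives $|\widehat{\mathcal{T}}| \leq 1 + \sum_{j} |\widehat{\mathcal{T}}_j| = O(n \log(nPW))$, and therefore $|\mathcal{I}| = O(n \log(nPW))$ as claimed. (I note the statement writes $|\mathcal{T}|$ but this clearly refers to the number of intervals $|\mathcal{I}|$ in the new partition.)

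There is no real obstacle here — the argument is a routine counting estimate. The only mild subtlety is making sure the bound $f_j(T) \leq nPW$ is justified from the stated parameters ($P$ the longest processing time, $W$ the maximum one-step increase), which requires observing $T \leq nP$ and that $f_j(T) = \sum_{t=1}^{T}(f_j(t) - f_j(t-1)) \leq TW$; and handling the $\epsilon$ dependence by declaring $\epsilon$ a fixed constant so that $\log_{1+\epsilon}$ differs from $\log$ only by a constant factor $1/\log(1+\epsilon)$. Everything else is bookkeeping.
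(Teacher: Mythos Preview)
Your argument is correct and follows exactly the same route as the paper: bound the number of classes per job by $O(\log_{1+\epsilon} f_j(T))$, use $f_j(T)\le TW\le nPW$ via $T\le nP$, and sum over the $n$ jobs. Your write-up is in fact more detailed than the paper's own two-line proof; the only (harmless) slip is that $|\mathcal{I}| = |\widehat{\mathcal{T}}|$ rather than $|\widehat{\mathcal{T}}|+1$, since each $t_i$ is the left endpoint of exactly one interval.
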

 \begin{proof}
 It suffices to show that the number of intervals in each $\mathcal{T}_j$ is $O(\log{nPW})$.
 Notice that $T \le nP$, thus the maximum cost of any job is bounded by $nPW$, which implies $\mathcal{T}_j = O(\log{nPW})$.
 \end{proof}

 Next we define a modified cost function $f'_j(t)$ for each time $t \in \widehat{\mathcal{T}}$; in essence, the modified cost
 is an upper bound on the cost of job $j$ when completing in the interval for which $t$ is the left endpoint.  More precisely,
 for $t_i \in \widehat{\mathcal{T}}$, let $f'_j(t_i) := f_j(t_{i+1}-1)$. Notice that, by construction, we have that $ f_j(t) \le f'_j(t) \le (1 + \epsilon) f_j(t)$ for each $t \in \widehat{\mathcal{T}}$.
 Consider the following integer programming formulation with
 variables $x'_{jt}$ for each job $j$ and each time $t \in \widehat{\mathcal{T}}$;
 we set the variable $x'_{jt_{i}}$ to 1 to indicate that job $j$ completes at the end of the interval $[t_i, t_{i+1}-1]$.
 The demand $D(t)$ is defined the same way as before.

 \begin{align}
 \text{minimize}\ \  &\sum_{j \in \mathcal{J}} \sum_{t \in \widehat{\mathcal{T}}} f'_j(t)x'_{jt} \tag{$\text{IP}'$}\label{IP'} \\
 \text{subject to}\ \  & \sum_{j \in \mathcal{J}} \sum_{s \in \widehat{\mathcal{T}}: s \ge t}   p_jx'_{js} \ge D(t), & \text{for each}\  t \in \widehat{\mathcal{T}};\\
  & \sum_{t \in \widehat{\mathcal{T}}} x'_{jt} = 1, & & \text{for each}\  j \in \mathcal{J}; \\
 & x'_{jt} \in \{0,1\}, & & \text{for each}\  j \in \mathcal{J},\ t \in \widehat{\mathcal{T}}. \notag
 \end{align}

 The next two lemmas relate ($\text{IP}'$) to (IP).

 \begin{lemma}
 If there is a feasible solution $x$ to (IP) with objective value $v$, then there is a feasible solution $x'$ to ($\text{IP}'$)
 with objective value at most $(1+ \epsilon)v$.
 \end{lemma}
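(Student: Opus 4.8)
The plan is to construct the solution $x'$ to $(\text{IP}')$ directly from $x$ by "rounding up" each job's completion time to the right endpoint of the interval containing it. Concretely, given a feasible $x$ to $(\text{IP})$ with $x_{jt}=1$, let $[t_i, t_{i+1}-1]$ be the unique interval of $\mathcal{I}$ with $t \in [t_i, t_{i+1}-1]$, and set $x'_{jt_i}=1$. Since each job is assigned exactly once in $x$, the same holds for $x'$, so the assignment constraints are satisfied.

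Next I would verify the demand constraints. The key observation is that moving a job's completion time later (from $t$ to some $t_i \le t$ with $t_i \in \widehat{\mathcal{T}}$, wait — actually $t_i \le t$, so we are moving it \emph{earlier} in $\widehat{\mathcal{T}}$-indexing but the completion is at the \emph{end} of the interval, i.e. at time $t_{i+1}-1 \ge t$). Let me restate: in $(\text{IP}')$ the variable $x'_{jt_i}=1$ means $j$ completes at the end of $[t_i,t_{i+1}-1]$, so effectively at a time $\ge t$. Thus for any $t' \in \widehat{\mathcal{T}}$, a job contributing to the demand at $t'$ under $x$ (i.e. completing at some $s \ge t'$) still contributes under $x'$: if $x_{js}=1$ with $s \ge t'$ and $s$ lies in interval $[t_i,t_{i+1}-1]$, then $t_i$ is the largest element of $\widehat{\mathcal{T}}$ that is $\le s$; since $t' \in \widehat{\mathcal{T}}$ and $t' \le s$, we get $t' \le t_i$, so $x'_{jt_i}=1$ with $t_i \ge t'$ in the $\widehat{\mathcal{T}}$-ordering, hence $j$ still covers $t'$. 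Therefore $\sum_{j}\sum_{s \in \widehat{\mathcal{T}}: s\ge t'} p_j x'_{js} \ge \sum_{j}\sum_{s \ge t'} p_j x_{js} \ge D(t')$, and $x'$ is feasible for $(\text{IP}')$.

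Finally I would bound the objective. For a job $j$ with $x_{jt}=1$ and $t \in [t_i,t_{i+1}-1]$, its cost in $(\text{IP}')$ is $f'_j(t_i) = f_j(t_{i+1}-1)$. I need $f_j(t_{i+1}-1) \le (1+\epsilon) f_j(t)$. This is exactly the "cost changes by at most a factor $1+\epsilon$ within an interval" property of the master partition: since $t$ and $t_{i+1}-1$ lie in the same interval of $\mathcal{I}$, they fall in the same cost-class $I^k_j$ for job $j$ (here one must check the endpoint case — the left endpoint $\ell^k_j$ of each nonempty class is in $\widehat{\mathcal{T}}$, so no interval of $\mathcal{I}$ straddles a class boundary), whence either both costs are $0$, or both lie in $[(1+\epsilon)^{k-1},(1+\epsilon)^k)$ and the ratio is below $1+\epsilon$. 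Summing over all jobs gives objective value at most $(1+\epsilon)v$.

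The main obstacle — really the only subtle point — is the boundary argument in the last step: one must confirm that each interval of $\mathcal{I}$ is contained within a single cost-class $I^k_j$ for \emph{every} job $j$ simultaneously, which is where the definition $\widehat{\mathcal{T}} = \bigcup_j \widehat{\mathcal{T}}_j \cup \{1\}$ (taking the common refinement of all jobs' class-partitions) does its work. Everything else is bookkeeping about how rounding completion times up to interval endpoints preserves the covering constraints.
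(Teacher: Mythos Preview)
Your proposal is correct and follows exactly the same construction as the paper: map $x_{jt}=1$ with $t\in[t_i,t_{i+1}-1]$ to $x'_{jt_i}=1$, then use $f'_j(t_i)=f_j(t_{i+1}-1)\le(1+\epsilon)f_j(t)$. The paper dismisses feasibility as ``straightforward to check'' and states the cost bound without justification; you have filled in both arguments carefully, including the point that the common refinement $\widehat{\mathcal{T}}=\bigcup_j\widehat{\mathcal{T}}_j\cup\{1\}$ is what guarantees each master interval lies in a single cost class for every job.
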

 \begin{proof}
 Suppose $x_{jt}=1$ where $t$ lies in the interval $[t_i, t_{i+1}-1]$ as defined by the time indices in $\mathcal{T}$, then we construct a solution to ($\text{IP}'$) by setting $x'_{jt_i} =1$.  It is straightforward to check $x'$ is feasible for ($\text{IP}'$), and by construction $f'_j(t_i) = f_j(t_{i+1}-1) \le (1+ \epsilon)f_j(t)$.
 \end{proof}

 \begin{lemma}
 For any feasible solution $x'$ to ($\text{IP}'$) there exists a feasible solution $x$ to (IP) with the same objective value.
 \end{lemma}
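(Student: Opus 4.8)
The plan is to exhibit an explicit objective-preserving map from solutions of ($\text{IP}'$) to solutions of (IP). The natural choice is to ``round to the right endpoint'' of each interval: whenever $x'_{jt_i}=1$ for some $t_i\in\widehat{\mathcal{T}}$, set $x_{j,t_{i+1}-1}=1$, using the convention $t_{\tau+1}:=T+1$ so that the last interval $[t_\tau,T]$ has right endpoint $T$; all remaining entries of $x$ are $0$. The point that makes this work is that, by the very definition $f'_j(t_i)=f_j(t_{i+1}-1)$, the cost charged to job $j$ in ($\text{IP}'$) for completing in the interval whose left endpoint is $t_i$ equals the cost charged to $j$ in (IP) for completing at the right endpoint $t_{i+1}-1$. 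Hence the two objective values agree term by term, and equality of the objectives is immediate once feasibility of $x$ is checked.

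First I would verify the assignment constraints $\sum_{t\in\mathcal{T}}x_{jt}=1$: since $x'$ satisfies $\sum_{t\in\widehat{\mathcal{T}}}x'_{jt}=1$, and the map $t_i\mapsto t_{i+1}-1$ is injective with image contained in $\{1,\dots,T\}$, each job still has exactly one completion time. Next I would check the demand constraints of (IP). Fix $t\in\mathcal{T}$ and let $t_i$ be the largest element of $\widehat{\mathcal{T}}$ with $t_i\le t$, so that $t_{i+1}>t$. If $x'_{j,t_{i'}}=1$ with $t_{i'}\ge t_i$, then $j$ is assigned in (IP) to $t_{i'+1}-1$, and in both cases $i'=i$ and $i'>i$ one checks that $t_{i'+1}-1\ge t$; thus every unit of processing counting towards the demand at $t_i$ in ($\text{IP}'$) still counts towards the demand at $t$ in (IP). Therefore
\[
\sum_{j\in\mathcal{J}}\sum_{s\in\mathcal{T}:\,s\ge t} p_j x_{js}\;\ge\;\sum_{j\in\mathcal{J}}\sum_{s\in\widehat{\mathcal{T}}:\,s\ge t_i} p_j x'_{js}\;\ge\;D(t_i)\;\ge\;D(t),
\]
where the middle inequality is the demand constraint of ($\text{IP}'$) at $t_i$ and the last uses that $D(t)=T-t+1$ is nonincreasing in $t$. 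This establishes feasibility of $x$ for (IP), and the term-by-term cost identity above then gives the claimed equality of objective values.

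I do not expect a genuine obstacle here; the only points needing a moment of care are the bookkeeping for the final interval (ensuring its right endpoint $T$ is the one used and lies in $\mathcal{T}$) and the simple but essential observation that delaying a job's completion time can only help a covering constraint, never violate one. Combined with the preceding lemma, this shows that replacing (IP) by the interval-indexed formulation ($\text{IP}'$) changes the optimal value by at most a factor of $1+\epsilon$.
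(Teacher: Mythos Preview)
Your proof is correct and follows essentially the same approach as the paper: map $x'_{jt_i}=1$ to $x_{j,t_{i+1}-1}=1$, use the defining identity $f'_j(t_i)=f_j(t_{i+1}-1)$ for the objective, and verify the demand constraints via $D(t_i)\ge D(t)$ for $t\in[t_i,t_{i+1}-1]$. You are in fact more explicit than the paper in handling the final interval and in spelling out the chain of inequalities for an arbitrary $t\in\mathcal{T}$ rather than just $t\in\widehat{\mathcal{T}}$.
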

 \begin{proof}
 Suppose $x'_{jt}=1$, where $t= t_i$; then we construct a solution to (IP) by setting $x_{j,t_{i+1}-1}=1$. Notice
 that the time $t_{i+1}-1$ is the right endpoint to the interval $[t_i, t_{i+1}-1]$.  By construction, $f_j(t_{i+1}-1) = f'_j(t_i)$; hence, the cost of solution $x$ and $x'$ coincide.  To check its feasibility, it suffices to see that the constraint corresponding to $D(t_i)$ is satisfied.  This uses the fact that within the interval
 $[t_i, t_{i+1}-1]$, $D(t)$ is largest at $t_i$ and that the constraint corresponding to
 $D(t)$ contains all variables $x_{js}$ with a time index $s$ such that $s \ge t$.
 \end{proof}

 Using the two lemmas above, we see that running the primal-dual algorithm using the LP relaxation of ($\text{IP}'$) strengthened by the knapsack-cover inequalities gives us a
 $4\,(1+\epsilon)$-approximation algorithm for the scheduling problem $1||\sum f_j$. Hence we have the following result:

 \begin{theorem} \label{thm:poly} For each $\epsilon > 0$,
 there is a $(4+\epsilon)$-approximation algorithm for the scheduling problem $1||\sum f_j$.
 \end{theorem}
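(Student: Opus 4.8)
The plan is to assemble Theorem~\ref{thm:poly} directly from the three lemmas relating $(\text{IP})$, $(\text{IP}')$, and the primal-dual algorithm, together with the pseudo-polynomial running time established in Section~\ref{sec:pseudopoly}. First I would observe that $(\text{IP}')$ is an instance of exactly the same type as $(\text{IP})$ — a min-sum single-machine scheduling problem over the restricted time set $\widehat{\mathcal{T}}$ with nondecreasing cost functions $f'_j$ — so the entire machinery of Section~\ref{sec:pseudopoly} applies verbatim: the knapsack-cover LP relaxation of $(\text{IP}')$, its dual, and the \textsc{primal-dual} algorithm all carry over, and by the main theorem the algorithm returns a schedule whose cost is at most four times the optimal value of the LP relaxation of $(\text{IP}')$, hence at most four times the optimal value of $(\text{IP}')$ itself.

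Next I would chain the approximation factors. Let $\opt$ denote the optimal value of $(\text{IP})$, which equals the optimal cost of the scheduling instance $1||\sum f_j$ by the validity argument given for $(\text{IP})$. By the lemma converting $(\text{IP})$-solutions to $(\text{IP}')$-solutions, the optimal value of $(\text{IP}')$ is at most $(1+\epsilon)\opt$. Running \textsc{primal-dual} on the knapsack-cover relaxation of $(\text{IP}')$ produces an integral feasible $x'$ and a feasible dual $y$ with (cost of $x'$) $\le 4 \sum D(t,A) y(t,A) \le 4 \cdot (\text{optimal value of } (\text{IP}'))$. By the lemma converting $(\text{IP}')$-solutions back to $(\text{IP})$-solutions, $x'$ yields a feasible solution $x$ to $(\text{IP})$ of the same cost, and then Lemma~\ref{eq:edd} (applied to the instance over $\mathcal{T}$) turns $x$ into an actual EDD schedule of cost no more than that of $x$. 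Composing these bounds gives a schedule of cost at most $4(1+\epsilon)\opt$. Replacing $\epsilon$ by $\epsilon/4$ at the outset turns $4(1+\epsilon)$ into $4+\epsilon$, which is the claimed guarantee.

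Finally I would address running time, since the whole point of the interval-indexed reformulation is polynomiality. The preceding lemma shows $|\widehat{\mathcal{T}}| = O(n\log nPW)$, a polynomial in the input size (the cost functions are part of the input, so $\log W$ and $\log P$ are polynomially bounded). Since the pseudo-polynomial bound for \textsc{primal-dual} is polynomial in $n$ and the number of time indices, running it over $\widehat{\mathcal{T}}$ instead of $\mathcal{T}$ takes time polynomial in $n$ and $|\widehat{\mathcal{T}}|$, hence polynomial in the input size and $1/\epsilon$. Building the partition $\mathcal{I}$, the functions $f'_j$, and extracting the final schedule are all easily done in polynomial time. Collecting the cost bound and the running-time bound yields the theorem.

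I do not expect a serious obstacle here: the only things to be careful about are (i) checking that the $(\text{IP}')$ instance genuinely satisfies the hypotheses under which Section~\ref{sec:pseudopoly} was proved — in particular that $f'_j$ remains nondecreasing (it does, as a composition of a shift/restriction of the nondecreasing $f_j$) and integer-valued (which can be arranged, or the analysis tolerates rational costs), and that the demands $D(t)$ are defined consistently over $\widehat{\mathcal{T}}$ — and (ii) making the $\epsilon$-rescaling explicit so that the stated bound is $4+\epsilon$ rather than $4(1+\epsilon)$. Both are routine; the substance of the argument has already been done in the lemmas and in the main theorem of Section~\ref{sec:pseudopoly}.
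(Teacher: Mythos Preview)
Your proposal is correct and follows essentially the same approach as the paper: combine the two lemmas relating $(\text{IP})$ and $(\text{IP}')$ with the primal-dual guarantee of Section~\ref{sec:pseudopoly} applied to the interval-indexed formulation, and invoke the polynomial bound on $|\widehat{\mathcal{T}}|$ for the running time. The paper's own argument is terser (a single paragraph preceding the theorem statement), but your more explicit chaining of bounds, the $\epsilon \mapsto \epsilon/4$ rescaling, and the sanity checks on $f'_j$ are exactly the details one would fill in.
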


 \section{A local-ratio interpretation} \label{sec:local-ratio}

 In this section we cast our primal-dual 4-approximation as a local-ratio algorithm.

 We will work with due date assignment vectors $\vec{\sigma}=(\sigma_1, \ldots,
 \sigma_n) \in (\mathcal{T}\cup\{0\})^n$, where $\sigma_j=t$ means that job $j$
 has a due date of $t$. We will use the short-hand notation $(\vec{\sigma}_{-j}, s)$ to denote the assignment where $j$ is given a due date $s$ and all other jobs get their $\vec{\sigma}$ due date; that is,
 \[ (\vec{\sigma}_{-j}, s) = (\sigma_1, \ldots, {\sigma}_{j-1}, s, {\sigma}_{j+1}, \ldots, {\sigma}_n). \]

 We call an assignment $\vec{\sigma}$ \emph{feasible}, if there is a schedule of
 the jobs that meets all due dates. We say that job $j\in \mathcal{J}$
 \emph{covers} time $t$ if $\sigma_j \geq t$. The cost of $\vec{\sigma}$ under the cost function vector $\vec{g}=(g_1, \ldots, g_n)$ is defined as
 $\vec{g}(\vec{\sigma}) = \sum_{j\in \mathcal{J}}g_j(\sigma_j)$. We denote by
 $A_t^{\vec{\sigma}}=\set{j\in \mathcal{J}: \sigma_j\ge t}$, the set of jobs that cover $t$. We call
 \[ D(t, \vec{\sigma}) = D(t, A^{\vec{\sigma}}_{t})= \max \set{ T - t + 1 - p(A_t^{\vec{\sigma}}), 0} \]
 the \emph{residual demand} at time $t$ with respect to assignment $\vec{\sigma}$. And
 \[ p_j(t, \vec{\sigma}) = p_j (t, A^{\vec{\sigma}}_t) = \min \set{p_j, D(t, \vec{\sigma})}\]
 the \emph{truncated processing time} of $j$ with respect to $t$ and
 $\vec{\sigma}$.

 At a very high level, the algorithm, which we call {\sc local-ratio},  works as
 follows: We start by assigning a due date of $0$ to all jobs; then we
 iteratively increase the due dates until the assignment is feasible; finally,
 we try to undo each increase in reverse order as long as it preserves
 feasibility.

 In the analysis, we will argue that the due date assignment that the algorithm
 ultimately returns is feasible and that the cost of any schedule that meets these
 due dates is a 4-approximation. Together with Lemma~\ref{eq:edd} this implies
 the main result in this section.

 \begin{theorem}
     \label{thm:4-approx}
     Algorithm {\sc local-ratio} is a pseudo-polynomial time {4-}ap\-prox\-imation algorithm for $1||\sum f_{j}$.
 \end{theorem}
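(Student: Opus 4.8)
The plan is to recast the performance analysis as an application of the classical Local Ratio Theorem, using the cost-vector decomposition implicit in the iterations of {\sc local-ratio}. Recall that in iteration $k$ the algorithm holds a residual cost vector $\vec f^{k}$ (with $\vec f^{1}=\vec f$) and an assignment $\vec\sigma^{k}$, picks $t^{k}=\mathrm{argmax}_{t}D(t,\vec\sigma^{k})$ (breaking ties toward the largest index), and, if $D(t^{k},\vec\sigma^{k})>0$, chooses the largest $\delta_{k}\ge 0$ for which $f^{k}_{j}(s)-\delta_{k}\,p_{j}(t^{k},\vec\sigma^{k})\ge 0$ for all $j\notin A^{\vec\sigma^{k}}_{t^{k}}$ and all $s\ge t^{k}$, updates $f^{k+1}_{j}(s)=f^{k}_{j}(s)-\delta_{k}\,p_{j}(t^{k},\vec\sigma^{k})$ for those pairs (and $f^{k+1}_{j}=f^{k}_{j}$ otherwise), and raises $\sigma_{j}$ to the largest $t\ge t^{k}$ at which some $f^{k+1}_{j}(t)$ just reached $0$; the reverse-delete pass is the pruning phase. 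Putting $g^{k}_{j}(s):=\delta_{k}\,p_{j}(t^{k},\vec\sigma^{k})$ when $j\notin A^{\vec\sigma^{k}}_{t^{k}}$ and $s\ge t^{k}$, and $g^{k}_{j}(s):=0$ otherwise, we get $f_{j}=\sum_{k}g^{k}_{j}+f^{\mathrm{fin}}_{j}$, a decomposition into nonnegative cost vectors $\vec g^{1},\vec g^{2},\dots$ and a nonnegative remainder $\vec f^{\mathrm{fin}}$ (nonnegativity of each $f^{k}_{j}$, hence of $\vec f^{\mathrm{fin}}$, is precisely what the maximal choice of $\delta_{k}$ guarantees). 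With $\delta_{k}$ playing the role of $y(t^{k},A^{k}_{t^{k}})$, the execution of {\sc local-ratio} coincides step for step with the primal-dual algorithm of Section~\ref{sec:pseudopoly}, so its structural lemmas apply; the family of ``feasible solutions'' throughout is the set of feasible due-date assignments.

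Next I would record the structural facts. First, the returned $\vec\sigma$ is feasible, i.e.\ $D(t,\vec\sigma)=0$ for all $t$, and each job gets exactly one due date: these are Lemmas~\ref{lem:unique_duedate} and~\ref{lem:feasible} transported into the due-date language (the growing phase drives all residual demands to $0$, pruning keeps them there, and $A^{\vec\sigma}_{1}=\mathcal{J}$ forces every job to retain a due date). Feasibility of $\vec\sigma$ together with Lemma~\ref{eq:edd} then shows that the EDD schedule output by the algorithm—indeed any schedule meeting $\vec\sigma$, since the $f_{j}$ are nondecreasing—costs at most $\vec f(\vec\sigma)=\sum_{j}f_{j}(\sigma_{j})$, so it remains to bound $\vec f(\vec\sigma)$ by $4\cdot\mathrm{OPT}$. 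Second, $f^{\mathrm{fin}}_{j}(\sigma_{j})=0$ for every $j$: a due date is only ever set at a time $t$ where the residual cost just reached $0$, and no later iteration $k'$ can increase $f_{j}$ at $t$ (if $t^{k'}\le t$ then $j$ already covers $t^{k'}$, so $g^{k'}_{j}\equiv 0$ on it; if $t^{k'}>t$ the update touches only times $\ge t^{k'}>t$), and this survives pruning, which only reverts $\sigma_{j}$ to an earlier time that was itself chosen as a zero of the residual cost. Hence $\vec\sigma$ is optimal (cost $0$) for $\vec f^{\mathrm{fin}}$.

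The crux is to show that $\vec\sigma$ is a $4$-approximation with respect to each $\vec g^{k}$, i.e.\ $\vec g^{k}(\vec\sigma)\le 4\,\vec g^{k}(\vec\tau)$ for every feasible $\vec\tau$. For the upper bound, $\vec g^{k}(\vec\sigma)=\delta_{k}\sum_{j\in A^{\vec\sigma}_{t^{k}}\setminus A^{\vec\sigma^{k}}_{t^{k}}}p_{j}(t^{k},\vec\sigma^{k})\le 4\,\delta_{k}\,D(t^{k},\vec\sigma^{k})$ by inequality~\eqref{eq:at-most-4} in the proof of Lemma~\ref{lem:bound}, whose argument uses only the unchanged growing/pruning dynamics. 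For the lower bound, $\vec g^{k}(\vec\tau)=\delta_{k}\sum_{j\in A^{\vec\tau}_{t^{k}}\setminus A^{\vec\sigma^{k}}_{t^{k}}}p_{j}(t^{k},\vec\sigma^{k})$; feasibility of $\vec\tau$ gives $p(A^{\vec\tau}_{t^{k}})\ge D(t^{k})$, hence $p\big(A^{\vec\tau}_{t^{k}}\setminus A^{\vec\sigma^{k}}_{t^{k}}\big)\ge D(t^{k})-p(A^{\vec\sigma^{k}}_{t^{k}})=D(t^{k},\vec\sigma^{k})$, and the one-line knapsack-cover observation (either some job in this set already has $p_{j}\ge D(t^{k},\vec\sigma^{k})$, or every truncation $p_{j}(t^{k},\vec\sigma^{k})$ equals $p_{j}$) yields $\sum_{j\in A^{\vec\tau}_{t^{k}}\setminus A^{\vec\sigma^{k}}_{t^{k}}}p_{j}(t^{k},\vec\sigma^{k})\ge D(t^{k},\vec\sigma^{k})$; combining, $\vec g^{k}(\vec\sigma)\le 4\,\delta_{k}\,D(t^{k},\vec\sigma^{k})\le 4\,\vec g^{k}(\vec\tau)$. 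The Local Ratio Theorem then closes the argument: for any feasible $\vec\tau$, $\vec f(\vec\tau)=\sum_{k}\vec g^{k}(\vec\tau)+\vec f^{\mathrm{fin}}(\vec\tau)\ge\sum_{k}\vec g^{k}(\vec\tau)\ge\tfrac14\sum_{k}\vec g^{k}(\vec\sigma)=\tfrac14\,\vec f(\vec\sigma)$, using $\vec f^{\mathrm{fin}}(\vec\sigma)=0$; taking $\vec\tau$ to be the completion-time vector of an optimal schedule gives $\vec f(\vec\sigma)\le 4\cdot\mathrm{OPT}$. Pseudo-polynomiality follows as for the primal-dual algorithm: each iteration raises some $\sigma_{j}$ to a strictly larger value, $\sigma_{j}$ ranges over at most $T$ values, so there are $O(nT)$ iterations. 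I expect the main obstacle to be the bookkeeping that certifies the run of {\sc local-ratio} mirrors the primal-dual run step for step—so that Lemmas~\ref{lem:unique_duedate} and~\ref{lem:feasible} and inequality~\eqref{eq:at-most-4} genuinely transfer—and that $f^{\mathrm{fin}}_{j}(\sigma_{j})=0$ survives reverse-delete pruning; the Local Ratio computation itself is short.
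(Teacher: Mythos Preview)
Your proposal is correct and rests on the same two pillars as the paper's proof: the key inequality $\sum_{j}p_j(t^k,\vec\sigma^k)\le 4\,D(t^k,\vec\sigma^k)$ for the jobs the algorithm ultimately keeps covering $t^k$, and the knapsack-cover lower bound $\vec g^k(\vec\tau)\ge\delta_k\,D(t^k,\vec\sigma^k)$ for every feasible $\vec\tau$, combined via the Local Ratio Theorem. The organization differs in two respects. First, you apply the Local Ratio Theorem in one shot through the flat decomposition $\vec f=\sum_k\vec g^k+\vec f^{\mathrm{fin}}$, whereas the paper argues by induction on the recursion depth (Lemma~\ref{lem:4-correct}), peeling off one $\alpha\cdot\vec{\widehat g}$ per step; these are equivalent presentations of the same idea. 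Second, and more substantively, you import inequality~\eqref{eq:at-most-4} and Lemmas~\ref{lem:unique_duedate}--\ref{lem:feasible} from Section~\ref{sec:pseudopoly} by asserting that {\sc local-ratio} and the primal-dual algorithm coincide step for step; the paper instead gives a self-contained argument, recording the needed monotonicity properties in Lemma~\ref{lem:lr-cs-prop} and reproving the $4D$ bound directly in the due-date language as Lemma~\ref{lem:local-argument}. Your route is more economical provided the equivalence holds, but---as you yourself flag---checking that the primal-dual reverse delete (with its ``$j\in A_{t+1}$'' shortcut for redundant assignments) and the local-ratio feasibility test on $(\vec{\widetilde\rho}_{-j},\sigma_j)$ always agree is exactly the bookkeeping the paper avoids by reproving. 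Your claim that $f^{\mathrm{fin}}_j(\sigma_j)=0$ at the returned $\vec\sigma$ is the counterpart of Lemma~\ref{lem:lr-cs-prop}(iv): the final $\sigma_j$ equals $\sigma^{(i)}_j$ for some call $i$, where $g^{(i)}_j(\sigma^{(i)}_j)=0$, and $\vec f^{\mathrm{fin}}\le\vec g^{(i)}$.
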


 \begin{figure}[t]
   \begin{myalgorithm}[11cm]{\sc local-ratio $(\vec{\sigma}, \vec{g})$ \label{algo:lrcs}}
     \IF {$\vec{\sigma}$ is feasible} 
     \STATE $\vec{\rho}$ = $\vec{\sigma}$
     \ELSE
     \STATE  $t^*= \mathrm{argmax}_{t\in\mathcal{T}} D(t,\vec{\sigma})$  \COMMENT{break ties arbitrarily}
     \STATE For each $i \in \mathcal{J}$ let \(
     \widehat{g}_i(t)=\begin{cases}
              p_i(t^*,\vec{\sigma}) & \text{if } \sigma_i < t^* \leq t, \\
              0 & \text{otherwise}
             \end{cases}
     \)
     \STATE Set $\vec{\widetilde{g}} = \vec{g} - \alpha \cdot \vec{\widehat{g}}$ where $\alpha$ is the largest value such that $\vec{\widetilde{g}} \geq 0$
     \STATE Let $j$ and $s$ be such that 
     \( \widetilde{g}_j(s) = 0 \text{ and } \widehat{g}_j(s) > 0 \)
     \STATE $\vec{\widetilde{\sigma}} = (\vec{\sigma}_{-j}, s)$
     \STATE $\vec{\widetilde{\rho}}$ = {\sc local-ratio}$(\vec{\widetilde{\sigma}}, \vec{\widetilde{g}})$
     \IF {$(\vec{\widetilde{\rho}}_{-j}, \sigma_j)$ is feasible} 
     \STATE $\vec{\rho} = (\vec{\widetilde{\rho}}_{-j}, \sigma_j)$
     \ELSE
     \STATE $\vec{\rho} = \vec{\widetilde{\rho}}$
     \ENDIF
     \ENDIF
     \RETURN $\vec{\rho}$
   \end{myalgorithm}
 \end{figure}

 We now describe the algorithm in more detail. Then we prove that is a
 4-approximation. For reference, its pseudo-code is given in
 Algorithm~\ref{algo:lrcs}.

 \subsection{Formal description of the algorithm}

 The algorithm is recursive. It takes as input an assignment vector
 $\vec{\sigma}$ and a cost function vector $\vec{g}$, and returns a feasible
 assignment $\vec{\rho}$. Initially, the algorithm is called on the trivial
 assignment $(0, \ldots, 0)$ and the instance cost function vector $(f_1,
 \ldots, f_n)$. As the algorithm progresses, both vectors are modified. We
 assume, without loss of generality, that $f_j(0) = 0$ for all $j \in
 \mathcal{J}$.

 First, the algorithm checks if the input assignment $\vec{\sigma}$ is feasible.
 If that is the case, it returns $\vec{\rho} = \vec{\sigma}$. Otherwise,
 it decomposes the input vector function $\vec{g}$ into two cost function
 vectors $\vec{\widetilde{g}}$ and $\vec{\widehat{g}}$ as follows
 \[ \vec{g} = \vec{\widetilde{g}} + \alpha \cdot \vec{\widehat{g}}, \]
 where $\alpha$ is the largest value such that $\vec{\widetilde{g}} \geq \vec{0}$ 
 (where by $\vec{g} = \vec{\widetilde{g}} + \alpha \cdot \vec{\widehat{g}}$, we mean $g_j(t) = \widetilde{g}_j(t) + \alpha \cdot \widehat{g}_j(t)$ for all $t \in \mathcal{T}$ and $j \in \mathcal{J}$, and by $\vec{\widetilde{g}} \geq \vec{0}$, we mean  $\widetilde{g}_j(t) \geq 0$ for all $j \in \mathcal{J}$, $t \in \mathcal{T}$), and $\vec{\widehat{g}}$ will be specified later.

 It selects a job $j$ and a time $s$ such that $\widehat{g}_j(s) > 0$ and
 $\widetilde{g}_j(s) = 0$, and builds a new assignment
 $\vec{\widetilde{\sigma}}=(\vec{\sigma}_{-j}, s)$ thus increasing the due date
 of $j$ to $s$ while keeping the remaining due dates fixed. It then makes a
 recursive call \mbox{\sc local-ratio}$(\vec{\widetilde{g}},
 \vec{\widetilde{\sigma}})$, which returns a feasible assignment
 $\vec{\widetilde{\rho}}$. Finally, it tests the feasibility of reducing the
 deadline of job $j$ in $\vec{\widetilde{\rho}}$ back to $\sigma_j$. If the
 resulting assignment is still feasible, it returns that; otherwise, it returns
 $\vec{\widetilde{\rho}}$.

 The only part that remains to be specified is how to decompose the cost
 function vector. Let $t^*$ be a time slot with maximum residual unsatisfied
 demand with respect to $\vec{\sigma}$:
 \[t^*\in
 \mathrm{argmax}_{t\in\mathcal{T}} D(t, \vec{\sigma}).\] 
 The algorithm creates, for each job $i \in \mathcal{J}$, a model cost function
 \[
  \widehat{g}_i(t)=\begin{cases}
                p_i(t^*,\vec{\sigma}) & \text{if } \sigma_i < t^* \leq t, \\
                0 & \text{otherwise}. \\
               \end{cases}
 \]
 and chooses $\alpha$ to be the largest value such that
 \[
  \widetilde{g}_i(t) = g_i(t) - \alpha \widehat{g}_i(t)\ge 0 \qquad \text{for all } i\in \mathcal{J} \text{ and } t \in \mathcal{T}.
 \]
 In the primal-dual interpretation of the algorithm, $\alpha$ is the value
 assigned to the dual variable $y(t^*,A_{t^*}^{\vec{\sigma}})$. 

 Let $(j,s)$ be a job-time pair that prevented us from increasing $\alpha$ further. In other words, let $(j,s)$ be such that $\widetilde{g}_{j}(s) = 0$ and $\widehat{g}_j(s) > 0$. Intuitively, assigning a due date of $s$ to job $j$ is free in the residual cost function $\vec{g}$ and helps cover some of the residual demand at $t^*$. This is precisely what the algorithm does: The assignment used as input for the recursive call is $\vec{\widetilde{\sigma}} = (\vec{\sigma}_{-j}, s)$.



 \subsection{Analysis}

 For a given vector $\vec{g}$ of non-negative functions, $\opt(\vec{g})$ denotes the cost of an optimal schedule with respect to these cost
 functions. We say an assignment $\vec{\rho}$ is $\beta$-approximate with
 respect to $\vec{g}$ if $\sum_{i \in \mathcal{J}} g_i(\rho_i) \leq \beta \cdot
 \opt(\vec{g})$.

 The correctness of the algorithm rests on the following lemmas.

 \begin{lemma}
   \label{lem:lr-cs-prop}
   Let $(\vec{\sigma^{(1)}}, \vec{g^{(1)}}), (\vec{\sigma^{(2)}}, \vec{g^{(2)}}), \ldots, (\vec{\sigma^{(k)}}, \vec{g^{(k)}})$ be the inputs to the successive recursive calls to {\sc local-ratio} and let $\vec{\rho^{(1)}}, \vec{\rho^{(2)}}, \ldots, \vec{\rho^{(k)}}$ be their corresponding outputs. The following properties hold:
   \begin{enumerate}[(i)]
     \item $\vec{\sigma^{(1)}} \leq \vec{\sigma^{(2)}} \leq \cdots \leq
     \vec{\sigma^{(k)}}$,
     \item $\vec{\rho^{(1)}} \leq \vec{\rho^{(2)}} \leq \cdots \leq \vec{\rho^{(k)}}$,
     \item $\vec{\sigma^{(i)}} \leq \vec{\rho^{(i)}}$ for all $i =1, \ldots, k$,
     \item $g^{(i)}_j(\sigma^{(i)}_j) = 0$ and $g^{(i)}_j$ is non-negative for all $i=1, \ldots, k$ and $j \in \mathcal{J}$.
   \end{enumerate}
 \end{lemma}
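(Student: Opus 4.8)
The plan is to establish the four items not in the order listed but in the order (i), (iv), (iii), (ii): items (i) and (iv) go by \emph{forward} induction on the index of the recursive call, whereas (iii) is cleanest by \emph{backward} induction starting from the base call $k$, and (ii) then drops out of (iii) and (i). The single structural fact driving everything is this: in a non-base call with input $(\vec{\sigma},\vec{g})$, the chosen pair $(j,s)$ has $\widehat{g}_j(s)>0$, which by the definition of $\widehat{g}$ forces $\sigma_j < t^* \le s$. Hence the recursive call's input $\vec{\sigma^{(i+1)}}=(\vec{\sigma^{(i)}}_{-j},s)$ agrees with $\vec{\sigma^{(i)}}$ in every coordinate except $j$, where it strictly increases. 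Chaining these consecutive inequalities gives (i) immediately (and, as a byproduct, shows the recursion is finite, since due dates strictly increase within the finite set $\mathcal{T}$, so the statement is meaningful).

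For (iv) I would run the forward induction. The base case $i=1$ is immediate: $\vec{g^{(1)}}=(f_1,\dots,f_n)\ge \vec 0$ and $\sigma^{(1)}_j=0$ with $f_j(0)=0$ by the standing assumption in the formal description. For the inductive step I first check that $\alpha$ in call $i$ is well defined: since $\vec{\sigma^{(i)}}$ is infeasible we have $D(t^*,\vec{\sigma^{(i)}})>0$, so $p(A^{\vec{\sigma^{(i)}}}_{t^*})<T-t^*+1\le T=\sum_j p_j$, whence some job $i_0$ has $\sigma^{(i)}_{i_0}<t^*$, and then $p_{i_0}(t^*,\vec{\sigma^{(i)}})=\min\{p_{i_0},D(t^*,\vec{\sigma^{(i)}})\}\ge 1$, so $\widehat{g}_{i_0}(T)>0$; combined with $\vec{g^{(i)}}\ge \vec 0$ (inductive hypothesis) this makes $\{\alpha\ge 0:\vec{g^{(i)}}-\alpha\vec{\widehat{g}}\ge\vec 0\}$ a nonempty bounded closed interval, so its maximum exists and $\vec{g^{(i+1)}}=\vec{g^{(i)}}-\alpha\vec{\widehat{g}}\ge \vec 0$. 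For the zero-at-its-due-date claim: in coordinate $j$, $g^{(i+1)}_j(\sigma^{(i+1)}_j)=\widetilde{g}_j(s)=0$ exactly by the choice of the pair $(j,s)$; in any coordinate $j'\ne j$, the condition $\sigma^{(i)}_{j'}<t^*\le\sigma^{(i)}_{j'}$ defining a nonzero value of $\widehat{g}_{j'}$ at $t=\sigma^{(i)}_{j'}$ is vacuous, so $\widehat{g}_{j'}(\sigma^{(i)}_{j'})=0$ and $g^{(i+1)}_{j'}(\sigma^{(i)}_{j'})=g^{(i)}_{j'}(\sigma^{(i)}_{j'})-\alpha\cdot 0=0$ by the inductive hypothesis.

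Next I would prove (iii) by backward induction on $i$. For $i=k$ the algorithm returns $\vec{\rho^{(k)}}=\vec{\sigma^{(k)}}$, so equality holds. For $i<k$, the output of call $i$ is either $\vec{\rho^{(i+1)}}$ or $(\vec{\rho^{(i+1)}}_{-j},\sigma^{(i)}_j)$. In the first case $\vec{\rho^{(i)}}=\vec{\rho^{(i+1)}}\ge\vec{\sigma^{(i+1)}}\ge\vec{\sigma^{(i)}}$ by the inductive hypothesis and (i). In the second case the $j$-coordinate reads $\rho^{(i)}_j=\sigma^{(i)}_j\ge\sigma^{(i)}_j$, while for $j'\ne j$ we have $\rho^{(i)}_{j'}=\rho^{(i+1)}_{j'}\ge\sigma^{(i+1)}_{j'}=\sigma^{(i)}_{j'}$, again by the inductive hypothesis and (i). Finally (ii): if $\vec{\rho^{(i)}}=\vec{\rho^{(i+1)}}$ there is nothing to prove; if $\vec{\rho^{(i)}}=(\vec{\rho^{(i+1)}}_{-j},\sigma^{(i)}_j)$ then all coordinates except $j$ coincide with $\vec{\rho^{(i+1)}}$, and in coordinate $j$ we have $\rho^{(i)}_j=\sigma^{(i)}_j<s=\sigma^{(i+1)}_j\le\rho^{(i+1)}_j$, the last inequality being (iii) applied at index $i+1$; hence $\vec{\rho^{(i)}}\le\vec{\rho^{(i+1)}}$.

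I expect the only genuinely delicate points to be the bookkeeping of which coordinate changes in each transition (that $\vec{\sigma^{(i)}}\to\vec{\sigma^{(i+1)}}$ alters exactly coordinate $j$ upward and that $\vec{\rho^{(i+1)}}\to\vec{\rho^{(i)}}$ alters at most that coordinate downward, to $\sigma^{(i)}_j$), together with the well-definedness of $\alpha$ — which is exactly where the infeasibility of $\vec{\sigma^{(i)}}$ is needed and which also guarantees that the line choosing the pair $(j,s)$ can be executed. Everything else is a routine unwinding of the definitions in Algorithm~\ref{algo:lrcs}; the one structural subtlety is simply that the four parts must be proved in the order (i), (iv), (iii), (ii) rather than as listed, after which no circularity arises.
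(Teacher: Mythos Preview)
Your proposal is correct and follows essentially the same inductive strategy as the paper's proof: both chase the recursion forward for (i) and (iv) and backward from the base call for (iii). The one substantive difference is that you reorder the parts to prove (iii) before (ii), which lets you rigorously justify that replacing $\rho^{(i+1)}_j$ by $\sigma^{(i)}_j$ is indeed a \emph{decrease} (via $\sigma^{(i)}_j < s = \sigma^{(i+1)}_j \le \rho^{(i+1)}_j$); the paper asserts ``decreasing'' in its proof of (ii) without having established (iii) yet, so your ordering is cleaner. You also add the well-definedness check for $\alpha$ and the explicit verification that $\widehat{g}_{j'}(\sigma^{(i)}_{j'})=0$ for $j'\neq j$, both of which the paper leaves implicit. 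These are refinements rather than a different route.
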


 \begin{proof}
   The first property follows from the fact that $\vec{\sigma^{(i+1)}}$ is
   constructed by taking $\vec{\sigma^{(i)}}$ and increasing the due date of a
   single job.

   The second property follows from the fact that $\vec{\rho^{(i)}}$ is either
   $\vec{\rho^{(i+1)}}$ or it is constructed by taking $\vec{\rho^{(i+1)}}$ and
   decreasing the due date of a single job.

   The third property follows by an inductive argument. The base case is the
   base case of the recursion, where $\vec{\sigma^{(k)}} = \vec{\rho^{(k)}}$.
   For the recursive case, we need to show that $\vec{\sigma^{(i)}} \leq
   \vec{\rho^{(i)}}$, by recursive hypothesis we know that $\vec{\sigma^{(i+1)}}
   \leq \vec{\rho^{(i+1)}}$ and by the first property $\vec{\sigma^{(i)}}\leq
   \vec{\sigma^{(i+1)}}$. The algorithm either sets $\vec{\rho^{(i)}} = \vec{\rho^{(i+1)}}$, or $\vec{\rho^{(i)}}$ is constructed by taking
   $\vec{\rho^{(i+1)}}$ and decreasing the due date of some job to its old
   $\vec{\sigma^{(i)}}$ value. In both cases the property holds.

   The forth property also follows by induction. The base case is the first call
   we make to {\sc local-ratio}, which is $\vec{\sigma^{(1)}} = (0,\ldots, 0)$ and
   $\vec{g^{(1)}} = (f_1, \ldots, f_n)$, where it holds by our assumption that $f_j(0) = 0$ for all $j$. For the
   inductive case, we note that $\vec{{g}^{(i+1)}}$ is constructed by taking
   $\vec{{g}^{(i)}}$ and subtracting a scaled version of the model function
   vector, so that $\vec{0} \leq \vec{g^{(i+1)}} \leq \vec{g^{(i)}}$, and
   $\vec{\sigma^{(i+1)}}$ is constructed by taking $\vec{\sigma^{(i)}}$ and
   increasing the due date of a single job $j^{(i)}$ such that $g^{(i+1)}_{j^{(i)}} ( \sigma^{(i+1)}_{j^{(i)}}) = 0$, which
   ensures that the property holds.
 \end{proof}

 \begin{lemma}
   \label{lem:local-argument}
   Let {\sc local-ratio}$(\vec{\sigma}, \vec{g})$ be a recursive call returning $\vec{\rho}$ then 
   \begin{equation}
   \sum_{i \in \mathcal{J}\, :\, \sigma_i < t^* \leq \rho_i} p_i(t^*, \vec{\sigma})
   \leq 4 \cdot D(t^*, \vec{\sigma}).
   \end{equation}
   where $t^*$ is the value used to decompose the input cost function vector $\vec{g}$.
 \end{lemma}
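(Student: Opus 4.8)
The plan is to transcribe, almost verbatim, the combinatorial argument that proves inequality~\eqref{eq:at-most-4} in the proof of Lemma~\ref{lem:bound}, under the translation $t^{k}\leftrightarrow t^{*}$, $A^{k}_{t^{k}}\leftrightarrow A_{t^{*}}^{\vec{\sigma}}$, $\bar{A}_{t^{k}}\leftrightarrow A_{t^{*}}^{\vec{\rho}}$, and $d_{j}\leftrightarrow\rho_{j}$. Writing $S:=\{\,i\in\mathcal{J}:\sigma_{i}<t^{*}\le\rho_{i}\,\}=A_{t^{*}}^{\vec{\rho}}\setminus A_{t^{*}}^{\vec{\sigma}}$ (the last equality uses $\vec{\sigma}\le\vec{\rho}$, Lemma~\ref{lem:lr-cs-prop}(iii)), the task is to bound $\sum_{i\in S}p_{i}(t^{*},\vec{\sigma})$ by $4\,D(t^{*},\vec{\sigma})$, recalling that $p_{i}(t^{*},\vec{\sigma})\le p_{i}$ and $p_{i}(t^{*},\vec{\sigma})\le D(t^{*},\vec{\sigma})$. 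For each $j\in S$ I would set
\[
\tau_{j}:=\max\bigl\{\,t\in\mathcal{T}\ :\ p\bigl(A_{t}^{\vec{\rho}}\setminus(A_{t}^{\vec{\sigma}}\cup\{j\})\bigr)<D(t,\vec{\sigma})\,\bigr\},
\]
partition $S=H\cup L$ according to whether $\tau_{j}\ge t^{*}$ (set $H$) or $\tau_{j}<t^{*}$ (set $L$), and put $h=\mathrm{argmin}\{\tau_{j}:j\in H\}$, $\ell=\mathrm{argmax}\{\tau_{j}:j\in L\}$. Using that $t^{*}\in\mathrm{argmax}_{t\in\mathcal{T}}D(t,\vec{\sigma})$, that $\vec{\rho}$ is feasible (so $p(A_{s}^{\vec{\rho}})\ge D(s)$ for all $s$, which forces $\tau_{j}\le\rho_{j}$ for $j\in H$, exactly as the bound ``$\tau_{j}\le d_{j}$'' is obtained in Lemma~\ref{lem:bound}), and that $A_{t}^{\vec{\sigma}}$ shrinks while $A_{t}^{\vec{\rho}}$ grows as $t$ decreases, one gets $H\subseteq A_{\tau_{h}}^{\vec{\rho}}\setminus A_{\tau_{h}}^{\vec{\sigma}}$ and $L\subseteq A_{\tau_{\ell}}^{\vec{\rho}}\setminus A_{\tau_{\ell}}^{\vec{\sigma}}$, whence $p(H\setminus\{h\})<D(\tau_{h},\vec{\sigma})\le D(t^{*},\vec{\sigma})$ and $p(L\setminus\{\ell\})<D(\tau_{\ell},\vec{\sigma})\le D(t^{*},\vec{\sigma})$ just as in~\eqref{eq:H-h} and~\eqref{eq:L-ell}. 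Adding these two bounds and the two trivial bounds $p_{h}(t^{*},\vec{\sigma}),p_{\ell}(t^{*},\vec{\sigma})\le D(t^{*},\vec{\sigma})$ of~\eqref{eq:trivial} gives $\sum_{i\in S}p_{i}(t^{*},\vec{\sigma})<4\,D(t^{*},\vec{\sigma})$ (the degenerate cases $H=\emptyset$ or $L=\emptyset$ only make the sum smaller).

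The only step that does not carry over mechanically is the well-definedness of $\tau_{j}$ --- that the set above is nonempty, and in fact contains some $t>\sigma_{j}$ --- which plays the role of the ``pruning phase'' part of the proof of Lemma~\ref{lem:bound}. Let $(\vec{\sigma},\vec{g})=(\vec{\sigma^{(i)}},\vec{g^{(i)}})$ be the $i$-th call in the chain of nested recursive calls and $\vec{\rho}=\vec{\rho^{(i)}}$ its output, and fix $j\in S$, so $\sigma^{(i)}_{j}<t^{*}\le\rho^{(i)}_{j}$. By Lemma~\ref{lem:lr-cs-prop}(ii), $\rho^{(k)}_{j}\ge\rho^{(i)}_{j}\ge t^{*}$, and $\rho^{(k)}_{j}=\sigma^{(k)}_{j}$ at the base of the recursion, while by Lemma~\ref{lem:lr-cs-prop}(i) the sequence $\sigma^{(i)}_{j}\le\cdots\le\sigma^{(k)}_{j}$ is nondecreasing; hence there is a largest index $a$ with $i\le a<k$ and $\sigma^{(a)}_{j}<t^{*}$, and since $j$'s due date changes between levels $a$ and $a+1$ we must have $j=j^{(a)}$ and $\sigma^{(a+1)}_{j}\ge t^{*}$. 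When the recursion unwinds through level $a$, the algorithm would have reset $j$'s due date to $\sigma^{(a)}_{j}$ had $(\vec{\rho^{(a+1)}}_{-j},\sigma^{(a)}_{j})$ been feasible; but this is impossible, since it would give $\rho^{(a)}_{j}=\sigma^{(a)}_{j}<t^{*}$, contradicting $\rho^{(a)}_{j}\ge\rho^{(i)}_{j}\ge t^{*}$ (Lemma~\ref{lem:lr-cs-prop}(ii)). So $(\vec{\rho^{(a+1)}}_{-j},\sigma^{(a)}_{j})$ is infeasible; combining this with the feasibility of $\vec{\rho^{(a+1)}}$ and the standard characterization that a due-date vector $\vec{\pi}$ is feasible iff $p(A_{s}^{\vec{\pi}})\ge D(s)$ for all $s$ (equivalently, the jobs due by time $s$ fit in $[0,s-1]$; cf.\ the proof of Lemma~\ref{eq:edd}), a short case check produces a time $s$ with $\sigma^{(a)}_{j}<s\le\rho^{(a+1)}_{j}$ and $p\bigl(A_{s}^{\vec{\rho^{(a+1)}}}\setminus\{j\}\bigr)<D(s)$.

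Finally I would transfer this inequality to the pair $(\vec{\sigma},\vec{\rho})$: since $i\le a+1$, Lemma~\ref{lem:lr-cs-prop}(i)--(iii) give $A_{s}^{\vec{\rho}}\subseteq A_{s}^{\vec{\rho^{(a+1)}}}$ and $A_{s}^{\vec{\sigma}}\subseteq A_{s}^{\vec{\sigma^{(a+1)}}}\subseteq A_{s}^{\vec{\rho^{(a+1)}}}$, and $j\notin A_{s}^{\vec{\sigma}}$ because $s>\sigma^{(a)}_{j}\ge\sigma^{(i)}_{j}$; therefore
\[
p\bigl(A_{s}^{\vec{\rho}}\setminus(A_{s}^{\vec{\sigma}}\cup\{j\})\bigr)\ \le\ p\bigl(A_{s}^{\vec{\rho^{(a+1)}}}\setminus\{j\}\bigr)-p(A_{s}^{\vec{\sigma}})\ <\ D(s)-p(A_{s}^{\vec{\sigma}})\ \le\ D(s,\vec{\sigma}),
\]
so $\tau_{j}\ge s>\sigma_{j}$, which is exactly the statement that replaces ``$\tau_{j}>r$'' in the proof of Lemma~\ref{lem:bound}. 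I expect this bookkeeping --- locating the ``failed un-pruning'' level $a$ and keeping the three families of sets $A_{s}^{\vec{\sigma}}$, $A_{s}^{\vec{\rho}}$, $A_{s}^{\vec{\rho^{(a+1)}}}$ ordered in the right direction --- to be the main (and essentially only) obstacle; once $\tau_{j}$ is in hand, the $H/L$ partition argument of the first paragraph is an exact copy of the corresponding part of the proof of Lemma~\ref{lem:bound}.
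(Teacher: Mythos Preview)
Your proposal is correct and follows essentially the same approach as the paper's own proof of Lemma~\ref{lem:local-argument}: both locate, for each job $i$ in the set, a witness time coming from an ``un-pruning that failed'' at some recursion level, partition the jobs according to whether this witness lies below or above $t^{*}$, remove one extremal job from each side, and bound the rest using the maximality of $D(t^{*},\vec{\sigma})$. The only cosmetic differences are that the paper works with \emph{any} witness $t_{i}$ satisfying $D(t_{i},(\vec{\rho}_{-i},\sigma_{i}))>0$ (rather than the maximal $\tau_{j}$ you import from Lemma~\ref{lem:bound}) and locates the recursion level where $j$'s due date is first increased beyond $\sigma_{j}$ (rather than your level $a$ where it first reaches $t^{*}$); both choices yield the same chain of inequalities.
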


 \begin{proof} Our goal is to bound the $p_i(t^*, \vec{\sigma})$ value of jobs in
   \[X = \sset{ i \in \mathcal{J}}{\sigma_i < t^* \leq \rho_i}.\] 

   Notice that the algorithm increases the due date of these jobs in this or a
   later recursive call. Furthermore, and more important to us, the algorithm
   decides not to undo the increase. For each $i \in X$, consider the call {\sc
   lr-cs}$(\vec{\sigma'}, \vec{g'})$ when we first increased the due date of $i$
   beyond $\sigma_i$. Let $\vec{{\rho}'}$ be the assignment returned by the
   call. Notice that $\rho'_i > \sigma_i$ and that $(\vec{\rho'}\!_{-i},
   \sigma_i)$ is not feasible---otherwise we would have undone the due date
   increase. By Lemma~\ref{lem:lr-cs-prop}, we know that $\vec{\rho} \leq
   \vec{\rho'}$, and so we can conclude that $(\vec{\rho}_{-i}, \sigma_i)$ is not
   feasible either. Let $t_i$ be a time with positive
   residual demand in this unfeasible assignment: 
   \[ D(t_i, (\vec{{\rho}}_{-i}, \sigma_i)) > 0. \]
   Note that $\sigma_i < t_i \leq \rho_i$, otherwise $\vec{\rho}$ would not be feasible,
   contradicting Lemma~\ref{lem:lr-cs-prop}.
   
   We partition $X$ into two subsets
   \begin{equation*}
     L = \set{ i\in X: t_i \le t^*} \text{ and } R = \set{ i\in X: t_i > t^* },
   \end{equation*}
   and we let $t_L = \max \sset{t_i}{i \in L}$ and $i_L$ be a job attaining this
   value. Similarly, we let $t_R = \min \sset{t_i}{i \in R}$ and $i_R$ be a job attaining this
   value.

   We will bound the contribution of each of these sets separately. Our goal
   will be to prove that
   \begin{align}
     \label{eq:Lmax}
      \sum_{i \in L - i_L} p_i & \le D(t^*, \vec{\sigma}), \text{ and }\\
      \label{eq:Rmax}
     \sum_{i \in R - i_R} p_i & \le 
     D(t^*,\vec{\sigma}).
   \end{align}

   Let us argue \eqref{eq:Lmax} first. Since $D\left(t_L, (\vec{\rho}_{-i_L},
   \sigma_{i_L})\right) > 0$, it follows that
   \begin{align*}
     \sum_{i \in \mathcal{J} - i_L : \rho_i \geq t_L} p_i & < T-t_L + 1 \\
     \sum_{i \in \mathcal{J}  : \sigma_i \geq t_L} p_i + \sum_{i \in \mathcal{J} - i_L : \rho_i \geq t_L > \sigma_i} p_i & < T- t_L +1 \\
     \sum_{i \in \mathcal{J} - i_L  : \rho_i \geq t_L > \sigma_i} p_i & < D(t_L, \vec{\sigma})
   \end{align*}

   Recall that $\sigma_i < t_i \leq \rho_i$ for all $i \in X$ and that $t_i
  \leq t_L \leq t^*$ for all $i \in L$. It follows that the sum on the left-hand side of the last inequality contains all jobs in $L - i_L$. Finally, we note that
  $D(t_L, \vec{\sigma} )
   \leq D(t^*, \vec{\sigma})$ due to the way {\sc local-ratio} chooses~$t^*$, which gives us~\eqref{eq:Lmax}.

   Now let us argue \eqref{eq:Rmax}. Since $D\left(t_R, (\vec{\rho}_{-i_R},
   \sigma_{i_R})\right) > 0$, it follows that
   \begin{align*}
     \sum_{i \in \mathcal{J} - i_R : \rho_i \geq t_R} p_i & < T-t_R + 1 \\
     \sum_{i \in \mathcal{J} : \sigma_i \geq t_R} p_i + \sum_{i \in \mathcal{J} - i_R : \rho_i \geq t_R > \sigma_i} p_i & < T- t_R +1 \\
     \sum_{i \in \mathcal{J} - i_R  : \rho_i \geq t_R > \sigma_i} p_i & < D(t_R, \vec{\sigma}).
   \end{align*}

   Recall that $\sigma_i < t^*$ for all $i \in X$ and that $t^* < t_R \leq t_i \leq \rho_i$ for all $i \in R$. It follows that the sum in the left-hand side of the last inequality
   contains all jobs in $R- i_R$. Finally, we note that $D(t_R, \vec{\sigma} )
   \leq D(t^*, \vec{\sigma})$ due to the way {\sc local-ratio} chooses $t^*$, which
   gives us~\eqref{eq:Rmax}.

   Finally, we note that $p_i(t^*, \vec{\sigma}) \leq D(t^*, \vec{\sigma})$ for all $i \in \mathcal{J}$. Therefore,
   \begin{align*}
   \sum_{i \in X} p_i(t^*, \vec{\sigma}) 
   &\leq   \sum_{i \in L-i_L} p_i + p_{i_L}(t^*, \vec{\sigma}) + 
   \sum_{i \in R-i_R} p_i + p_{i_R}(t^*, \vec{\sigma}) \\
   & \leq 4 \cdot D(t^*, \vec{\sigma}), 
   \end{align*} 
   which finishes the proof.
 \end{proof}

 We are ready to prove the performance guarantee of the algorithm.

 \begin{lemma}
   \label{lem:4-correct}
   Let {\sc lr-sc}$(\vec{\sigma}, \vec{g})$ be a recursive call and $\vec{\rho}$ be its output. Then $\vec{\rho}$ is a feasible 4-approximation w.r.t.\@ $\vec{g}$.
 \end{lemma}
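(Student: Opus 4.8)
The plan is to prove both halves of the statement---feasibility of $\vec\rho$ and the $4$-approximation guarantee---simultaneously by induction on the depth of the recursion, since the recursive structure of {\sc local-ratio} forces these two facts to be intertwined. The base case is the case in which $\vec\sigma$ is already feasible: then $\vec\rho = \vec\sigma$, which is trivially feasible, and by Lemma~\ref{lem:lr-cs-prop}(iv) we have $g_j(\sigma_j) = 0$ for every $j$, so $\vec g(\vec\rho) = 0 \le 4\cdot\opt(\vec g)$. For the inductive step, assume $\vec\sigma$ is infeasible, so the algorithm performs the decomposition $\vec g = \vec{\widetilde g} + \alpha\cdot\vec{\widehat g}$, makes the recursive call on $(\vec{\widetilde\sigma},\vec{\widetilde g})$ returning $\vec{\widetilde\rho}$, and then either keeps $\vec\rho = \vec{\widetilde\rho}$ or sets $\vec\rho = (\vec{\widetilde\rho}_{-j},\sigma_j)$ when the latter is feasible.

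\emph{Feasibility.} By the inductive hypothesis applied to the recursive call, $\vec{\widetilde\rho}$ is feasible. If the algorithm sets $\vec\rho = (\vec{\widetilde\rho}_{-j},\sigma_j)$, it does so only after explicitly checking feasibility, so $\vec\rho$ is feasible in that branch; in the other branch $\vec\rho = \vec{\widetilde\rho}$ is feasible. Hence $\vec\rho$ is feasible in all cases.

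\emph{Approximation guarantee.} I would use the standard local-ratio argument. Let $\vec\rho^{*}$ be an optimal feasible assignment with respect to $\vec g$; I first want to show that $\vec{\widehat g}(\vec\rho^{*}) \ge D(t^{*},\vec\sigma)$, i.e.\ that any feasible assignment pays at least $D(t^{*},\vec\sigma)$ in the model cost function. This follows because $\widehat g_i$ charges $p_i(t^{*},\vec\sigma)$ precisely to those jobs $i$ with $\sigma_i < t^{*}\le \rho^{*}_i$, and feasibility of $\vec\rho^{*}$ forces $\sum_{i:\ \sigma_i<t^{*}\le \rho^{*}_i} p_i(t^{*},\vec\sigma)\ \ge\ D(t^{*},\vec\sigma)$ (the jobs already covering $t^{*}$ under $\vec\sigma$ contribute $p(A_{t^{*}}^{\vec\sigma})$, and the rest must make up the residual demand, with each contribution capped by $D(t^{*},\vec\sigma)$ by the definition of $p_i(t^{*},\vec\sigma)$, which is exactly what $\widehat g_i$ records). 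Combined with the fact that $\vec{\widehat g}$ is supported on assignments $\ge t^{*}$ while $\widehat g_i \le p_i(t^{*},\vec\sigma)$, this gives $\opt(\vec{\widehat g}) = D(t^{*},\vec\sigma)$, and so $\opt(\vec g)\ge \alpha\cdot\opt(\vec{\widehat g}) + \opt(\vec{\widetilde g})\ge \alpha\cdot D(t^{*},\vec\sigma)+\opt(\vec{\widetilde g})$, using $\vec g = \vec{\widetilde g}+\alpha\vec{\widehat g}$ and superadditivity of $\opt$ over a sum of cost vectors. On the other side, $\vec g(\vec\rho) = \vec{\widetilde g}(\vec\rho) + \alpha\cdot\vec{\widehat g}(\vec\rho)$; since $\vec\rho \le \vec{\widetilde\rho}$ componentwise (Lemma~\ref{lem:lr-cs-prop}(ii), noting $\vec\rho$ is either $\vec{\widetilde\rho}$ or a coordinate-decrease of it) and $\vec{\widetilde g}$ is nondecreasing and nonnegative, $\vec{\widetilde g}(\vec\rho)\le \vec{\widetilde g}(\vec{\widetilde\rho})\le 4\cdot\opt(\vec{\widetilde g})$ by the inductive hypothesis. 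For the model-function term, $\vec{\widehat g}(\vec\rho) = \sum_{i:\ \sigma_i<t^{*}\le\rho_i} p_i(t^{*},\vec\sigma)$, which is at most $4\cdot D(t^{*},\vec\sigma)$ by Lemma~\ref{lem:local-argument}. Adding the two bounds yields $\vec g(\vec\rho)\le 4\cdot\opt(\vec{\widetilde g}) + 4\alpha\cdot D(t^{*},\vec\sigma)\le 4\cdot\opt(\vec g)$, completing the induction.

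\emph{Main obstacle.} The delicate point is not the bookkeeping but the correct pairing of the two ``$4$''s: the recursive/inductive bound on $\vec{\widetilde g}(\vec\rho)$ relies on the monotonicity facts in Lemma~\ref{lem:lr-cs-prop} to pass from $\vec\rho$ back up to $\vec{\widetilde\rho}$, while the bound on $\vec{\widehat g}(\vec\rho)$ is exactly the content of Lemma~\ref{lem:local-argument}, whose hypothesis ``$\sigma_i < t^{*}\le\rho_i$'' must be matched verbatim to the support of $\widehat g_i$ evaluated at $\vec\rho$. I would take care to check that $\widehat g_i(\rho_i) = p_i(t^{*},\vec\sigma)$ for $i$ in that set and $0$ otherwise, so that $\vec{\widehat g}(\vec\rho)$ is literally the sum bounded in Lemma~\ref{lem:local-argument}; and that $\alpha\ge 0$ so that the superadditivity/splitting of $\opt$ goes through in the right direction. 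Everything else is routine.
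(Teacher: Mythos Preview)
Your overall structure matches the paper's proof: induction on recursion depth, feasibility via the explicit check, and the local-ratio split $\vec g(\vec\rho)=\vec{\widetilde g}(\vec\rho)+\alpha\,\vec{\widehat g}(\vec\rho)$ with Lemma~\ref{lem:local-argument} controlling the $\vec{\widehat g}$ term. The lower bound $\opt(\vec{\widehat g})\ge D(t^*,\vec\sigma)$ and the superadditivity $\opt(\vec g)\ge \opt(\vec{\widetilde g})+\alpha\,\opt(\vec{\widehat g})$ are fine (you only need the inequality $\ge$, not the equality you asserted).

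There is one genuine gap. You pass from $\vec{\widetilde g}(\vec\rho)$ to $\vec{\widetilde g}(\vec{\widetilde\rho})$ by asserting that $\vec{\widetilde g}$ is nondecreasing. This is false in general: $\widetilde g_i = g_i - \alpha\,\widehat g_i$, and $\widehat g_i$ is a step function jumping from $0$ to $p_i(t^*,\vec\sigma)$ at $t^*$, so subtracting it can create a downward jump at $t^*$. For instance, if $\sigma_i=0$, $g_i(1)=5$, $g_i(2)=6$, $t^*=2$, $p_i(t^*,\vec\sigma)=1$, and $\alpha=6$ (which is consistent with $\widetilde g\ge 0$), then $\widetilde g_i(1)=5>0=\widetilde g_i(2)$. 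So monotonicity of $\vec{\widetilde g}$ cannot be invoked.

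The fix is the one the paper uses, and it is simpler than your route. The vectors $\vec\rho$ and $\vec{\widetilde\rho}$ differ in at most the single coordinate $j$, where $\rho_j=\sigma_j$. By Lemma~\ref{lem:lr-cs-prop}(iv) we have $g_j(\sigma_j)=0$, and since $\sigma_j<t^*$ we have $\widehat g_j(\sigma_j)=0$, hence $\widetilde g_j(\sigma_j)=0$. Therefore $\vec{\widetilde g}(\vec\rho)=\vec{\widetilde g}(\vec{\widetilde\rho})-\widetilde g_j(\widetilde\rho_j)+0\le \vec{\widetilde g}(\vec{\widetilde\rho})$, using only nonnegativity of $\vec{\widetilde g}$. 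With this correction your argument goes through and coincides with the paper's.
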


 \begin{proof}
   The proof is by induction. The base case corresponds to the base case of the recursion, where we get as
   input a feasible assignment $\vec{\sigma}$, and so $\vec{\rho} = \vec{\sigma}$. From Lemma~\ref{lem:lr-cs-prop} we know that $g_i(\sigma_i) =
   0$ for all $i \in \mathcal{J}$, and that the cost functions are non-negative.
   Therefore, the cost of $\vec{\rho}$ is optimal since
   \[  \sum_{i \in \mathcal{J}} g_i(\rho_i) = 0. \]

   For the inductive case, the cost function vector $\vec{g}$ is decomposed into
   $\vec{\widetilde{g}} + \alpha \cdot \vec{\widehat{g}}$. Let $(j,s)$ be the
   pair used to define $\vec{\widetilde{\sigma}} = (\vec{\sigma}_{-j}, s)$. Let
   $\vec{\widetilde{\rho}}$ be the assignment returned by the recursive call. By
   inductive hypothesis, we know that $\vec{\widetilde{\rho}}$ is feasible and
   4-approximate w.r.t.\@ $\vec{\widetilde{g}}$.

   After the recursive call returns, we check the feasibility of
   $(\vec{\widetilde{\rho}}_{-j}, \sigma_j)$. If the vector is feasible, we
   return the modified assignment; otherwise, we
   return~$\vec{\widetilde{\rho}}$. In either case $\vec{\rho}$ is feasible.

   We claim that $\vec{\rho}$ is 4-approximate w.r.t.\@ $\vec{\widehat{g}}$. Indeed, 
   \[ \sum_{i \in \mathcal{J}} \widehat{g}_i(\rho_i) = \sum_{i \in \mathcal{J}: \sigma_i < t^*\leq \rho_i} p_i(t^*, \vec{\sigma}) \leq 4 \cdot D(t^*,
   \vec{\sigma}) \leq 4 \cdot \opt(\vec{\widehat{g}}),\] where the
   first inequality follows from Lemma~\ref{lem:local-argument} and the last
   inequality follows from the fact that the cost of any schedule under
   $\vec{\widehat{g}}$ is given by the $p_i(t^*, \vec{\sigma})$ value of jobs $i
   \in \mathcal{J}$ with $\sigma_i < t^* \leq \rho_i$, which must have a
   combined processing time of at least $D(t^*, \vec{\sigma})$ on any feasible
   schedule. Hence, $\opt(\vec{\widehat{g}}) \geq  D(t^*, \vec{\sigma})$.

   We claim that $\vec{\rho}$ is 4-approximate w.r.t.\@ $\vec{\widetilde{g}}$. Recall that $\vec{\widetilde{\rho}}$ is
   4-approximate w.r.t.\@ $\vec{\widetilde{g}}$; therefore, if~$\vec{\rho} =
   \vec{\widetilde{\rho}}$ then $\vec{\rho}$ is 4-approximate w.r.t.\@ $\vec{\widetilde{g}}$. Otherwise, $\vec{\rho} = (\vec{\widetilde{\rho}}_{-j}, \sigma_j)$, in which case $\widetilde{g}_j(\rho_j) = 0$, so $\vec{\rho}$ is also 4-approximate w.r.t.\@ $\vec{\widetilde{g}}$.

   At this point we can invoke the Local Ratio Theorem to get that
   \begin{align*}
     \sum_{j \in \mathcal{J}} g_j(\rho_j) &
     = \sum_{j \in \mathcal{J}} \widetilde{g}_j(\rho_j) +
       \sum_{j \in \mathcal{J}} \alpha \cdot \widehat{g}_j(\rho_j),  \\
     & \leq 4 \cdot \opt(\vec{\widetilde{g}}) + 4 \alpha \cdot \opt(\vec{\widehat{g}}), \\
     & = 4 \cdot \big( \opt(\vec{\widetilde{g}}) + \opt(\alpha \cdot \vec{\widehat{g}}) \big), \\
     & \leq 4 \cdot \opt(\vec{g}),
   \end{align*}
   which finishes the proof of the lemma.
 \end{proof}

 Note that the number of recursive calls in Algorithm~\ref{algo:lrcs} is at most $|\mathcal{J}|\cdot|\mathcal{T}|$. Indeed, in each call the due date  of some job is increased. Therefore we can only guarantee a pseudo-polynomial running time. However, the same ideas developed in Section~\ref{sec:poly} can be applied here to obtain a polynomial time algorithm at a loss of a $1+\epsilon$ factor in the approximation guarantee.

 \section{Release dates}
 \label{sec:releasedates}

 This section discusses how to generalize the ideas from the previous section to instances with release dates. We assume that there are $\kappa$ different release dates, which we denote with the set $H$. Our main result is a pseudo-polynomial $4\kappa$-approximation algorithm. The generalization is surprisingly easy: We only need to redefine our residual demand function to take into account release dates.

 For a given due date assignment vector $\vec{\sigma}$ and an interval $[r,t)$ we denote by
 \[ D(r,t,\vec{\sigma}) = \max \set { r + p \left(\sset{j \in \mathcal{J}}{r \leq r_j \leq \sigma_j < t} \right)- t+ 1  ,0} \]
 the \emph{residual demand} for $[r,t)$. Intuitively, this quantity is the amount of processing time of jobs released in $[r,t)$ that currently have a due date strictly less than $t$ that should be assigned a due date of $t$ or greater if we want feasibility.

 The \emph{truncated processing time} of $j$ with respect to $r$, $t$, and $\vec{\sigma}$ is 
 \[ p_j(r,t, \vec{\sigma}) = \min \set{p_j, D(r,t, \vec{\sigma})}.
 \]

 The algorithm for multiple release dates is very similar to {\sc local-ratio}. The \emph{only} difference is in the way we decompose the input cost function vector $\vec{g}$. First, we find values $r^*$ and $t^*$ maximizing $D(r^*, t^*, \vec{\sigma})$. Second, we define the model 
 cost function for job each $i \in \mathcal{J}$ as follows
 \[ \widehat{g}_i(t) =
 \begin{cases}
 	p_i(r^*,t^*, \vec{\sigma}) & \text{if } r^* \leq r_i < t^* \text{ and } \sigma_i < t^* \leq t, \\
   0 & \text{otherwise}. \\
 \end{cases}
   \]

   \begin{myalgorithm}[12cm]{\sc local-ratio-release$(\vec{\sigma}, \vec{g})$ \label{algo:release}}
     \IF {$\vec{\sigma}$ is feasible} 
     \STATE $\vec{\rho}$ = $\vec{\sigma}$
     \ELSE
     \STATE  $(t^*, r^*) = \mathrm{argmax}_{(t, r) \in \mathcal{T} \times H} D(r, t,\vec{\sigma})$  \COMMENT{break ties arbitrarily}
     \STATE For each $i \in \mathcal{J}$ let \(
     \widehat{g}_i(t)=\begin{cases}
       p_i(r^*,t^*, \vec{\sigma}) & \text{if } r^* \leq r_i < t^* \text{ and } \sigma_i < t^* \leq t, \\
       0 & \text{otherwise}.
             \end{cases}
     \)
     \STATE Set $\vec{\widetilde{g}} = \vec{g} - \alpha \cdot \vec{\widehat{g}}$ where $\alpha$ is the largest value such that $\vec{\widetilde{g}} \geq 0$
     \STATE Let $j$ and $s$ be such that 
     \( \widetilde{g}_j(s) = 0 \text{ and } \widehat{g}_j(s) > 0 \)
     \STATE $\vec{\widetilde{\sigma}} = (\vec{\sigma}_{-j}, s)$
     \STATE $\vec{\widetilde{\rho}}$ = {\sc local-ratio-release}$(\vec{\widetilde{\sigma}}, \vec{\widetilde{g}})$
     \IF {$(\vec{\widetilde{\rho}}_{-j}, \sigma_j)$ is feasible} 
     \STATE $\vec{\rho} = (\vec{\widetilde{\rho}}_{-j}, \sigma_j)$
     \ELSE
     \STATE $\vec{\rho} = \vec{\widetilde{\rho}}$
     \ENDIF
     \ENDIF
     \RETURN $\vec{\rho}$

   \end{myalgorithm}
   
 The rest of the algorithm is exactly as before. We call the new algorithm {\sc
 local-ratio-release}. Its pseudocode is given in Algorithm~\ref{algo:release}. The initial
 call to the algorithm is done on the assignment vector $(r_1, r_2, \ldots,
 r_n)$ and the function cost vector $(f_1, f_2, \ldots, f_n)$. Without loss of
 generality, we assume $f_j(r_j) = 0$ for all $j \in \mathcal{J}$.

 \begin{theorem}
     \label{thm:ls-cs-rd}
     There is a pseudo-polynomial time $4\kappa$-approximation for scheduling jobs
     with release dates on a single machine with generalized cost
     function.
 \end{theorem}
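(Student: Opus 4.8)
The plan is to transplant the three lemmas behind Theorem~\ref{thm:4-approx} to the release-date setting, the only genuinely new ingredient being an extra factor of $\kappa$ in the ``local argument''. First I would dispose of the routine parts. The recursion is started on $(r_1,\dots,r_n)$ with $f_j(r_j)=0$, and each recursive call raises the due date of exactly one job, since the pair $(j,s)$ with $\widetilde g_j(s)=0$ and $\widehat g_j(s)>0$ must have $s\ge t^*>\sigma_j$; hence there are at most $|\mathcal J|\cdot|\mathcal T|$ calls and the running time is pseudo-polynomial. The returned $\vec\rho$ is feasible in every call, being either a feasible input $\vec\sigma$ or one of the two feasible candidates $\vec{\widetilde\rho}$, $(\vec{\widetilde\rho}_{-j},\sigma_j)$. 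The analogue of Lemma~\ref{lem:lr-cs-prop} holds verbatim, because its proof only uses the recursive skeleton (successive input vectors raise one due date, successive output vectors lower at most one due date back to its input value, and $g_j(\sigma_j)=0$ is maintained). Finally I would use the standard fact for single-machine scheduling with release dates that an assignment $\vec\sigma$ with $\sigma_j\ge r_j$ is feasible iff $D(r,t,\vec\sigma)=0$ for all $(r,t)\in H\times\mathcal T$, proved by the usual ``jobs released in $[r,t)$ and due before $t$ are processed inside $[r,t)$'' counting argument; this also makes feasibility upward-closed in $\vec\sigma$, which is what lets the ``we did not undo the increase'' step of Lemma~\ref{lem:local-argument} carry over unchanged.

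The core is the release-date version of Lemma~\ref{lem:local-argument}: if $(r^*,t^*)$ is the pair chosen in a call and $\vec\rho$ is the output, then
\[ \sum_{i\in X} p_i(r^*,t^*,\vec\sigma)\ \le\ 4\kappa\, D(r^*,t^*,\vec\sigma),\qquad X:=\sset{i\in\mathcal J}{r^*\le r_i<t^*,\ \sigma_i<t^*\le\rho_i}. \]
For each $i\in X$ the algorithm raised $i$'s due date beyond $\sigma_i$ in this or a later call and, by the monotonicity of the outputs, did not undo it, so $(\vec\rho_{-i},\sigma_i)$ is infeasible; fix a witness $(a_i,b_i)\in H\times\mathcal T$ with $D(a_i,b_i,(\vec\rho_{-i},\sigma_i))>0$. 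Comparing with $D(a_i,b_i,\vec\rho)=0$, and noting that the two assignments differ only in job $i$, forces $a_i\le r_i<b_i$ and $\sigma_i<b_i\le\rho_i$. I would then partition $X=\bigsqcup_{a\in H}X_a$ with $X_a=\sset{i\in X}{a_i=a}$; this is where the $\kappa$ enters. Fix $a$, split $X_a=L_a\sqcup R_a$ according to $b_i\le t^*$ or $b_i>t^*$, and let $b_{L,a}=\max_{i\in L_a}b_i$, attained at $i_{L,a}$, and $b_{R,a}=\min_{i\in R_a}b_i$, attained at $i_{R,a}$. Writing $D(r,t,\vec\sigma)$ in the equivalent form $\max\{c(r,t)-p(\{j:r\le r_j<t,\ \sigma_j\ge t\}),0\}$, where $c(r,t):=r+p(\{j:r\le r_j<t\})-t+1$ is instance-fixed, the witness inequality for $i_{L,a}$ together with $D(a,b_{L,a},\vec\rho)=0$, after subtracting the jobs already covering $b_{L,a}$ in $\vec\sigma$, yields that the ``crossing set'' $\sset{j\ne i_{L,a}}{a\le r_j<b_{L,a},\ \sigma_j<b_{L,a}\le\rho_j}$ has total processing time strictly below $D(a,b_{L,a},\vec\sigma)$, which is at most $D(r^*,t^*,\vec\sigma)$ since $(r^*,t^*)$ maximizes $D(\cdot,\cdot,\vec\sigma)$ over $H\times\mathcal T$. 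The key point is that every $i'\in L_a\setminus\{i_{L,a}\}$ lies in this crossing set: $a=a_{i'}\le r_{i'}\le\sigma_{i'}<b_{i'}\le b_{L,a}$, and $\rho_{i'}\ge t^*\ge b_{L,a}$ because $i'\in X$ and $b_{L,a}\le t^*$. Hence $\sum_{i'\in L_a\setminus\{i_{L,a}\}}p_{i'}<D(r^*,t^*,\vec\sigma)$, and symmetrically $\sum_{i'\in R_a\setminus\{i_{R,a}\}}p_{i'}<D(r^*,t^*,\vec\sigma)$, now using $r_{i'}<t^*<b_{R,a}$ and $\rho_{i'}\ge b_{i'}\ge b_{R,a}$. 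Bounding the two dropped jobs by the trivial $p_i(r^*,t^*,\vec\sigma)\le D(r^*,t^*,\vec\sigma)$ gives $\sum_{i\in X_a}p_i(r^*,t^*,\vec\sigma)\le 4\,D(r^*,t^*,\vec\sigma)$, and summing over the $\kappa$ classes proves the bound.

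With this in hand the performance guarantee follows exactly as in Lemma~\ref{lem:4-correct}, by induction on the recursion: in the base case $\sum_i g_i(\rho_i)=0$; in the inductive step, with $\vec g=\vec{\widetilde g}+\alpha\vec{\widehat g}$, the recursive output is $4\kappa$-approximate for $\vec{\widetilde g}$ and this survives the optional reset of $j$'s due date to $\sigma_j$ (free because $0\le\widetilde g_j(\sigma_j)\le g_j(\sigma_j)=0$), while $\sum_i\widehat g_i(\rho_i)=\sum_{i\in X}p_i(r^*,t^*,\vec\sigma)\le 4\kappa\, D(r^*,t^*,\vec\sigma)\le 4\kappa\,\opt(\vec{\widehat g})$ since any feasible schedule must finish jobs of $X$ worth a combined $D(r^*,t^*,\vec\sigma)$ of processing at or after $t^*$. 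The Local Ratio Theorem then gives $\sum_i g_i(\rho_i)\le 4\kappa\,\opt(\vec g)$, and the release-date restatement of Lemma~\ref{eq:edd} converts $\vec\rho$ into an actual schedule of no larger cost; the same interval-rounding as in Section~\ref{sec:poly} would give a polynomial-time $(4\kappa+\epsilon)$ variant if desired. I expect the main obstacle to be the interval version of the local argument: choosing the witness interval $[a_i,b_i)$, expressing $D(r,t,\vec\sigma)$ so that the crossing-set inequality points the right way, and checking that all of $X_a$ lands in a single crossing set — which is precisely what costs one factor of $\kappa$ relative to the no-release analysis.
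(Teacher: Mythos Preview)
Your proposal is correct and follows essentially the same route as the paper: you reproduce the release-date analogues of Lemmas~\ref{lem:lr-cs-prop}, \ref{lem:local-argument}, and \ref{lem:4-correct}, with the key new idea being the partition of the set $X$ according to the left endpoint $a_i\in H$ of the witness interval, which is exactly what produces the extra factor of~$\kappa$. The paper's Lemmas~\ref{lem:folk-rd}--\ref{lem:4-correct-rd} carry out precisely the same steps (witness interval $[r(i),t(i))$, partition into $L(r)$ and $R(r)$ for each $r\in H$, crossing-set bound against $D(r^*,t^*,\vec\sigma)$ by maximality, then the Local Ratio induction), and your derivation of the crossing-set inclusion and the subtraction yielding $D(a,b_{L,a},\vec\sigma)$ matches the paper's computation line for line.
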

   
 The proof of this theorem rests on a series of Lemmas that mirror
 Lemmas~\ref{lem:lr-cs-prop},~\ref{lem:local-argument},
 and~\ref{lem:4-correct} from Section~\ref{sec:local-ratio}.

 \begin{lemma}
   \label{lem:folk-rd}
   An assignment $\vec{\sigma}$ is feasible if there is no residual demand at
   any interval $[r,t)$; namely, $\vec{\sigma}$ is feasible if $D(r, t,
   \vec{\sigma})=0$ for all $r\in H$ and $r<t \in \mathcal{T}$. Furthermore,
   scheduling the jobs according to early due date first yields a feasible
   preemptive schedule.
 \end{lemma}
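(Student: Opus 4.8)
The plan is to prove, in one shot, the contrapositive of the first assertion together with the ``furthermore'' clause: I will show that if the \emph{preemptive} earliest-due-date rule fails to meet some due date of $\vec{\sigma}$, then $D(r,t,\vec{\sigma})>0$ for some $r\in H$ and some $t\in\mathcal{T}$ with $r<t$. Since preemptive EDD never idles the machine while an eligible unfinished job is available, this simultaneously shows that the EDD schedule meets all due dates whenever every residual demand vanishes, hence that $\vec{\sigma}$ is feasible, and that EDD itself witnesses this.

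First I would fix a job $j$ whose due date the preemptive EDD schedule violates and set $t^*=\sigma_j$, so that at the end of slot $t^*$ the job $j$ is still unfinished while $\sigma_j\le t^*$. Next I walk backwards from $t^*$ to the last ``congested'' slot: let $r_0$ be the largest index in $\{0,1,\dots,t^*\}$ such that slot $r_0$ is either idle or processes a job with due date strictly larger than $t^*$, taking $r_0=0$ if no such slot exists. By maximality of $r_0$, every slot in $\{r_0+1,\dots,t^*\}$ processes a job of due date at most $t^*$. The crucial use of the EDD priority rule is the next point: every job touched in slots $r_0+1,\dots,t^*$, and $j$ as well, has release date at least $r_0$ and is never processed in a slot $\le r_0$. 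Indeed, a job with release date below $r_0$, due date at most $t^*$, and still unfinished at the boundary point $r_0$ would have been eligible in slot $r_0$ and would have been preferred there over an idle machine or over a job of larger due date, contradicting the choice of $r_0$; and a job processed in a slot $s\le r_0$ is released before $r_0$, so the same applies. Consequently all the work EDD performs on these jobs lies in the window $\{r_0+1,\dots,t^*\}$.

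Now let $r:=\min\{r_k : k \text{ is one of the jobs above}\}$. This is a genuine release date, so $r\in H$, and $r\ge r_0$, while $r<t^*$ since any job run in slot $t^*$ is released by time $t^*-1$. Put $t:=t^*+1$. Every one of those jobs satisfies $r\le r_k\le\sigma_k\le t^*<t$, so each is counted in $D(r,t,\vec{\sigma})$. The window contributes $t^*-r_0$ units of processing, each on a counted job and all performed by time $t^*$, and in addition $j$ still has at least one unit of $p_j$ not performed by time $t^*$; since the counted jobs are not processed before $r_0$, their total processing time is at least $(t^*-r_0)+1\ge(t^*-r)+1$. Plugging this into the definition gives $D(r,t^*+1,\vec{\sigma})\ge r+(t^*-r+1)-t^*=1>0$, the desired contradiction, which completes the proof.

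I expect the main obstacle to be the careful bookkeeping in the middle step: establishing that $r_0$ has exactly the claimed monotonicity, that no counted job is processed before $r_0$ (so that the extra unit of $j$'s work is genuinely additional), and that shrinking the left endpoint down to a true release date $r\in H$ preserves the violated inequality — all reconciled with the off-by-one conventions hidden in the definition of $D(r,t,\vec{\sigma})$, plus the harmless boundary case $t^*=T$ (handled by taking the horizon large enough that every assigned due date is at most $T$, i.e.\ allowing $t=t^*+1$ in the constraint family). As a sanity check and an alternative route, one may instead invoke the classical fact that preemptive EDD is optimal for $1\mid r_j,\mathrm{pmtn}\mid L_{\max}$, together with the Hall-type characterization that a preemptive schedule respecting release and due dates exists iff $p(\{k : a\le r_k,\ \sigma_k\le b\})\le b-a$ for all integers $a<b$ (proved by a bipartite matching / flow argument between unit job pieces and unit time slots); the hypothesis of the lemma is precisely this condition restricted to $a\in H$, and the general case follows by rounding $a$ up to the least release date that is at least $a$, which only tightens the bound.
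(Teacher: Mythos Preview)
Your proof is correct and follows the same contrapositive strategy as the paper: assume preemptive EDD misses the due date of some job $j$, set $t=\sigma_j+1$, locate a left endpoint $r\in H$, and exhibit $D(r,t,\vec{\sigma})>0$. The paper additionally proves the converse direction (some $D>0$ implies EDD fails), which the lemma as stated does not require, but otherwise the two arguments are parallel.

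One substantive point deserves mention. The paper takes $r$ to be the latest release date before which the machine was \emph{idle}, whereas you back up to the last slot $r_0$ that is idle \emph{or} processes a job with due date exceeding $t^*$, and only then pick $r\in H$ as the smallest release date among the jobs run in $(r_0,t^*]$ together with $j$. Your refinement is actually necessary: if EDD is never idle but spends its first slots on a large-due-date job before the tight job is even released---say $p_1=10,\ r_1=0,\ \sigma_1$ large, and $p_2=3,\ r_2=5,\ \sigma_2=6$---then the paper's rule gives $r=0$ and $X=\{2\}$, whence $r+p(X)=3<7=t$, so its asserted inequality $r+p(X)\ge t$ fails; your rule correctly yields $r_0=5$, $r=5$, and $D(5,7,\vec{\sigma})=2>0$. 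Thus your argument in fact patches a small gap in the paper's write-up. The boundary issue you flag at $t^*=T$ is shared by both proofs and is indeed harmless.
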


 \begin{proof}
   We start by noting that one can use a simple exchange argument to show that
   if there is some schedule that meets the due dates $\vec{\sigma}$, then the
   earliest due date (EDD) schedule must be feasible.

   First, we show that if there is a job $j$ in the EDD schedule that does not
   meet its deadline, then there is an interval $[r,t)$ such that $D(r,t,
   \vec{\sigma}) > 0$. Let $t = \sigma_j + 1$ and let $r < t$ be latest release date such that the machine was idle at time $r-1$ just after EDD finished scheduling $j$. Let $X = \sset{i \in
    \mathcal{J}}{r \leq r_i , \sigma_i < t}$. Clearly,
   $r + p(X) \geq t$, otherwise $j$ would have met its due date. Therefore,
   \begin{align*}
     0 & < r + p(X) - t + 1\\
     & = r + p \left(\sset{i \in \mathcal{J}}{r \leq r_i \leq \sigma_i < t} \right) - t + 1 \\ 
     & \leq D(r, t, \vec{\sigma}).
   \end{align*}

   Second, we show that for any interval $[r,t)$ such that $D(r,t,
   \vec{\sigma}) > 0$, there exists a job $j$ in the EDD schedule that does not
   meet its deadline. Let $X = \sset{i \in
   \mathcal{J}}{r \leq r_i, \sigma_i<t}$. Then, 
   \begin{equation*}
     0 < D(r, t, \vec{\sigma}) = r + p(X) - t + 1 \quad \Longrightarrow \quad r + p(X) \geq t.
   \end{equation*}
   Let $j$ be the job in $X$ with the largest completion time in the EDD schedule. Notice that the completion time of $j$ is at least $r + p(X) \geq t$. On the other hand, its due date is $\sigma_j < t$. Therefore, the EDD schedule misses $j$'s due date.
 \end{proof}

 \begin{lemma}
   \label{lem:local-ratio-release-prop}
   Let $(\vec{\sigma^{(1)}}, \vec{g^{(1)}}), (\vec{\sigma^{(2)}}, \vec{g^{(2)}}), \ldots, (\vec{\sigma^{(k)}}, \vec{g^{(k)}})$ be the inputs to the successive recursive calls to {\sc local-ratio-release} and let $\vec{\rho^{(1)}}, \vec{\rho^{(2)}}, \ldots, \vec{\rho^{(k)}}$ be their corresponding outputs. The following properties hold:
   \begin{enumerate}[(i)]
     \item $\vec{\sigma^{(1)}} \leq \vec{\sigma^{(2)}} \leq \cdots \leq
     \vec{\sigma^{(k)}}$,
     \item $\vec{\rho^{(1)}} \leq \vec{\rho^{(2)}} \leq \cdots \leq \vec{\rho^{(k)}}$,
     \item $\vec{\sigma^{(i)}} \leq \vec{\rho^{(i)}}$ for all $i =1, \ldots, k$,
     \item $g^{(i)}_j(\sigma^{(i)}_j) = 0$ and $g^{(i)}_j$ is non-negative for all $i=1, \ldots, k$ and $j \in \mathcal{J}$.
   \end{enumerate}
 \end{lemma}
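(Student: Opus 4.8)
The plan is to reuse, essentially verbatim, the inductive argument from the proof of Lemma~\ref{lem:lr-cs-prop}, since {\sc local-ratio-release} differs from {\sc local-ratio} only in the definition of the model cost function $\vec{\widehat{g}}$, while the recursion's control flow --- increase one due date, recurse, then possibly undo that increase --- is unchanged. I would prove the four properties in the order (i), (iii), (ii), (iv), since (ii) is cleanest to justify once (i) and (iii) are in hand.

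First I would prove property (i). By construction $\vec{\sigma^{(i+1)}} = (\vec{\sigma^{(i)}}_{-j}, s)$ for the pair $(j,s)$ selected with $\widetilde{g}^{(i)}_j(s)=0$ and $\widehat{g}^{(i)}_j(s)>0$; all coordinates other than $j$ are unchanged, and for coordinate $j$ the positivity of the release-date model function forces $r^* \le r_j < t^*$ and $\sigma^{(i)}_j < t^* \le s$, so in particular $s > \sigma^{(i)}_j$. Hence $\vec{\sigma^{(i)}} \le \vec{\sigma^{(i+1)}}$ coordinatewise. This is the only step where anything specific to release dates enters, and it is the step I would be most careful about.

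Next I would prove property (iii) by induction from the base of the recursion upward: at the leaf the input $\vec{\sigma}$ is feasible and $\vec{\rho}=\vec{\sigma}$, so equality holds; for the inductive step, the recursive hypothesis gives $\vec{\sigma^{(i+1)}} \le \vec{\rho^{(i+1)}}$, and since the algorithm sets $\vec{\rho^{(i)}}$ equal either to $\vec{\rho^{(i+1)}}$ or to $(\vec{\rho^{(i+1)}}_{-j}, \sigma^{(i)}_j)$, combining with property (i) yields $\vec{\sigma^{(i)}} \le \vec{\rho^{(i)}}$ in both cases. Property (ii) then follows immediately: the undo step can only replace $\rho^{(i+1)}_j$ by $\sigma^{(i)}_j$, which is no larger than $\rho^{(i+1)}_j$ by properties (i) and (iii), and all other coordinates are untouched, so $\vec{\rho^{(i)}} \le \vec{\rho^{(i+1)}}$.

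Finally I would prove property (iv) by induction forward along the chain of calls. The base case is the initial call on $\vec{\sigma^{(1)}} = (r_1,\ldots,r_n)$ and $\vec{g^{(1)}} = (f_1,\ldots,f_n)$, where $g^{(1)}_j(\sigma^{(1)}_j) = f_j(r_j) = 0$ by the normalization assumed in the setup of this section and each $f_j$ is nonnegative. For the inductive step, $\vec{g^{(i+1)}} = \vec{g^{(i)}} - \alpha\,\vec{\widehat{g}^{(i)}}$ with $\alpha$ chosen maximal subject to $\vec{g^{(i+1)}} \ge \vec{0}$ and $\vec{\widehat{g}^{(i)}}\ge\vec 0$, so $\vec{0} \le \vec{g^{(i+1)}} \le \vec{g^{(i)}}$; the due date of job $j$ is raised to $s$ with $g^{(i+1)}_j(s) = \widetilde{g}^{(i)}_j(s) = 0$, while every other coordinate satisfies $g^{(i+1)}_{i'}(\sigma^{(i+1)}_{i'}) \le g^{(i)}_{i'}(\sigma^{(i)}_{i'}) = 0$, so the property is preserved. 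The main (and only, minor) obstacle throughout is confirming that the reassigned due date is a genuine increase in the release-date setting; once property (i) is secured, everything else is formally identical to the proof of Lemma~\ref{lem:lr-cs-prop}.
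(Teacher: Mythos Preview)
Your proposal is correct and follows essentially the same approach as the paper, which explicitly says that properties (i)--(iii) are proved exactly as in Lemma~\ref{lem:lr-cs-prop} and then redoes the forward induction for (iv) with the base case $\vec{\sigma^{(1)}}=(r_1,\ldots,r_n)$. Your treatment is slightly more careful in that you explicitly verify, from the release-date model function, that $\widehat{g}_j(s)>0$ forces $s>\sigma_j$ (so the due date truly increases), and you order the arguments (i), (iii), (ii) so that the ``decrease'' claim in (ii) is fully justified; these are minor expository improvements over the paper rather than a different method.
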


 \begin{proof}
   The proof of Properties (i)-(iii) is exactly the same as that of Lemma~\ref{lem:lr-cs-prop}.

   The forth property follows by induction. The base case is the first call
   we make to {\sc local-ratio-release}, which is $\vec{\sigma^{(1)}} = (r_1,\ldots, r_n)$ and
   $\vec{g^{(1)}} = (f_1, \ldots, f_n)$, where it holds by our assumption. For the
   inductive case, we note that $\vec{{g}^{(i+1)}}$ is constructed by taking
   $\vec{{g}^{(i)}}$ and subtracting a scaled version of the model function
   vector, so that $\vec{0} \leq \vec{g^{(i+1)}} \leq \vec{g^{(i)}}$, and
   $\vec{\sigma^{(i+1)}}$ is constructed by taking $\vec{\sigma^{(i)}}$ and
   increasing the due date of a single job $j^{(i)}$. The way this is done
   guarantees that $g^{(i+1)}_{j^{(i)}} ( \sigma^{(i+1)}_{j^{(i)}}) = 0$, which
   ensures that the property holds.
 \end{proof}

 \begin{lemma}
   \label{lem:local-argument-rd}
   Let {\sc local-ratio-release}$(\vec{\sigma}, \vec{g})$ be a recursive call returning $\vec{\rho}$ then 
   \[ 
   \sum_{i \in \mathcal{J} \atop r^* \leq r_i \leq \sigma_i < t^* \leq \rho_i} p_i(r^*, t^*, \vec{\sigma})
   \leq 4 \kappa \cdot D(r^*, t^*, \vec{\sigma}).
   \]
   where $(r^*, t^*)$ are the values used to decompose the input cost function
   vector $\vec{g}$.
 \end{lemma}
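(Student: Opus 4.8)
The plan is to adapt the proof of Lemma~\ref{lem:local-argument} essentially line for line, with the set $H$ of release dates being responsible for replacing the constant $4$ by $4\kappa$. Write $X=\sset{i\in\mathcal{J}}{r^*\le r_i\le\sigma_i<t^*\le\rho_i}$ for the jobs whose due dates get (permanently) increased past $t^*$. For each $i\in X$, the algorithm increased $\sigma_i$ in this or a deeper recursive call and then chose not to undo it; combining this with the monotonicity $\vec\rho\le\vec{\rho}'$ from Lemma~\ref{lem:local-ratio-release-prop}(ii) and the observation that lowering a due date can only raise residual demands, one gets that $(\vec{\rho}_{-i},\sigma_i)$ is infeasible. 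By Lemma~\ref{lem:folk-rd} there is then a pair $(q_i,t_i)$ with $q_i\in H$, $q_i<t_i\in\mathcal{T}$, and $D(q_i,t_i,(\vec{\rho}_{-i},\sigma_i))>0$. Since $\vec\rho$ is feasible and the two assignments differ only in the coordinate of $i$, job $i$ itself must be what makes the residual demand of $[q_i,t_i)$ positive; as in the ``$\sigma_i<t_i\le\rho_i$'' remark of Lemma~\ref{lem:local-argument} this forces $\sigma_i<t_i\le\rho_i$, and it additionally forces $q_i\le r_i$ (since $r_i\le\sigma_i<t_i$ always holds, job $i$ can enter the residual set of $[q_i,t_i)$ only when $q_i\le r_i$).

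Next, I would partition $X$ by witness release date and by the side of $t^*$ on which $t_i$ falls: for $q\in H$ put $X_q=\sset{i\in X}{q_i=q}$, $L_q=\sset{i\in X_q}{t_i\le t^*}$, $R_q=\sset{i\in X_q}{t_i> t^*}$, and (when these are nonempty) let $t_{L_q}=\max_{i\in L_q}t_i$ be attained by $i_{L_q}$ and $t_{R_q}=\min_{i\in R_q}t_i$ be attained by $i_{R_q}$. The core estimate, obtained by repeating the two displayed chains of inequalities in the proof of Lemma~\ref{lem:local-argument} but with the first coordinate pinned to $q$, is
\[ \sum_{i\in L_q\setminus\{i_{L_q}\}} p_i \;\le\; D(q,t_{L_q},\vec\sigma) \qquad\text{and}\qquad \sum_{i\in R_q\setminus\{i_{R_q}\}} p_i \;\le\; D(q,t_{R_q},\vec\sigma). \]
The only things to check are the set inclusions $L_q\setminus\{i_{L_q}\}\subseteq\sset{j}{q\le r_j,\ \sigma_j<t_{L_q}\le\rho_j}\setminus\{i_{L_q}\}$ (using $q=q_i\le r_i$, $\sigma_i<t_i\le t_{L_q}$, and $\rho_i\ge t^*\ge t_{L_q}$), the analogue $R_q\setminus\{i_{R_q}\}\subseteq\sset{j}{q\le r_j,\ \sigma_j<t_{R_q}\le\rho_j}\setminus\{i_{R_q}\}$ (now using $\rho_i\ge t_i\ge t_{R_q}$), and that $i_{L_q}$ and $i_{R_q}$ themselves belong to the respective sets, which is precisely what permits deleting them from the right-hand side.

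To conclude, since $(r^*,t^*)$ was chosen to maximise $D(\cdot,\cdot,\vec\sigma)$ over $H\times\mathcal{T}$ we have $D(q,t_{L_q},\vec\sigma),D(q,t_{R_q},\vec\sigma)\le D(r^*,t^*,\vec\sigma)$; then, bounding $p_i(r^*,t^*,\vec\sigma)\le p_i$ on $L_q\setminus\{i_{L_q}\}$ and $R_q\setminus\{i_{R_q}\}$ and $p_i(r^*,t^*,\vec\sigma)\le D(r^*,t^*,\vec\sigma)$ for the two extremal jobs, we get $\sum_{i\in X_q}p_i(r^*,t^*,\vec\sigma)\le 4\,D(r^*,t^*,\vec\sigma)$ for every $q$. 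Summing over the at most $\kappa=|H|$ distinct values taken by the $q_i$ gives the stated bound. The step I expect to need the most care is exactly the bookkeeping with the witness release dates: a priori $q_i$ need not equal $r^*$ (only $q_i\in H$ and $q_i\le r_i$ are guaranteed), so the $L$/$R$ decomposition must be carried out separately within each class $X_q$, and it is this that turns the constant into a factor $\kappa$; everything else is a faithful transcription of the proof of Lemma~\ref{lem:local-argument}.
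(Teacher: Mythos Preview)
Your proposal is correct and follows essentially the same approach as the paper: define a witness interval $[q_i,t_i)$ for each $i\in X$, partition $X$ by the witness release date $q_i\in H$ and by whether $t_i\le t^*$ or $t_i>t^*$, bound each of the $2\kappa$ pieces by $D(r^*,t^*,\vec\sigma)$ via the same chain of inequalities as in Lemma~\ref{lem:local-argument} (with the two extremal jobs handled by the trivial bound), and sum. You are in fact slightly more explicit than the paper about why $q_i\le r_i$ must hold, which is the key fact making the set inclusions go through.
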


 \begin{proof} Our goal is to bound the $p_i(r^*,t^*, \vec{\sigma})$ value of jobs
   \[X = \sset{ i \in \mathcal{J}}{r \leq r_i \leq \sigma_i < t^* \leq \rho_i}.\] 

   Notice that the algorithm increases the due date of these jobs in this or a
   later recursive call. Furthermore, and more important to us, the algorithm
   decides not to undo the increase.

   For each $i \in X$, consider the call {\sc local-ratio-release}$(\vec{\sigma'},
   \vec{g'})$ when we first increased the due date of $i$ beyond $\sigma_i$. Let
   $\vec{{\rho}'}$ be assignment returned by the call. Notice that $\rho'_i >
   \sigma_i$ and that $(\vec{\rho'}\!_{-i}, \sigma_i)$ is not
   feasible---otherwise we would have undone the due date increase. By
   Lemma~\ref{lem:lr-cs-prop}, we know that $\vec{\rho} \leq
   \vec{\rho'}$, so we conclude that $(\vec{\rho}_{-i}, \sigma_i)$ is not
   feasible either. We define $r(i) \leq r_j$ and $\sigma_i < t(i) \leq \rho_i$
   such that the interval $[r(i), t(i))$ has a positive residual demand in this
   unfeasible assignment:
   \[ D(r(i), t(i), (\vec{\rho}_{-i}, \sigma_i)) > 0. \]
   Note that such an interval must exist, otherwise $\vec{\rho}$ would not be
   feasible.
   
   We partition $X$ in $2 \kappa$ subsets. For each release date $r \in H$ we define 
   \begin{equation*}
     L(r) = \set{ i\in X: t(i) \le t^*, r(i) = r} \text{ and } R(r) = \set{ i\in X: t(i) > t^*, r(i) = r },
   \end{equation*}
   Let $t_L^r = \max \sset{t(i)}{i \in L(r)}$ and
   $i_L^r$ be a job attaining this value. Similarly, consider  $t_R^r = \min \sset{t(i)}{i \in R(r)}$
   and $i_R^r$ be a job attaining this value.

   We will bound the contribution of each of these sets separately. Our goal
   will be to prove that for each release date $r$ we have
   \begin{align}
     \label{eq:Lmax-rd}
      \sum_{i \in L(r) - i_L^r} p_i & \le D(r^*, t^*, \vec{\sigma}), \text{ and }\\
      \label{eq:Rmax-rd}
     \sum_{i \in R(r) - i_R^r} p_i & \le 
     D(r^*, t^*,\vec{\sigma}).
   \end{align}

   Let us argue \eqref{eq:Lmax-rd} first. Assume $L(r) \neq \emptyset$, so $t_L^r$
   is well defined; otherwise, the claim is trivial. Since $D\left(r, t_L^r,
   (\vec{\rho}_{-i_L^r}, \sigma_{i_L^r})\right) > 0$, it follows that
   \begin{align*}
     \sum_{i \in \mathcal{J} - i_L^r \atop r \leq r_i < t_L^r \leq \rho_i } p_i & < r + \sum_{i \in \mathcal{J} \atop r \leq r_i < t_L^r } p_i - t_L^r + 1 \\
     \sum_{i \in \mathcal{J} \atop r \leq r_i < t_L^r \leq \sigma_i } p_i +\sum_{i \in \mathcal{J} - i_L^r \atop r \leq r_i \leq \sigma_i < t_L^r \leq \rho_i } p_i & < r + \sum_{i \in \mathcal{J} \atop r \leq r_i < t_L^r } p_i - t_L^r + 1 \\
      \sum_{i \in \mathcal{J} - i_L^r \atop r \leq r_i \leq \sigma_i < t_L^r \leq \rho_i } p_i & < D(r,t_L^r, \vec{\sigma}).
   \end{align*}

   Recall that $\sigma_i < t(i)$ for all $i \in X$. Furthermore, $t(i) \leq
   t_L^r$, and thus $\sigma_i < t_L^r$, for all $i \in L(r)$. Also, $t(i) \leq
   \rho_i$ for all $i \in X$. Therefore, the sum on the left-hand side of the
   last inequality contains all jobs in $L(r) - i_L^r$. Finally, we note that $D(r,t_L, \vec{\sigma} ) \leq D(r^*, t^*, \vec{\sigma})$ due to the way {\sc local-ratio-release} chooses $r^*$ and $t^*$, which
   gives us~\eqref{eq:Lmax-rd}.

   Let us argue \eqref{eq:Rmax-rd}. Assume $R(r) \neq \emptyset$, so $t_R^r$ is well defined; otherwise, the claim is trivial. Since $D\left(r,t_R^r, (\vec{\rho}_{-i_R^r},
   \sigma_{i_R^r})\right) > 0$, it follows that
   \begin{align*}
     \sum_{i \in \mathcal{J} - i_R^r \atop r \leq r_i < t_R^r \leq \rho_i } p_i & < r + \sum_{i \in \mathcal{J} \atop r \leq r_i < t_R^r } p_i - t_R^r + 1 \\
     \sum_{i \in \mathcal{J} \atop r \leq r_i < t_R^r \leq \sigma_i } p_i +\sum_{i \in \mathcal{J} - i_R^r \atop r \leq r_i \leq \sigma_i < t_R^r \leq \rho_i } p_i & < r + \sum_{i \in \mathcal{J} \atop r \leq r_i < t_R^r } p_i - t_R^r + 1 \\
      \sum_{i \in \mathcal{J} - i_R^r \atop r \leq r_i \leq \sigma_i < t_R^r \leq \rho_i } p_i & < D(r,t_R^r, \vec{\sigma})
   \end{align*}
   Recall that $t(i) \leq \rho_i$ for all $i \in X$. Furthermore, $t_R^r \leq
   t(i)$, and thus $t_R^r \leq \rho_i$, for all $i \in R(r)$. Also, $t_i > \sigma_i$ for
   all $i \in X$. Therefore, the sum on the left-hand side of the last
   inequality contains all jobs in $R(r)- i_R^r$. Finally, we note that $D(r,t_R^r,
   \vec{\sigma} )
   \leq D(r^*, t^*, \vec{\sigma})$ due to the way {\sc lr-cs} chooses $r^*$ and $t^*$, which
   gives us~\eqref{eq:Rmax-rd}.

   Finally, we note that $p_i(r^*, t^*, \vec{\sigma}) \leq D(r^*,t^*, \vec{\sigma})$ for all $i \in \mathcal{J}$. Therefore,
   \begin{align*}
   \sum_{i \in \mathcal{J}: \rho_i \geq t^*} p_i(r^*, t^*, \vec{\sigma}) 
   &=   \sum_{i \in X} p_i(r^*, t^*, \vec{\sigma}) \\
   & = \sum_r \left ( \sum_{i \in L(r)} p_i(r^*, t^*, \vec{\sigma}) + \sum_{i \in R(r)} p_i(r^*, t^*, \vec{\sigma}) \right) \\
   & \leq \sum_r \Big( 2 \cdot D(r^*, t^*, \vec{\sigma}) + 2 \cdot D(r^*, t^*, \vec{\sigma}) \Big) \\
   & = 4 \kappa \cdot D(r^*, t^*, \vec{\sigma}).
   \end{align*}
 \end{proof}

 \begin{lemma}
   \label{lem:4-correct-rd}
   Let {\sc lr-sc-rd}$(\vec{\sigma}, \vec{g})$ be a recursive call and $\vec{\rho}$ be its output. Then $\vec{\rho}$ is a feasible $4\kappa$-approximation w.r.t.\@ $\vec{g}$.
 \end{lemma}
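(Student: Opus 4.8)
The plan is to follow the proof of Lemma~\ref{lem:4-correct} essentially verbatim, carrying the extra factor $\kappa$ that appears in Lemma~\ref{lem:local-argument-rd} and replacing the single-parameter residual demand $D(t^*,\vec{\sigma})$ by the two-parameter quantity $D(r^*,t^*,\vec{\sigma})$ used by {\sc local-ratio-release}. The argument is an induction on the depth of the recursion. In the base case the input assignment $\vec{\sigma}$ is feasible, so $\vec{\rho}=\vec{\sigma}$; by property (iv) of Lemma~\ref{lem:local-ratio-release-prop} we have $g_j(\sigma_j)=0$ for every $j\in\mathcal{J}$, and the $g_j$ are non-negative, so $\sum_{j\in\mathcal{J}}g_j(\rho_j)=0\le 4\kappa\cdot\opt(\vec{g})$.

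For the inductive step I would decompose $\vec{g}=\vec{\widetilde{g}}+\alpha\cdot\vec{\widehat{g}}$ exactly as in Algorithm~\ref{algo:release}, let $(j,s)$ be the pair used to form $\vec{\widetilde{\sigma}}=(\vec{\sigma}_{-j},s)$, and let $\vec{\widetilde{\rho}}$ be the output of the recursive call, which by the inductive hypothesis is feasible and $4\kappa$-approximate with respect to $\vec{\widetilde{g}}$. Feasibility of $\vec{\rho}$ is immediate: it is either $\vec{\widetilde{\rho}}$ or the assignment $(\vec{\widetilde{\rho}}_{-j},\sigma_j)$, whose feasibility the algorithm explicitly verifies before returning it. Next I would show $\vec{\rho}$ is $4\kappa$-approximate with respect to $\vec{\widehat{g}}$: by the definition of the model function, $\sum_{i\in\mathcal{J}}\widehat{g}_i(\rho_i)$ equals the sum of $p_i(r^*,t^*,\vec{\sigma})$ over the jobs $i$ with $r^*\le r_i<t^*$ and $\sigma_i<t^*\le\rho_i$, which Lemma~\ref{lem:local-argument-rd} bounds by $4\kappa\cdot D(r^*,t^*,\vec{\sigma})$; and every feasible schedule has $\vec{\widehat{g}}$-cost at least $D(r^*,t^*,\vec{\sigma})$, since at least $D(r^*,t^*,\vec{\sigma})$ units of processing of jobs released in $[r^*,t^*)$ must be completed at time $t^*$ or later (this is precisely the ``second direction'' shown inside the proof of Lemma~\ref{lem:folk-rd}). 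Hence $\opt(\vec{\widehat{g}})\ge D(r^*,t^*,\vec{\sigma})$ and $\sum_i\widehat{g}_i(\rho_i)\le 4\kappa\cdot\opt(\vec{\widehat{g}})$. That $\vec{\rho}$ is also $4\kappa$-approximate with respect to $\vec{\widetilde{g}}$ follows as in Lemma~\ref{lem:4-correct}: if $\vec{\rho}=\vec{\widetilde{\rho}}$ this is the inductive hypothesis, and otherwise $\vec{\rho}=(\vec{\widetilde{\rho}}_{-j},\sigma_j)$ with $\widetilde{g}_j(\sigma_j)=0$ (because $0\le\widetilde{g}_j\le g_j$ and $g_j(\sigma_j)=0$), so restoring the due date $\sigma_j$ does not increase the $\vec{\widetilde{g}}$-cost.

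Finally I would invoke the Local Ratio Theorem. Since $\vec{\rho}$ is $4\kappa$-approximate with respect to both $\vec{\widetilde{g}}$ and $\vec{\widehat{g}}$ (hence with respect to $\alpha\cdot\vec{\widehat{g}}$), and since $\opt(\vec{\widetilde{g}})+\opt(\alpha\cdot\vec{\widehat{g}})\le\opt(\vec{\widetilde{g}}+\alpha\cdot\vec{\widehat{g}})=\opt(\vec{g})$, we get $\sum_{j}g_j(\rho_j)=\sum_j\widetilde{g}_j(\rho_j)+\alpha\sum_j\widehat{g}_j(\rho_j)\le 4\kappa\,\opt(\vec{\widetilde{g}})+4\kappa\,\opt(\alpha\cdot\vec{\widehat{g}})\le 4\kappa\,\opt(\vec{g})$, which is the claim of the lemma. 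Applied to the initial call $((r_1,\ldots,r_n),(f_1,\ldots,f_n))$ and combined with Lemma~\ref{lem:folk-rd} to turn the returned assignment into a feasible (preemptive) schedule, this yields Theorem~\ref{thm:ls-cs-rd}.

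I expect the only genuinely new point, and the one needing the most care, to be the lower bound $\opt(\vec{\widehat{g}})\ge D(r^*,t^*,\vec{\sigma})$: the machine capacity available to jobs released in $[r^*,t^*)$ before time $t^*$ is $t^*-r^*-1$ rather than $t^*-1$, so one has to track the off-by-one exactly as in Lemma~\ref{lem:folk-rd} so that the bound comes out to $D(r^*,t^*,\vec{\sigma})$ and not $D(r^*,t^*,\vec{\sigma})-1$. Everything else is a mechanical transcription of the proof of Lemma~\ref{lem:4-correct}, with the factor $\kappa$ threaded through from Lemma~\ref{lem:local-argument-rd}.
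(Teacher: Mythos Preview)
Your proposal is correct and follows essentially the same approach as the paper's own proof: induction on recursion depth, with the base case handled via property~(iv) of Lemma~\ref{lem:local-ratio-release-prop}, the inductive step split into verifying the $4\kappa$-approximation separately for $\vec{\widehat{g}}$ (via Lemma~\ref{lem:local-argument-rd} and the lower bound $\opt(\vec{\widehat{g}})\ge D(r^*,t^*,\vec{\sigma})$) and for $\vec{\widetilde{g}}$ (via the inductive hypothesis and $\widetilde{g}_j(\sigma_j)=0$), and the two combined by the Local Ratio Theorem. If anything, you are slightly more careful than the paper in citing the correct release-date lemmas and in flagging the off-by-one in the $\opt(\vec{\widehat{g}})$ lower bound.
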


 \begin{proof}
   The proof is by induction. The base case corresponds to the base case of the
   recurrence where we get as input a feasible assignment $\vec{\sigma}$, and so
   $\vec{\rho} = \vec{\sigma}$. From Lemma~\ref{lem:lr-cs-prop}, we know that
   $g_i(\sigma_i) = 0$ for all $i \in \mathcal{J}$, and that the cost functions
   are non-negative. Therefore, the cost of $\vec{\rho}$ is optimal since
   \[  \sum_{i \in \mathcal{J}} g_i(\rho_i) = 0. \]

   For the inductive case, the cost function vector $\vec{g}$ is decomposed into
   $\vec{\widetilde{g}} + \alpha \cdot \vec{\widehat{g}}$. Let $(j,s)$ be the
   pair used to define $\vec{\widetilde{\sigma}} = (\vec{\sigma}_{-j}, s)$. Let
   $\vec{\widetilde{\rho}}$ be the assignment returned by the recursive call. By the
   induction hypothesis, we know that $\vec{\widetilde{\rho}}$ is feasible and
   $4\kappa$-approximate w.r.t.\@ $\vec{\widetilde{g}}$.

   After the recursive call returns, we check the feasibility of
   $(\vec{\widetilde{\rho}}_{-j}, \sigma_j)$. If the vector is feasible, then we
   return the modified assignment; otherwise, we
   return~$\vec{\widetilde{\rho}}$. In either case, $\vec{\rho}$ is feasible.

   We claim that $\vec{\rho}$ is $4\kappa$-approximate w.r.t.\@ $\vec{\widehat{g}}$. Indeed, 
   \[ \sum_{i \in \mathcal{J}} \widehat{g}_i(\rho_i) = \sum_{i \in \mathcal{J} \atop r^*\leq r_i < t^* \leq \rho_i} p_i(r^*,t^*, \vec{\sigma}) \leq 4 \kappa \cdot D(r^*,t^*,
   \vec{\sigma}) \leq 4 \kappa \cdot \opt(\vec{\widehat{g}}),\] where the
   first inequality follows from Lemma~\ref{lem:local-argument} and the last
   inequality follows from the fact that the cost of any schedule under
   $\vec{\widehat{g}}$ is given by the $p_i(r^*,t^*, \vec{\sigma})$ value
   of jobs $i \in \mathcal{J}$ with $r^*\leq r_i < t^*$ and $\sigma_i < t^*$ that cover $t^*$, which must have
   a combined processing time of at least $D(r^*,t^*, \vec{\sigma})$.
   Hence, $\opt(\vec{\widehat{g}}) \geq  D(r^*, t^*, \vec{\sigma})$.

   We claim that $\vec{\rho}$ is $4\kappa$-approximate w.r.t.\@ $\vec{\widetilde{g}}$.
   Recall that $\vec{\widetilde{\rho}}$ is $4\kappa$-approximate w.r.t.\@
   $\vec{\widetilde{g}}$; therefore, if~$\vec{\rho} =
   \vec{\widetilde{\rho}}$ then $\vec{\rho}$ is $4\kappa$-approximate w.r.t.\@ $\vec{\widetilde{g}}$. Otherwise, $\vec{\rho} = (\vec{\widetilde{\rho}}_{-j}, \sigma_j)$, in which case $\widetilde{g}_j(\rho_j) = 0$, so $\vec{\rho}$ is also 4-approximate w.r.t.\@ $\vec{\widetilde{g}}$.

   At this point we can invoke the Local Ratio Theorem to get that
   \begin{align*}
     \sum_{j \in \mathcal{J}} g_j(\rho_j) &
     = \sum_{j \in \mathcal{J}} \widetilde{g}_j(\rho_j) +
       \sum_{j \in \mathcal{J}} \alpha \cdot \widehat{g}_j(\rho_j),  \\
     & \leq 4 \kappa \cdot \opt(\vec{\widetilde{g}}) + 4 \kappa \cdot \alpha \cdot \opt(\vec{\widehat{g}}), \\
     & = 4 \kappa \cdot \big( \opt(\vec{\widetilde{g}}) + \opt(\alpha \cdot \vec{\widehat{g}}) \big), \\
     & \leq 4 \kappa \cdot \opt(\vec{g}),
   \end{align*}
   which completes the proof of the lemma.
 \end{proof}

 Finally, we note that invoking Lemma~\ref{lem:4-correct-rd} on $\vec{\sigma} = (r_1, \ldots, r_n)$ and $\vec{g} = (f_1, \ldots, f_n)$ gives us Theorem~\ref{thm:ls-cs-rd}.

 \section{Conclusions and Open Problems}

 In this article we have proposed a primal-dual $4$-approximation algorithm for $1||\sum f_j$ based on an LP strengthen with knapsack-cover inequalities. Since the original appearance of this result in a preliminary paper~\cite{Cheung11}, an algorithm with an improved approximation ratio of $e+\epsilon$ was given~\cite{Hohn14}, although its running time is only quasi-polynomial. It is natural to ask whether an improved, polynomial-time algorithm is possible. A positive result would be interesting even in the special case of UFP on a path. Similarly, the exact integrality gap of the LP is  known to be only in the interval $[2,4]$, even for UFP on a path. The example in Section \ref{sec:pseudopoly}, which shows that the analysis of our algorithm is tight, suggests that the reason we cannot obtain a performance guarantee better than 4 stems from the primal-dual technique, rather than from the integrality gap of the LP, and hence another LP-based technique might yield a better guarantee. Other natural open questions include finding a constant-factor approximation algorithm in presence of release dates, or ruling out the existence of a PTAS.

\bibliographystyle{abbrv}
\bibliography{1sumfj}

\end{document}